\newcommand{\keywords}[1]{\noindent {\bf Keywords:}\ #1.}
\begin{document}

\title{Conceptua: \\ Institutions in a Topos}
\author{Robert E. Kent\thanks{creator of the Information Flow Framework ({\ttfamily\footnotesize http://suo.ieee.org/IFF/})}}
\institute{Ontologos \\ 550 SW Staley Dr, Pullman WA 99163, USA\\
\email{rekent@ontologos.org}}

\maketitle

\begin{abstract}
Tarski's semantic definition of truth is 
the composition of its extensional and intensional aspects.
Abstract satisfaction,
the core of the semantic definition of truth,
is the basis for the theory of institutions 
\cite{goguen:burstall:92}.
The satisfaction relation for first order languages (the truth classification),
and the preservation of truth by first order interpretations (the truth infomorphism),
form a key motivating example in the theory of Information Flow (IF) \cite{barwise:seligman:97}.
The concept lattice notion, 
which is the central structure studied by the theory of Formal Concept Analysis (FCA) \cite{ganter:wille:99},
is constructed by the polar factorization of derivation. 
The study of classification structures (IF) 
and the study of conceptual structures (FCA)
provide a principled foundation for 
the logical theory of knowledge representation and organization.
In an effort to unify these two areas,
the paper ``Distributed Conceptual Structures'' \cite{kent:02}
abstracted the basic theorem of FCA 
in order to established three levels of categorical equivalence 
between classification structures and conceptual structures.
In this paper, 
we refine this approach by resolving the equivalence 
as the category-theoretic factorization of the Galois connection of derivation.
The equivalence between classification and conceptual structures 
is mediated by the opposite motions of factorization and composition.
Abstract truth factors through the concept lattice of theories
in terms of its extensional and intensional aspects.
\end{abstract}

\keywords{factorization system, order adjunction, classification, concept lattice, institution}

\section{Introduction\label{sec:introduction}}

Human knowledge is made up of the conceptual structures of many communities of interest.
In order to establish coherence in human knowledge representation,
it is important to enable communcation between the conceptual structures of different communities
The conceptual structures of any particular community is representable in an ontology.
Such a ontology provides a formal linguistic standard for that community.
However,
a standard community ontology is established for various purposes,
and makes choices that force a given interpretation, 
while excluding others that may be equally valid for other purposes.
Hence, a given representation is relative to the purpose for that representation.
Due to this relativity of represntation,
in the larger scope of all human knowledge
it is more important to standardize methods and frameworks for relating ontologies
than to standardize any particular choice of ontology.
The standardization of methods and frameworks is called the semantic integration of ontologies.

The minimum framework in which a lattice of theories construction can be defined 
is called a \emph{conceptua}.
A conceptua is a framework for conceptual structures. 
Such a framework is built on a topos-theoretic base.
In the succeeding paper ``The Lattice of Theories Construction'',
the following four related axiomatic structures were developed.
\begin{center}
\begin{tabular}{ccc}
& concrete & abstract \\
\begin{tabular}{c} conceptual structures \\ category \end{tabular}
& {\bfseries cCS} & {\bfseries aCS} \\
&& \\
\begin{tabular}{c} lattice of theories \\ category \end{tabular}
& {\bfseries cLoT} & {\bfseries aLoT} \\ 
\end{tabular}
\end{center}
In particular,
an abstract conceptual structures category,
one satisfying {\bfseries aCS},
is a finitely complete order-enriched category.
In this paper we show that various parts of a conceptua satisfy these axiomatizations.
For example,
$\mathsf{Ord}(\mathcal{B})$,
the category of preorders in the topos $\mathcal{B}$,
is an abstract conceptual structures category.

The entailment ``lattice of theories'' construction (LOT) has been touted in the knowledge representation and ontology communities as a fundamental structuring mechanism for the representation and management of ontologies.
This paper provides a detailed category-theoretic analysis of the concept ``lattice of theories'' construction (cLOT),
a principled first step approximation to LOT.
As shown here,
classification structures and conceptual structures can alternately be defined in three isomorphic versions:
a relation version, a function version and an adjunction version.
In the past,
the relation version has been the default for the definition of classification structures (\cite{barwise:seligman:97}, \cite{ganter:wille:99}).
Also in the past,
the function version has been the default for the definition of conceptual structures (\cite{ganter:wille:99}, \cite{kent:02}).
This paper advocates and develops the adjunction version for both,
since it is more category-theoretic, it is conceptually simpler,
and it needs no extra assumptions\footnotemark[12].
The equivalence between classification and conceptual structures is mediated by two motions.
In any of the three versions,
the motion towards conceptual structures is defined by factorization,
and
the motion towards classification structures is defined by composition.
The adjunction version allows for a very simple and highly category-theoretic explication of these motions 
--- 
the motion of factorization is defined via polar factorization of adjunctions,
and the motion of composition is defined via composition of adjunctions
\footnote{There is always a philosophical and practical question of identity in category-theoretic studies in computer science.
The question is ``Are things identical when they are equal, when they are isomorphic, 
or when they are equivalent?''.
In the category-theory of conceptual structures,
this occurs at both the object and category level.
In this paper, 
we regard objects as being identical when they are isomorphic,
but not when they are only equivalent.
We regard categories as being identical when they are equivalent (or isomorphic).}.

Section~\ref{sec:factorization:systems} proves a general equivalence result (Thm.~\ref{general-equivalence}) between the arrow category and the factorization category of an arbitrary category having a factorization system with choice.
Section~\ref{sec:topos:theory} reviews the elements of topos theory.
(Thm.~\ref{special-equivalence}) to the polar factorization system on the category of preorders and adjunctions in a topos.
Section~\ref{sec:classification:structures} discusses the representation of classification structures in a topos as a derivation Galois connection.
Section~\ref{sec:conceptual:structures} discusses the representation of conceptual structures in a topos as the polar factorization of extension and intension,
developing a restricted equivalence (Thm.~\ref{restricted-equivalence}) between classification structures and conceptual structures.
Section~\ref{sec:institutions} discusses the application of this equivalence to the theory of institutions in a topos,
in particular to the abstract cLOT construction in a topos.


\section{Factorization Systems}\label{sec:factorization:systems}

\begin{sloppypar}
Let $\mathcal{C}$ be an arbitrary category.
A \emph{factorization system} in $\mathcal{C}$ is a pair 
$\langle \mathcal{E}, \mathcal{M} \rangle$ of classes of $\mathcal{C}$-morphisms satisfying the following conditions. 
{\bfseries Subcategories:}
All $\mathcal{C}$-isomorphisms are in $\mathcal{E} \,\cap\, \mathcal{M}$. 
Both $\mathcal{E}$ and $\mathcal{M}$ are closed under $\mathcal{C}$-composition.
Hence,
$\mathcal{E}$ and $\mathcal{M}$ are $\mathcal{C}$-subcategories with the same objects as $\mathcal{C}$.
{\bfseries Existence:} 
Every $\mathcal{C}$-morphism $f : A \rightarrow B$ has an $\langle \mathcal{E}, \mathcal{M} \rangle$-factorization\footnote{An $\langle \mathcal{E}, \mathcal{M} \rangle$-factorization is a quadruple $(A, e, C, m, B)$ where 
$e : A \rightarrow C$ and $m : C \rightarrow B$ 
is a composable pair of $\mathcal{C}$-morphisms 
with $e \in \mathcal{E}$ and $m \in \mathcal{M}$.};
that is,
there is an $\langle \mathcal{E}, \mathcal{M} \rangle$-factorization $(A, e, C, m, B)$ 
and $f$ is its composition\footnote{In this paper, all compositions are written in diagrammatic form.} $f = e \cdot m$.
{\bfseries Diagonalization:}
For every commutative square $e \cdot s = r \cdot m$ of $\mathcal{C}$-morphisms, with $e \in \mathcal{E}$ and $m \in \mathcal{M}$, 
there is a unique $\mathcal{C}$-morphism $d$ with $e \cdot d = r$ and $d \cdot m = s$\footnote{This diagonalization condition implies the following condition.
{\bfseries Uniqueness:}
Any two $\langle \mathcal{E}, \mathcal{M} \rangle$-factorizations of a $\mathcal{C}$-morphism are isomorphic; that is, if $(A, e, C, m, B)$ and $(A, e^\prime, C^\prime, m^\prime, B)$ are two $\langle \mathcal{E}, \mathcal{M} \rangle$-factorizations of $f : A \rightarrow B$, then there is a unique $\mathcal{C}$-isomorphism $h : C \cong C^\prime$ with $e \cdot h = e^\prime$ and $h \cdot m^\prime = m$.}.
\end{sloppypar}

Let $\mathcal{C}^{\mathsf{2}}$ denote the arrow category\footnote{Recall that $\mathsf{2}$ is the two-object category, pictured as $\bullet \rightarrow \bullet$, with one non-trivial morphism. The arrow category $\mathcal{C}^{\mathsf{2}}$ is (isomorphic to) the functor category $[\mathsf{2}, \mathcal{C}]$.} of $\mathcal{C}$.
An object of $\mathcal{C}^{\mathsf{2}}$ is a triple $(A, f, B)$,
where $f : A \rightarrow B$ is a $\mathcal{C}$-morphism.
A morphism of $\mathcal{C}^{\mathsf{2}}$,
$(a, b) : (A_1, f_1, B_1) \rightarrow (A_2, f_2, B_2)$,
is a pair of $\mathcal{C}$-morphisms $a : A_1 \rightarrow A_2$ and $b : B_1 \rightarrow B_2$ that form a commuting square $a \cdot f_2 = f_1 \cdot b$.
There are source and target projection functors
$\partial_0^{\mathcal{C}}, \partial_1^{\mathcal{C}} : \mathcal{C}^{\mathsf{2}} \rightarrow \mathcal{C}$
and an arrow natural transformation
$\alpha_{\mathcal{C}} : \partial_0^{\mathcal{C}} \Rightarrow \partial_1^{\mathcal{C}} : \mathcal{C}^{\mathsf{2}} \rightarrow \mathcal{C}$
with component
$\alpha_{\mathcal{C}}(A, f, B) = f : A \rightarrow B$
(background of Fig.~\ref{factorization-equivalence}).
Let $\mathcal{E}^{\mathsf{2}}$ denote the full subcategory of $\mathcal{C}^{\mathsf{2}}$
whose objects are the morphisms in $\mathcal{E}$.
Make the same definitions for $\mathcal{M}^{\mathsf{2}}$.
Just as for $\mathcal{C}^{\mathsf{2}}$,
the category $\mathcal{E}^{\mathsf{2}}$ has source and target projection functors
$\partial_0^\mathcal{E}, \partial_1^\mathcal{E} : \mathcal{E}^{\mathsf{2}} \rightarrow \mathcal{C}$
and an arrow natural transformation
$\alpha_{\mathcal{E}} : \partial_0^\mathcal{E} \Rightarrow \partial_1^\mathcal{E} : \mathcal{E}^{\mathsf{2}} \rightarrow \mathcal{C}$
(foreground of Fig.~\ref{factorization-equivalence}).
The same is true for $\mathcal{M}^{\mathsf{2}}$.
Let $\mathcal{E} \odot \mathcal{M}$ denote the category
of $\langle \mathcal{E}, \mathcal{M} \rangle$-factorizations
(top foreground of Fig.~\ref{factorization-equivalence}),
whose objects are $\langle \mathcal{E}, \mathcal{M} \rangle$-factorizations $(A, e, C, m, B)$,
and whose morphisms $(a, c, b) : (A_1, e_1, C_1, m_1, B_1) \rightarrow (A_2, e_2, C_2, m_2, B_2)$
are $\mathcal{C}$-morphism triples
where $(a, c) : (A_1, e_1, C_1) \rightarrow (A_2, e_2, C_2)$ 
is an $\mathcal{E}^{\mathsf{2}}$-morphism
and $(c, b) : (C_1, m_1, B_1) \rightarrow (C_2, m_2, B_2)$ 
is an $\mathcal{M}^{\mathsf{2}}$-morphism.
$\mathcal{E} \odot \mathcal{M}
= \mathcal{E}^{\mathsf{2}} \times_{\mathcal{C}} \mathcal{M}^{\mathsf{2}}$
is the pullback (in the category of categories)
of the $1^{\mathrm{st}}$-projection of $\mathcal{E}^{\mathsf{2}}$
and the $0^{\mathrm{th}}$-projection of $\mathcal{M}^{\mathsf{2}}$.
There is a composition functor
$\circ_{\mathcal{C}} : \mathcal{E} \odot \mathcal{M} \rightarrow \mathcal{C}^{\mathsf{2}}$
that commutes with projections:
on objects $\circ_{\mathcal{C}}(A, e, C, m, B) = (A, e \circ_{\mathcal{C}} m, B)$,
and on morphisms $\circ_{\mathcal{C}}(a, c, b) = (a, b)$.

An $\langle \mathcal{E}, \mathcal{M} \rangle$-factorization system with choice has a specified factorization for each $\mathcal{C}$-morphism;
that is, there is a choice function from the class of $\mathcal{C}$-morphisms to the class of $\langle \mathcal{E}, \mathcal{M} \rangle$-factorizations mapping each $\mathcal{C}$-morphism to one of its factorizations.
With this choice,
diagonalization is uniquely determined.
When choice is specified,
there is a factorization functor
$\div_{\mathcal{C}} : \mathcal{C}^{\mathsf{2}} \rightarrow \mathcal{E} \odot \mathcal{M}$,
which is defined on objects as the chosen $\langle \mathcal{E}, \mathcal{M} \rangle$-factorization $\div_{\mathcal{C}}(A, f, B) = (A, e, C, m, B)$
and on morphisms as
$\div_{\mathcal{C}}(a, b) = (a, c, b)$
where $c$ is defined by diagonalization
($\div_{\mathcal{C}}$ is functorial by uniqueness of diagonalization).
Clearly,
factorization followed by composition is the identity
$\div_{\mathcal{C}} \circ\, \circ_{\mathcal{C}} 
= \mathsf{id}_{\mathcal{C}}$.
By uniqueness of factorization (up to isomorphism)
composition followed by factorization is an isomorphism
$\circ_{\mathcal{C}} \,\circ \div_{\mathcal{C}} 
\cong \mathsf{id}_{\mathcal{E} \odot \mathcal{M}}$.

\begin{theorem} [General Equivalence] \label{general-equivalence}
When a category $\mathcal{C}$ has an $\langle \mathcal{E}, \mathcal{M} \rangle$-factorization system with choice,
the $\mathcal{C}$-arrow category
is equivalent (Fig.~\ref{factorization-equivalence}) to
the $\langle \mathcal{E}, \mathcal{M} \rangle$-factorization category
\[\mathcal{C}^{\mathsf{2}} \equiv \mathcal{E} \odot \mathcal{M}.\]
\end{theorem}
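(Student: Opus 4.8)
The plan is to recognize that the statement is exactly the assertion that the two functors $\div_{\mathcal{C}}$ and $\circ_{\mathcal{C}}$ already constructed form a quasi-inverse pair, so that the two composite relations established immediately before the theorem supply almost all of the data required for an equivalence. An equivalence is witnessed by a pair of functors together with natural isomorphisms between each composite and the appropriate identity functor. One of these is already available on the nose: factorization followed by composition returns each arrow object and each arrow morphism unchanged, so $\div_{\mathcal{C}} \cdot \circ_{\mathcal{C}} = \mathsf{id}_{\mathcal{C}^{\mathsf{2}}}$, which is trivially a natural isomorphism. The entire substance of the proof is therefore to promote the object-level relation $\circ_{\mathcal{C}} \cdot \div_{\mathcal{C}} \cong \mathsf{id}_{\mathcal{E} \odot \mathcal{M}}$ to a genuine natural isomorphism $\eta : \mathsf{id}_{\mathcal{E} \odot \mathcal{M}} \Rightarrow \circ_{\mathcal{C}} \cdot \div_{\mathcal{C}}$.

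First I would pin down the component of $\eta$. Given a factorization object $(A, e, C, m, B)$ with composite $f = e \cdot m$, applying $\circ_{\mathcal{C}}$ yields $(A, f, B)$ and then applying $\div_{\mathcal{C}}$ returns the chosen factorization $(A, e^\prime, C^\prime, m^\prime, B)$ of $f$. Since both $(A, e, C, m, B)$ and $(A, e^\prime, C^\prime, m^\prime, B)$ factor the same morphism $f$, the uniqueness clause of the diagonalization axiom delivers a unique $\mathcal{C}$-isomorphism $h : C \cong C^\prime$ with $e \cdot h = e^\prime$ and $h \cdot m^\prime = m$. I would take the component of $\eta$ to be the triple $(\mathsf{id}_A, h, \mathsf{id}_B)$ and then note that these two defining equations are exactly the $\mathcal{E}^{\mathsf{2}}$- and $\mathcal{M}^{\mathsf{2}}$-conditions that make the triple a morphism of $\mathcal{E} \odot \mathcal{M}$; it is invertible because $h$ is an isomorphism and the outer legs are identities.

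The main obstacle, and the only step that requires real care, is naturality. For a factorization morphism $(a, c, b) : (A_1, e_1, C_1, m_1, B_1) \rightarrow (A_2, e_2, C_2, m_2, B_2)$ the two legs of the naturality square are triples with the same outer components $a$ and $b$, differing only in their middle entries, one of the form $h_1 \cdot c^\ast$ and the other $c \cdot h_2$, where $c^\ast$ is the diagonal produced by $\div_{\mathcal{C}}$ on the recomposed morphism $(a, b)$. I would verify, using the $\mathcal{E}^{\mathsf{2}}$- and $\mathcal{M}^{\mathsf{2}}$-equations for both $(a, c, b)$ and $(a, c^\ast, b)$ together with the defining equations of $h_1$ and $h_2$, that each of these middle entries is a diagonal filler for one and the same commuting square, namely the square determined by $e_1 \in \mathcal{E}$, $m_2^\prime \in \mathcal{M}$, and the commutativity $f_1 \cdot b = a \cdot f_2$ coming from $(a, b)$ being a $\mathcal{C}^{\mathsf{2}}$-morphism. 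By the uniqueness of diagonalization the two fillers coincide, so the naturality square commutes. This reduces naturality to the single guaranteed fact that diagonal fillers are unique. With $\eta$ thus established as a natural isomorphism and $\div_{\mathcal{C}} \cdot \circ_{\mathcal{C}} = \mathsf{id}_{\mathcal{C}^{\mathsf{2}}}$ already in hand, the pair $(\div_{\mathcal{C}}, \circ_{\mathcal{C}})$ is an equivalence, establishing $\mathcal{C}^{\mathsf{2}} \equiv \mathcal{E} \odot \mathcal{M}$.
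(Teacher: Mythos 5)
Your proposal is correct and takes essentially the same route as the paper: the paper's proof is exactly the discussion preceding the theorem, namely that $\div_{\mathcal{C}}$ followed by $\circ_{\mathcal{C}}$ is the identity on $\mathcal{C}^{\mathsf{2}}$ on the nose, while $\circ_{\mathcal{C}}$ followed by $\div_{\mathcal{C}}$ is isomorphic to $\mathsf{id}_{\mathcal{E} \odot \mathcal{M}}$ by uniqueness of factorizations. In fact you supply more detail than the paper does, since the paper merely asserts that second isomorphism, whereas you construct its components $(\mathsf{id}_A, h, \mathsf{id}_B)$ explicitly and verify naturality by exhibiting both legs of the naturality square as diagonal fillers of one commuting $(\mathcal{E},\mathcal{M})$-square, which is precisely the argument the paper leaves implicit.
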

This equivalence is mediated by factorization and composition.

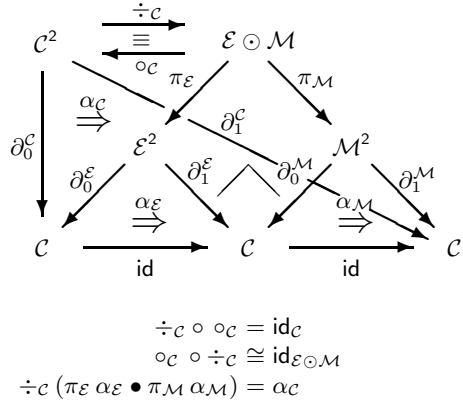
\begin{figure}
\begin{center}
\setlength{\unitlength}{0.78pt}
\begin{tabular}{c}
\begin{picture}(200,110)(-83,0)
\put(0,0){\begin{picture}(100,100)(0,0)
\put(5,75){\makebox(100,50){$\mathcal{E} \odot \mathcal{M}$}}
\put(-50,25){\makebox(100,50){$\mathcal{E}^{\mathsf{2}}$}}
\put(50,25){\makebox(100,50){$\mathcal{M}^{\mathsf{2}}$}}
\put(-100,-25){\makebox(100,50){$\mathcal{C}$}}
\put(0,-25){\makebox(100,50){$\mathcal{C}$}}
\put(100,-25){\makebox(100,50){$\mathcal{C}$}}
\put(-32,56){\makebox(100,50){\footnotesize{$\pi_\mathcal{E}$}}}
\put(33,56){\makebox(100,50){\footnotesize{$\pi_\mathcal{M}$}}}
\put(-80,8){\makebox(100,50){\footnotesize{$\partial_0^\mathcal{E}$}}}
\put(-23,12){\makebox(100,50){\footnotesize{$\partial_1^\mathcal{E}$}}}
\put(23,12){\makebox(100,50){\footnotesize{$\partial_0^\mathcal{M}$}}}
\put(82,8){\makebox(100,50){\footnotesize{$\partial_1^\mathcal{M}$}}}
\put(-50,-35){\makebox(100,50){\footnotesize{$\mathsf{id}$}}}
\put(50,-35){\makebox(100,50){\footnotesize{$\mathsf{id}$}}}
\put(-48,-4){\makebox(100,50){\footnotesize{$\alpha_{\mathcal{E}}$}}}
\put(-48,-14){\makebox(100,50){\Large{$\Rightarrow$}}}
\put(52,-4){\makebox(100,50){\footnotesize{$\alpha_{\mathcal{M}}$}}}
\put(52,-14){\makebox(100,50){\Large{$\Rightarrow$}}}
\put(50,25){\begin{picture}(30,15)(0,-15)
\put(0,0){\line(-1,-1){15}}
\put(0,0){\line(1,-1){15}}
\end{picture}}
\thicklines
\put(40,90){\vector(-1,-1){30}}
\put(60,90){\vector(1,-1){30}}
\put(-10,40){\vector(-1,-1){30}}
\put(10,40){\vector(1,-1){30}}
\put(-30,0){\vector(1,0){60}}
\put(90,40){\vector(-1,-1){30}}
\put(110,40){\vector(1,-1){30}}
\put(70,0){\vector(1,0){60}}
\end{picture}}
\thicklines
\put(-21,108){\vector(1,0){40}}
\put(-49,90){\makebox(100,50){\footnotesize{$\div_{\mathcal{C}}$}}}
\put(-52,75.5){\makebox(100,50){\footnotesize{$\equiv$}}}
\put(-49,62){\makebox(100,50){\footnotesize{$\circ_{\mathcal{C}}$}}}
\put(19,94){\vector(-1,0){40}}
\put(-35,90){\line(2,-1){44}}
\put(21,62){\line(2,-1){43}}
\put(79,33){\line(2,-1){15}}
\put(109,18){\vector(2,-1){28}}
\put(-50,85){\vector(0,-1){70}}
\put(-98,75){\makebox(100,50){$\mathcal{C}^{\mathsf{2}}$}}
\put(-109,24){\makebox(100,50){\footnotesize{$\partial_0^{\mathcal{C}}$}}}
\put(-7,36){\makebox(100,50){\footnotesize{$\partial_1^{\mathcal{C}}$}}}
\put(-75,45){\makebox(100,50){\footnotesize{$\alpha_{\mathcal{C}}$}}}
\put(-75,35){\makebox(100,50){\Large{$\Rightarrow$}}}
\end{picture}
\\ \\ \\
$\begin{array}{rcl}
\div_{\mathcal{C}} \circ\, \circ_{\mathcal{C}} & = & \mathsf{id}_{\mathcal{C}} 
\\
\circ_{\mathcal{C}} \,\circ \div_{\mathcal{C}} & \cong & \mathsf{id}_{\mathcal{E} \odot \mathcal{M}} 
\\
\div_{\mathcal{C}} 
\left(
\pi_\mathcal{E} \, \alpha_{\mathcal{E}}
\bullet
\pi_\mathcal{M} \, \alpha_{\mathcal{M}}
\right)
& = &
\alpha_{\mathcal{C}}
\end{array}$
\end{tabular}
\end{center}
\caption{Factorization Equivalence}
\label{factorization-equivalence}
\end{figure}



\section{Topos Theory\label{sec:topos:theory}}

\subsection{Topos Fundamentals\label{subsec:topos:fundamentals}}

Let $\mathcal{C}$ be any category.
Monics into an $\mathcal{C}$-object $A$ are ordered.
Two monic $m_1 : B_1 \rightarrow A$ and $m_2 : B_1 \rightarrow A$ are ordered as $m_1 \subseteq m_2$ when there is a $\mathcal{C}$-morphism $b : B_1 \rightarrow B_2$ such that $m_1 = b \cdot m_2$.
Then $b$ is unique and also monic.
The associated equivalence relation induces equivalence classes on monics called subobjects of $A$.
The order on monics lifts to an order on subobjects.
Subobject order is transitive, reflexive and antisymmetric.
The poset of subobejcts on $A$ is denoted $\mathsf{sub}(A)$.
Assume that $\mathcal{C}$ is finitely complete.
The pullback of a monic along a $\mathcal{C}$-morphism gives another monic.
This defines a contravariant functor
\[ \mathsf{sub} : \mathcal{C}^{\mathrm{op}} \rightarrow \mathsf{Set}. \]

A \emph{topos} $\mathcal{B}$ is a finitely-complete, cartesian closed category with a representable subobject functor.

\paragraph{Finite Limits.}
There is a terminal object $1$ in $\mathcal{B}$.
For each pair of $\mathcal{B}$-objects $(A,B)$, 
there is a specified binary product $A{\times}B$
together with its projections 
$\pi_A : A{\times}B \rightarrow A$ and $\pi_B : A{\times}B \rightarrow B$.
In particular,
for each pair of $\mathcal{B}$-objects $(A,B)$, 
there is a twist or symmetrizing 
$\mathcal{B}$-isomorphism $\tau_{A,B} : B{\times}A \rightarrow A{\times}B$, 
which is the unique mediating morphism for the product cone 
$A \stackrel{\pi_A}{\leftarrow} B{\times}A \stackrel{\pi_B}{\rightarrow} B$.

\paragraph{Cartesian Closed.}
For each $\mathcal{B}$-object $A$,
and hence functor 
$A{\times}(\mbox{-}) : \mathcal{B} \rightarrow \mathcal{B} : B \mapsto A{\times}B$, 
there is an \emph{exponent} functor
${(\mbox{-})}^A : \mathcal{B} \rightarrow \mathcal{B} : C \mapsto C^A$, 
with natural bijection\footnote{We use the notation
$\mathcal{B}(A,B) = B^A$ for the exponent or (internal) hom-object, an object of  $\mathcal{B}$. We use the notation $\mathsf{hom}(A,B)$ for the (external) hom-set, an object of $\mathsf{Set}$. There is an underlying functor 
$|\mbox{-}| : \mathcal{B} \rightarrow \mathsf{Set}$, 
defined by $|\mbox{-}| \doteq \mathcal{B}(1,\mbox{-}) = {(\mbox{-})}^1$.
Then,
$\mathsf{hom}(A,B) \cong |B^A|$.}
\[ \mathsf{hom}(A{\times}B,C) \cong \mathsf{hom}(B,C^A) \]
mediated by the \emph{constant augmentation} $\mathcal{B}$-morphism (unit)
$\gamma_{A,B} : B \rightarrow {(A{\times}B)}^A$ 
and the \emph{evaluation} $\mathcal{B}$-morphism (counit)
$\varepsilon_{A,C} : A{\times}{C^A} \rightarrow C$.
A $\mathcal{B}$-morphism $f : A{\times}B \rightarrow C$ 
is in bijective correspondence with 
the $\mathcal{B}$-morphism $g : B \rightarrow C^A$,
where
$f = (A{\times}g) \cdot \varepsilon_{A,C}$ and 
$g = \gamma_{A,B} \cdot f^A: B \rightarrow C^A$.

\paragraph{Subobject Classifier.}
The subobject functor being representable means
that there is a $\mathcal{B}$-object of truth values $\Omega$ that serves as a dualizing object.
This comes equipped with a subobject $\top : 1 \hookrightarrow \Omega$ called the \emph{truth} element
such that
for any $\mathcal{B}$-object $A$ and subobject $A_0 \hookrightarrow A$,
there is a unique $\mathcal{B}$-morphism $\chi_{A_0} : A \rightarrow \Omega$
such that
\begin{center}
\begin{tabular}{c}
\setlength{\unitlength}{0.75pt}
\begin{picture}(80,60)(0,0)
\put(-30,-15){\makebox(60,30){$1$}}
\put(-30,45){\makebox(60,30){$A_0$}}
\put(50,-15){\makebox(60,30){$\Omega$}}
\put(50,45){\makebox(60,30){$A$}}
\put(25,-23){\makebox(30,30){\footnotesize{$\top$}}}
\put(-25,15){\makebox(30,30){\footnotesize{$!_{A_0}$}}}
\put(82,15){\makebox(30,30){\footnotesize{$\chi_{A_0}$}}}
\put(20,0){\vector(1,0){40}}
\put(20,60){\vector(1,0){40}}
\put(0,45){\vector(0,-1){30}}
\put(80,45){\vector(0,-1){30}}
\put(60,20){\line(1,0){10}}
\put(60,20){\line(0,-1){10}}
\end{picture}
\end{tabular}
\end{center}
is a pullback.
The fact that the subobject functor is represented by $\Omega$ is equivalent to the fact that there is a natural isomorphism
\[ \mathsf{sub}(\mbox{-}) \rightarrow \mathsf{hom}(\mbox{-},\Omega), \]
which takes a subobject to its character\footnote{Define the power operator ${\wp}A \doteq \mathcal{B}(A,\Omega)$. Then, $|{\wp}A| \cong \mathsf{sub}(A)$.}.

\paragraph{Boolean Connectives.}
Let \emph{false} $\bot : 1 \rightarrow \Omega$ be the character of the subobject $!_1 : 0 \hookrightarrow 1$.
In turn,
define \emph{negation} $\neg : \Omega \rightarrow \Omega$ be the character of the subobject $\bot : 1 \hookrightarrow \Omega$.
Define \emph{conjunction} $\wedge : \Omega{\times}\Omega \rightarrow \Omega$ to be the charcter of truth paired with itself $(t,t) : 1 \rightarrow \Omega{\times}\Omega$.
Let $e : E \rightarrow \Omega{\times}\Omega$ denote the equalizer of the parallel pair
${\wedge},\pi_0 : \Omega{\times}\Omega \rightarrow \Omega$ consisting of conjunction and first projection.
Define \emph{implication} $\Rightarrow : \Omega{\times}\Omega \rightarrow \Omega$ to be the character of $e$. 
Prove: $\langle \Omega, \Rightarrow \rangle$ is a preorder.
Define the \emph{bottom} $\mathcal{B}$-morphism
$\bot_A =\; !_A \cdot \bot : A \rightarrow 1 \rightarrow \Omega$
and the \emph{top} $\mathcal{B}$-morphism
$\top_A =\; !_A \cdot \top : A \rightarrow 1 \rightarrow \Omega$,
where top $\top_A$ is the character of the monic $1_A : A \rightarrow A$.

\paragraph{Subobjects.}
Hence,
in a topos $\mathcal{B}$, 
a subobject $m$ of a $\mathcal{B}$-object $A$
can be represented: 
either (1) as an equivalence class of monics $\iota_m : \Box_m \rightarrow A$ 
or (2) as a $\mathcal{B}$-morphism $\chi_m : A \rightarrow \Omega$.
Hence,
there are two ways to order subobjects of $A$:
$m_1 \subseteq m_2$
when
either (1) there is a (necessarily monic) $\mathcal{B}$-morphism 
$m : \Box_{m_1} \rightarrow \Box_{m_2}$
such that $\iota_{m_1} = m \cdot \iota_{m_2}$
or (2) $\iota_{m_1} \cdot \chi_{m_2} = \top_{B_1} = {!}_{B_1} \cdot \top$. 

\paragraph{Finite Colimits.}
A topos $\mathcal{B}$ also has finite colimits.
Thus, there is an initial object $0$ in $\mathcal{B}$.
For each pair of $\mathcal{B}$-objects $(A,B)$, 
there is a specified binary coproduct $A{+}B$
together with its injections $\iota_A : A \rightarrow A{+}B$ and $\iota_B : B \rightarrow A{+}B$.
For each pair of $\mathcal{B}$-objects $(A,B)$, 
there is a twist or symmetrizing isomorphism $\tau_{B,A} : B{+}A \rightarrow A{+}B$.

\paragraph{Power Objects.}
In any topos $\mathcal{B}$, 
the power object ${\wp}A = \Omega^A = \mathcal{B}(A,\Omega)$ has the property that
\[ \mathsf{hom}(A,{\wp}B) 
\stackrel{01}{\cong} \mathsf{rel}(A,B) 
\stackrel{10}{\cong} \mathsf{hom}(B,{\wp}A). \]
where the set of $\mathcal{B}$-relations $\mathsf{rel}(A,B)$ is in bijective correspondence (character) with the set of $\mathcal{B}$-morphisms $\mathsf{hom}(A{\times}B,\Omega)$.
These exponential adjoints are natural isomorphisms in $A$ and $B$.
For any $\mathcal{B}$-object $A$,
the \emph{delta} morphism $\Delta_A : A \hookrightarrow A{\times}A$ 
is the product pairing of the identity $\mathcal{B}$-morphism on $A$ with itself.
For any $\mathcal{B}$-object $A$,
the \emph{singleton} $\mathcal{B}$-morphism
${\{\mbox{-}\}}_A : A \rightarrow {\wp}A$
corresponds to the identity relation $1_A : A \rightharpoondown A$,
whose subobject is delta
and whose character is
the \emph{diagonal} $\mathcal{B}$-morphism
$\delta_A : A{\times}A \rightarrow \Omega$
which is the exponential adjoint of singleton.
For any $\mathcal{B}$-object $A$,
the \emph{membership} $\mathcal{B}$-relation
${\in}_A : A \rightharpoondown {\wp}A$
has as its character the evaluation morphism
$\varepsilon_{A,\Omega} : A{\times}{\Omega^A} \rightarrow \Omega$;
so that the natural bijection
$\mathsf{rel}(A,B) \cong \mathsf{hom}(B,{\wp}A)$
bijectively maps the character $\chi_r : A{\times}B \rightarrow \Omega$ of a $\mathcal{B}$-relation $r : A \rightharpoondown B$ 
to the $\mathcal{B}$-morphism $g : B \rightarrow {\wp}A$,
where
$\chi_r = (A{\times}g) \cdot \chi_{{\in}_A}$ and 
$g = \gamma_{A,B} \cdot \chi_r^A: B \rightarrow {\wp}A$.

\begin{figure}
\begin{center}
\begin{tabular}[b]{c}
\setlength{\unitlength}{1.0pt}
\begin{picture}(80,60)(0,0)
\put(-30,-15){\makebox(60,30){${\wp}A$}}
\put(-30,45){\makebox(60,30){$\mathcal{B}{/}A$}}
\put(50,-15){\makebox(60,30){${\wp}B$}}
\put(50,45){\makebox(60,30){$\mathcal{B}{/}B$}}
\put(25,-1){\makebox(30,30){\footnotesize{$\exists_f$}}}
\put(25,-11){\makebox(30,30){\footnotesize{$f^{{-}1}$}}}
\put(25,-21){\makebox(30,30){\footnotesize{$\forall_f$}}}
\put(25,59){\makebox(30,30){\footnotesize{$\Sigma_f$}}}
\put(25,49){\makebox(30,30){\footnotesize{$f^\ast$}}}
\put(25,39){\makebox(30,30){\footnotesize{$\Pi_f$}}}
\put(-25,15){\makebox(30,30){\footnotesize{$\sigma_A$}}}
\put(-3,15){\makebox(30,30){\footnotesize{$\iota_A$}}}
\put(55,15){\makebox(30,30){\footnotesize{$\sigma_B$}}}
\put(77,15){\makebox(30,30){\footnotesize{$\iota_B$}}}
\put(20,10){\vector(1,0){40}}
\put(60,0){\vector(-1,0){40}}
\put(20,-10){\vector(1,0){40}}
\put(20,70){\vector(1,0){40}}
\put(60,60){\vector(-1,0){40}}
\put(20,50){\vector(1,0){40}}
\put(-5,45){\vector(0,-1){30}}
\put(5,15){\vector(0,1){30}}
\put(75,45){\vector(0,-1){30}}
\put(85,15){\vector(0,1){30}}
\end{picture}
\\ \\ \\
\begin{tabular}{p{10cm}}
{\footnotesize Let $f : A \rightarrow B$ be any $\mathcal{B}$-morphism.
In the diagram above
we have the adjunctions
$\exists_f \dashv f^{{-}1} \dashv \forall_f$,
$\Sigma_f \dashv f^\ast \dashv \Pi_f$,
$\sigma_A \dashv \iota_A$,
$\sigma_B \dashv \iota_B$,
and
the commutative squares
$\iota_B \cdot f^\ast = f^{{-}1} \cdot \iota_A$
and
$\iota_A \cdot \Pi_f = \forall_f \cdot \iota_B$
hold be definition,
and
the commutative squares (up to natural isomorphism)
$\Sigma_f \cdot \sigma_B \cong \sigma_A \cdot \exists_f$
and
$f^\ast \cdot \sigma_A \cong \sigma_B \cdot f^{{-}1}$
hold by uniqueness of adjoints.
We have
$\iota_A \cdot \Sigma_f \cong \exists_f \cdot \iota_B$
iff $f$ is a monomorphism,
and we have
$\Pi_f \cdot \sigma_B \cong \sigma_A \cdot \forall_f$
for all $f$ iff $\mathcal{B}$ satisfies the implicit axiom of choice (IC).}
\end{tabular}
\end{tabular}
\end{center}
\caption{Hyperdoctrinal Diagram}
\label{hyperdoctrinal-diagram}
\end{figure}

\paragraph{Epi-Mono Factorization.}
Any topos $\mathcal{B}$ has a chosen epi-mono factorization system,
where a $\mathcal{B}$-morphism factors in terms of the equalizer of its cokernel pair.
For any $\mathcal{B}$-morphism $f : A \rightarrow B$,
the \emph{cokernel pair}
$B \stackrel{\iota_0}{\rightarrow} \mathsf{cok}(f) \stackrel{\iota_1}{\leftarrow} B$
is the pushout of $f$ with itself.
\begin{center}
\begin{tabular}{c}
\\
\setlength{\unitlength}{0.7pt}
\begin{picture}(60,60)(0,0)
\put(-30,45){\makebox(60,30){$A$}}
\put(-30,-15){\makebox(60,30){$B$}}
\put(30,45){\makebox(60,30){$B$}}
\put(30,-15){\makebox(60,30){$\mathsf{cok}(f)$}}
\put(-25,15){\makebox(30,30){\footnotesize{$f$}}}
\put(15,55){\makebox(30,30){\footnotesize{$f$}}}
\put(55,15){\makebox(30,30){\footnotesize{$\iota_0$}}}
\put(15,-25){\makebox(30,30){\footnotesize{$\iota_1$}}}
\put(15,0){\vector(1,0){20}}
\put(15,60){\vector(1,0){30}}
\put(0,45){\vector(0,-1){30}}
\put(60,45){\vector(0,-1){30}}
\end{picture}
\\ \\
\end{tabular}
\end{center}
Let $\mu_f : f(A) \rightarrow \mathsf{cok}(f)$ be the equalizer of the parallel pair
$\iota_0,\iota_1 : \mathsf{cok}(f) \rightarrow B$.
Since $f \cdot \iota_0 = f \cdot \iota_1$,
there is a unique $\mathcal{B}$-morphism $\varepsilon_f : A \rightarrow f(A)$
such that $f = \varepsilon_f \cdot \mu_f$.
Then $(A, \varepsilon_f, f(A), \mu_f, B)$ is an epi-mono factorization of $f$.

\paragraph{Internal Category Theory.}
A \emph{category} $\mathbf{C}$ in (internal to) a topos $\mathcal{B}$ is a sextuple 
$\mathbf{C} 
= \langle \mathsf{obj}(\mathbf{C}), \mathsf{mor}(\mathbf{C}),
\iota_{\mathbf{C}}, {\circ}_{\mathbf{C}}, 
\partial^{0}_{\mathbf{C}}, \partial^{1}_{\mathbf{C}} \rangle$
consisting of 
a $\mathcal{B}$-object of \emph{objects} $\mathsf{obj}(\mathbf{C})$,
a $\mathcal{B}$-object of \emph{morphisms} $\mathsf{mor}(\mathbf{C})$,
an \emph{identity} $\mathcal{B}$-morphism 
$\iota_{\mathbf{C}} : \mathsf{obj}(\mathbf{C}) \rightarrow \mathsf{mor}(\mathbf{C})$,
a \emph{composition} $\mathcal{B}$-morphism 
${\circ}_{\mathbf{C}} : \mathsf{mor}(\mathbf{C}){\times}_{\mathsf{obj}(\mathbf{C})}\mathsf{mor}(\mathbf{C}) \rightarrow \mathsf{mor}(\mathbf{C})$,
and \emph{source} and \emph{target} $\mathcal{B}$-morphisms 
$\partial^{0}_{\mathbf{C}},\partial^{1}_{\mathbf{C}}
: \mathsf{mor}(\mathbf{C}) \rightarrow \mathsf{obj}(\mathbf{C})$.
This data is subject to 
the \emph{associativity} and \emph{unit} laws:
\begin{center}
$\begin{array}{r@{\hspace{10pt}=\hspace{10pt}}l}
(\pi_0^3,{\circ}_{\mathbf{C}}) \cdot {\circ}_{\mathbf{C}}
& ({\circ}_{\mathbf{C}},\pi_2^3) \cdot {\circ}_{\mathbf{C}} \\
(\partial^{0}_{\mathbf{C}} \cdot \iota_{\mathbf{C}}, 1_{\mathsf{mor}(\mathbf{C})}) \cdot {\circ}_{\mathbf{C}}
& 1_{\mathsf{mor}(\mathbf{C})} \\
(1_{\mathsf{mor}(\mathbf{C})}, \partial^{1}_{\mathbf{C}} \cdot \iota_{\mathbf{C}}) \cdot {\circ}_{\mathbf{C}}
& 1_{\mathsf{mor}(\mathbf{C})}.
\end{array}$
\end{center}
A \emph{functor} 
$\mathbf{F} : \mathbf{A} \rightarrow \mathbf{B}$
from $\mathcal{B}$-category $\mathbf{A}$ to $\mathcal{B}$-category $\mathbf{B}$
in (internal to) a topos $\mathcal{B}$
is a pair
$\mathbf{F} 
= \langle 
\mathsf{obj}(\mathbf{F}), 
\mathsf{mor}(\mathbf{F})
\rangle$,
consisting of 
an \emph{object} $\mathcal{B}$-morphism 
$\mathsf{obj}(\mathbf{F})
: \mathsf{obj}(\mathbf{A}) \rightarrow \mathsf{obj}(\mathbf{B})$
and
a \emph{morphism} $\mathcal{B}$-morphism 
$\mathsf{mor}(\mathbf{F})
: \mathsf{mor}(\mathbf{A}) \rightarrow \mathsf{mor}(\mathbf{B})$,
which preserves source, target, composition and identity:
\begin{center}
$\begin{array}{r@{\hspace{10pt}=\hspace{10pt}}l}
\mathsf{mor}(\mathbf{F}) \cdot \partial^{0}_{\mathbf{B}}
& \partial^{0}_{\mathbf{A}} \cdot \mathsf{obj}(\mathbf{F}) \\
\mathsf{mor}(\mathbf{F}) \cdot \partial^{1}_{\mathbf{B}}
& \partial^{1}_{\mathbf{A}} \cdot \mathsf{obj}(\mathbf{F}) \\
\mathsf{mor}(\mathbf{F}){\times}_{\mathsf{obj}(\mathbf{F})}\mathsf{mor}(\mathbf{F})
 \cdot {\circ}_{\mathbf{B}}
& {\circ}_{\mathbf{A}} \cdot \mathsf{mor}(\mathbf{F}) \\
\mathsf{obj}(\mathbf{F}) \cdot \iota_{\mathbf{B}}
& \iota_{\mathbf{A}} \cdot \mathsf{mor}(\mathbf{F}).
\end{array}$
\end{center}
Let $\mathsf{Cat}(\mathcal{B})$ denote the category of $\mathcal{B}$-categories and $\mathcal{B}$-functors,
where the composition and identities of $\mathcal{B}$-functors is defined componentwise.
The underlying functor
${|\mbox{-}|}_{\mathcal{B}}
: \mathsf{Cat}(\mathcal{B}) \rightarrow \mathcal{B}$
maps $\mathcal{B}$-categories to their object $\mathcal{B}$-object
and maps $\mathcal{B}$-functors to their object $\mathcal{B}$-morphism.

\subsection{Relational Structures\label{subsec:relational:structures}}

\paragraph{Relations.}
A \emph{(binary) relation} $r : A \rightharpoondown B$
in (internal to) $\mathcal{B}$ 
from $A$ to $B$ can be regarded as either
(1) a character
$\chi_r : A{\times}B \rightarrow \Omega$,
(2) a subobject of the product
$\iota_r : \Box_r \hookrightarrow A{\times}B$
(the character of the latter is the former),
or 
(3) a projection pair $\pi^r_0 : \Box_r \rightarrow A$ and $\pi^r_1 : \Box_r \rightarrow B$, 
whose pairing $\iota_r = (\pi^r_0,\pi^r_1)$ is monic (and the subobject).
We can use the abbreviation $r$ for the character $\chi_r$.
For any  relation $r : A \rightharpoondown B$ and 
any pair of morphisms $a : A^{\prime} \rightarrow A$ and $b : B^{\prime} \rightarrow B$,
we can use the abbreviation $r(a,b)$ for the relation whose character is 
$(a{\times}b) \cdot \chi_r
: A^{\prime}{\times}B^{\prime} \rightarrow A{\times}B \rightarrow \Omega$.
Using exponential adjoints on its character,
a $\mathcal{B}$-relation $r : A \rightharpoondown B$ can equivalently be regarded as a $\mathcal{B}$-morphism in two ways:
either (1) the \emph{01-fiber} $r^{01} : A \rightarrow {\wp}B$
or (2) the \emph{10-fiber} $r^{10} : B \rightarrow {\wp}A$.
In the other direction using membership,
the relation can be expressed in terms of the fibers as
either $r = {\in}^{\propto}_A(r^{01},1_B)$
or $r = {\in}_A(1_A,r^{10})$.
These facts are equivalent to the statements that
the fibers form infomorphisms
$(r^{01},1_B) : ({\wp}B,B,{\in}^{\propto}_B) \rightleftharpoons (A,B,r)$
and $(1_A,r^{10}) : (A,B,r) \rightleftharpoons (A,{\wp}A,{\in}_A)$.
The \emph{transpose} $r^{\propto} : B \rightharpoondown A$ 
is the relation whose character is the composition $\tau_{A,B} \cdot r : B{\times}A \rightarrow A{\times}B \rightarrow \Omega$ of twist with character.
Then, $(r^{\propto})^{01} = r^{10}$ and $(r^{\propto})^{10} = r^{01}$.
Two relations $r,s : A \rightharpoondown B$ are ordered $r \leq s$
when their subobjects are ordered $\mathsf{sub}(r) \subseteq \mathsf{sub}(s)$;
that is, $\iota_r \cdot \chi_s = \top_{\Box_r}$.

The \emph{identity relation}
$1_A : A \rightharpoondown A$
has character $\delta_A : A{\times}A \rightarrow \Omega$
and subobject $\Delta_A = (1_A,1_A) : A \hookrightarrow A{\times}A$.
We can use the abbreviation $A$ for $1_A$.
A pair of relations $(r,s)$ 
is composable
when the target of the first is the source of the second:
$r : A \rightharpoondown B$ and $s : B \rightharpoondown C$.
For any composable pair of relations $(r,s)$,
there is a \emph{composition relation}
$r \circ s : A \rightharpoondown C$,
whose character is the composition of the exponential adjoint of
$(1_A {\times} \Delta_B {\times} 1_C) \cdot (r {\times} s) \cdot \wedge
: A {\times} B {\times} C \rightarrow A {\times} B {\times} B {\times} C \rightarrow \Omega {\times} \Omega \rightarrow \Omega$
with the existential image
$\exists_B : \Omega^B \stackrel{\exists_{!_B}}{\rightarrow} \Omega^1 \cong \Omega$.
For any composable pair of relations $(r,s)$,
there is also a \emph{composition morphism}
${\circ}_{(r,s)} : \Box_r \times_B \Box_s \rightarrow \Box_{(r \circ s)}$,
whose source is the vertex of the pullback
$\pi_0 : \Box_r \leftarrow \Box_r{{\times}_B}\Box_r \rightarrow \Box_s : \pi_1$
of the opspan 
$\pi^r_1 : \Box_r \rightarrow \Box_B \leftarrow \Box_s : \pi^s_0$,
and
satisfies the commutative diagrams
${\circ}_{(r,s)} \cdot \pi^{(r \circ s)}_0 = \pi_0 \cdot \pi^r_0$
and
${\circ}_{(r,s)} \cdot \pi^{(r \circ s)}_1 = \pi_1 \cdot \pi^s_1$.
Composition satisfies the associative law $r \circ (s \circ t) = (r \circ s) \circ t$ for any pair of composable pairs $(r,s)$ and $(s,t)$,
and satisfies the unit laws $1_A \circ r = r = r \circ 1_B$ for any relation $r : A \rightharpoondown B$.

For any $\mathcal{B}$-object $A$,
a subobject $X \hookrightarrow A$ can be regarded as a relation in two ways:
(1) in the direct sense $X : 1 \rightharpoondown A$ 
with character $1{\times}A \cong A \rightarrow \Omega$,
or 
(2) in the inverse sense $X^{\!\propto} : A \rightharpoondown 1$
with character $A{\times}1 \cong A \rightarrow \Omega$.
Any $\mathcal{B}$-morphism $f : A \rightarrow B$ can be regarded as a relation in two ways:
(1) in the direct sense 
$f^\triangleright = B(f,1_{B}) : A \rightharpoondown B$ 
with character $A{\times}B \rightarrow \Omega$ 
that is one of the two exponential adjoints of
the composite $\mathcal{B}$-morphism $f \cdot {\{\mbox{-}\}}_B : A \rightarrow B \rightarrow {\wp}B$,
or
(2) in the inverse sense 
$f^\triangleleft = B(1_{B},f) : B \rightharpoondown A$ 
with character $B{\times}A \rightarrow \Omega$ 
that is the other exponential adjoint of $f \cdot {\{\mbox{-}\}}_B$.
Then,
$(f^{\triangleleft})^{\propto} = f^\triangleright$ 
and 
$(f^{\triangleright})^{01} = f \cdot {\{\mbox{-}\}}_B = (f^{\triangleleft})^{10}$.

\paragraph{Residuation.}
For any relation $r : A \rightharpoondown B$,
the \emph{left residuation} of a relation $s : A \rightharpoondown C$ along $r$ 
is a relation $r{\setminus}s : B \rightharpoondown C$,
whose character 
$\chi_{r{\setminus}s} : A{\times}B \rightarrow \Omega^A \stackrel{\Rightarrow}{\rightarrow} A$
is defined to be the composition with implication of the exponential adjoint of the composite
\begin{center}
$\begin{array}{l}
(\Delta_A{\times}1_{(B{\times}C)}) \cdot (1_A{\times}\tau_{B,A}{\times}1_C) \cdot (r{\times}s) \cdot {\Rightarrow} 
\\ : A{\times}B{\times}C \rightarrow A{\times}A{\times}B{\times}C
\rightarrow A{\times}B{\times}A{\times}C 
\rightarrow \Omega{\times}\Omega \rightarrow \Omega .
\end{array}$
\end{center}
For any relation $r : A \rightharpoondown B$,
the \emph{right residuation} of a relation $s : C \rightharpoondown B$ along $r$ 
is a relation $r{/}s : C \rightharpoondown A$,
whose character 
$\chi_{r{\setminus}s} : A{\times}B \rightarrow \Omega^A \stackrel{\Rightarrow}{\rightarrow} A$
is defined to be the composition with implication of the exponential adjoint of the composite
\begin{center}
$\begin{array}{l}
(\Delta_A{\times}1_{(B{\times}C)}) \cdot (1_A{\times}\tau_{B,A}{\times}1_C) \cdot (r{\times}s) \cdot {\Rightarrow} 
\\ : A{\times}B{\times}C \rightarrow A{\times}A{\times}B{\times}C
\rightarrow A{\times}B{\times}A{\times}C 
\rightarrow \Omega{\times}\Omega \rightarrow \Omega .
\end{array}$
\end{center}
Along any $\mathcal{B}$-morphism $f : A \rightarrow B$,
the existential image, the inverse image and the universal image
are defined in terms of relational composition and residuation.
\begin{center}
$\begin{array}{r@{\hspace{5pt}=\hspace{5pt}}l@{\hspace{5pt}:\hspace{5pt}}l}
\exists{f} & X \circ f^{\triangleright}                 & {\wp}A \rightarrow {\wp}B \\
{f}^{-1}   & \multicolumn{2}{l}{Y /\, f^{\triangleright}}                             \\
           & Y \circ {f}^{\triangleleft}                & {\wp}B \rightarrow {\wp}A \\
\forall{f} & X /\, f^{\triangleleft}                    & {\wp}A \rightarrow {\wp}B
\end{array}$
\end{center}
Since
\begin{center}
\begin{tabular}{l}
$\exists{f}(X) \leq Y$ \underline{iff} $X \circ f^{\triangleright} \leq Y$
\underline{iff} $X \leq Y / f^{\triangleright}$ \underline{iff} $X \leq {f}^{-1}(Y)$
\\
${f}^{-1}(Y) \leq X$ \underline{iff} $Y \circ {f}^{\triangleleft} \leq X$
\underline{iff} $Y \leq X / f^{\triangleleft}$ \underline{iff} $Y \leq \forall{f}(X)$,
\end{tabular}
\end{center}
we have the two coupled order adjunctions in Figure~\ref{hyperdoctrinal-diagram}
\begin{center}
$\begin{array}{r@{\hspace{5pt}=\hspace{5pt}}l@{\hspace{5pt}:\hspace{5pt}}l}
\mathsf{dir}(f) 
& \langle \exists{f}, {f}^{-1} \rangle
& {\wp}A \rightarrow {\wp}B \\
\mathsf{inv}(f) 
& \langle {f}^{-1}, \forall{f} \rangle
& {\wp}B \rightarrow {\wp}A.
\end{array}$
\end{center}

\paragraph{Derivation.}
Any relation $r : A \rightharpoondown B$ defines derivation monotonic morphisms in two directions:
(1) the \emph{forward derivation} monotonic morphism is the intersection of the existential image of the 01-fiber
$r^{\Rightarrow} = (\exists r^{01})^{\propto} \cdot {\cap}_{B}
: {\wp}A^{\propto} \rightarrow {\wp}{\wp}B^{\propto} \rightarrow {\wp}B$;
(2) the \emph{reverse derivation} monotonic morphism is the intersection of the existential image of the 10-fiber
$r^{\Leftarrow} = (\exists r^{10})^{\propto} \cdot {\cap}_{A}
: {\wp}B^{\propto} \rightarrow {\wp}{\wp}A^{\propto} \rightarrow {\wp}A$.
Clearly,
$r^{\Leftarrow} = (r^{\propto})^{\Rightarrow}$.
{\bfseries Prove:}
Forward derivation is a contravariant monotonic morphism
$r^{\Rightarrow} : {\wp}A^{\propto} \rightarrow {\wp}B$
and
reverse derivation is a contravariant monotonic morphism
$r^{\Leftarrow} : {\wp}B^{\propto} \rightarrow {\wp}A$.

Let $\mathbf{A} = \langle A, \leq_A \rangle$ be any ${\mathcal E}$-preorder.
The \emph{up-segment} monotonic morphism
$\uparrow_A : \mathbf{A}^{\propto} \rightarrow {\wp}\mathbf{A}$
is the exponential adjoint of the character 
$\chi_{\leq_A} : A{\times}A \rightarrow \Omega$
of the order relation $\leq_A : A \rightharpoondown A$. 
The \emph{down-segment} monotonic morphism
$\downarrow_A : \mathbf{A} \rightarrow {\wp}\mathbf{A}$
is the exponential adjoint of the character 
$\tau_{A,A} \cdot \chi_{\leq_A} : A{\times}A \rightarrow \Omega$
of the opposite order relation $\leq^{\propto}_A : A \rightharpoondown A$. 

\paragraph{Example.} 
Let ${\mathcal E}$ be the topos $\mathsf{Set}$. 
For any two relations $R : A \rightharpoondown B$ and $S : A \rightharpoondown C$,
the left residuation relation is defined by
$R{\setminus}S 
= \{ (b,c) \mid b{\in}B, c{\in}C, \forall_{a \in A} (aRb \Rightarrow aSc) \}
: B \rightharpoondown C$.
If $A$ is any set,
the left residuation $(\in_{{\wp}A}){\setminus}(\in_{A}^{\propto})$ 
is define by
$(\in_{{\wp}A}){\setminus}(\in_{A}^{\propto}) 
= \{ ({\mathcal X},a) \mid {\mathcal X}{\in}{\wp}{\wp}A, a{\in}A, \forall_{X \in {\wp}A} (X{\in}{\mathcal X} \Rightarrow a{\in}X) \}
: {\wp}{\wp}A \rightharpoondown A$,
and hence the intersection function
$\cap_A : {\wp}{\wp}A \rightarrow {\wp}A$
maps any collection of subsets ${\mathcal X} \in {\wp}{\wp}A$
to the subset
$\{ a \in A \mid \forall_{X \in {\wp}A} (X{\in}{\mathcal X} \Rightarrow a{\in}X) \}$;
and
the composition $(\in_{A}){\circ}(\in_{{\wp}A})$ 
is define by
$(\in_{A}){\circ}(\in_{{\wp}A}) 
= \{ (a, {\mathcal X}) \mid a{\in}A, {\mathcal X}{\in}{\wp}{\wp}A, \exists_{X \in {\wp}A} (a{\in}X \,\&\, X{\in}{\mathcal X}) \}
: A \rightharpoondown {\wp}{\wp}A$,
and hence the union function
$\cup_A : {\wp}{\wp}A \rightarrow {\wp}A$
maps any collection of subsets ${\mathcal X} \in {\wp}{\wp}A$
to the subset
$\{ a \in A \mid \forall_{X \in {\wp}A} (a{\in}X \,\&\, X{\in}{\mathcal X}) \}$.

\paragraph{Residuation Properties.}
To a large extent the foundation of conceptua is based upon binary relations (or matrices) 
and centered upon the axiom of adjointness between relational composition and residuation. 
This composition/residuation adjointness axiom is similar to the axiom of adjointness between conjunction and implication.
Since composition and residuation are binary, the axiom has two statements:
(1) Left composition is (left) adjoint to left residuation:
$r \circ s \subseteq t$ iff $s \subseteq r \setminus t$, for any compatible binary relations $r$, $s$ and $t$.
(2) Right composition is (left) adjoint to right residuation:
$r \circ s \subseteq t$ iff $r \subseteq t/s$, for any compatible binary relations $r$, $s$ and $t$.
Some derived properties are that residuation preserves composition: 
$(r_{1}{\circ}r_{2}) \setminus t = r_{2}{\setminus}(r_{1}{\setminus}t)$ 
and 
$t/(s_{1}{\circ}s_{2}) = (t/s_{2})/s_{1}$ 
and that residuation preserves identity: 
$Id_{A}{\setminus}t = t$ and $t/Id_{B} = t$. 
The involutions of transpose and negation are of secondary importance. 
The axiom for transpose states that transpose dualizes residuation:  
$(r{\setminus}t)^{\propto} = t^{\propto}/r^{\propto}$
and 
$(t/s)^{\propto} = s^{\propto}{\setminus}t^{\propto}$.

There are two important associative laws 
--- one unconstrained the other constrained. 
There is an unconstrained associative law: 
$(r{\setminus}t)/s = r{\setminus}(t/s)$, 
for all $t \subseteq A{\times}B$, $r \subseteq A{\times}C$ and $s \subseteq D{\times}B$. 
There is also an associative law constrained by closure: 
if $t$ is an endorelation and $r$ and $s$ are closed with respect to $t$, 
$r = t/(r{\setminus}t)$ and $s = (t/s){\setminus}t$, then $(t/s){\setminus}r = s/(r{\setminus}t)$, 
for all $t \subseteq A{\times}A$, $r \subseteq A{\times}B$ and $s \subseteq C{\times}A$. 
${\mathcal E}$-morphisms have a special behavior with respect to derivation. 
If ${\mathcal E}$-morphism $f$ and relation $r$ are composable, 
then $f^{\propto}{\setminus}r = f{\circ}r$. 
If relation $s$ and the opposite of ${\mathcal E}$-morphism $g$ are composable, 
then $s/g = s{\circ}{g}^{\propto}$. 



\section{Order Structures\label{sec:order:structures}}

\subsection{Orders}

\paragraph{Endorelations.}
A relation is an \emph{endorelation} when source equals target.
An endorelation $r$ is 
\emph{transitive} when $r \circ r \leq r$,
\emph{reflexive} when $1_A \leq r$, 
\emph{symmetric} when $r = r^\propto$, 
and \emph{antisymmetric} when $r \cap r^{\propto} \leq 1_A$.
An \emph{order relation} is a transitive, reflexive endorelation.
An \emph{equivalence relation} is a transitive, reflexive, symmetric endorelation.
For any $\mathcal{B}$-morphism $f : A \rightarrow B$, 
there is an associated \emph{kernel} equivalence relation
$\mathsf{ker}(f) = B(f,f) : A \rightharpoondown A$.
A $\mathcal{B}$-morphism $f : A \rightarrow B$ 
\emph{respects} an equivalence relation $\equiv$ on $A$ 
when ${\equiv} \leq \mathsf{ker}(f)$.
For any equivalence relation $\equiv$ on $A$,
there is an associated quotient $\mathcal{B}$-object $A/{\equiv}$
and canonical $\mathcal{B}$-epimorphism
$[\mbox{-}]_{\equiv} : A \rightarrow A/{\equiv}$,
which is the coequalizer of the parallel pair
of $\mathcal{B}$-morphism associated with $\equiv$.
Respectful morphisms factor through quotients:
for any $\mathcal{B}$-morphism $f : A \rightarrow B$ 
that respects $\equiv$,
there is a unique $\mathcal{B}$-morphism $\hat{f} : A/{\equiv} \rightarrow B$,
where $f = [\mbox{-}] \cdot \hat{f}$.

\paragraph{Preorders.}
A \emph{preorder}\footnote{An alternate definition of a preorder is a parallel pair 
$\partial^A_0,\partial^A_1 : \Box_A \rightarrow A$  of $\mathcal{B}$-morphisms, 
whose pairing $\iota_A = (\partial^A_0,\partial^A_1) : \Box_A \rightarrow A{\times}A$ is monic. 
The character of any such pair is the character of the order relation of a preorder 
as defined in this paper.
There are also two associated $\mathcal{B}$-morphisms:
$\circ_A : \Box_A {\times}_A \Box_A \rightarrow \Box_A$ corresponding to transitivity
and $1_A : A \rightarrow \Box_A$ corresponding to reflexivity,
where $1_A \cdot \iota_A = \Delta_A$.} 
in (internal to) a topos $\mathcal{B}$ is a pair 
$\mathbf{A} = \langle A, {\leq}_A \rangle$,
where $A$ is a $\mathcal{B}$-object 
and ${\leq}_A : A \rightharpoondown A$ is an order relation.
For any preorder $\mathbf{A} = \langle A, \leq_A \rangle$
and
any pair of morphisms $a_{0} : A_{0} \rightarrow A$ and $a_{1} : A_{1} \rightarrow A$,
we can use the abbreviation $\mathbf{A}(a_{0},a_{1})$ for the relation 
${\leq}_A(a_{0},a_{1}) : A_{0} \rightharpoondown A_{1}$.
Every preorder $\mathbf{A}$ has an associated equivalence relation 
$\equiv_A : A \rightharpoondown A$
whose character $A{\times}A \rightarrow \Omega$
is the composite $\mathcal{B}$-morphism 
$(\leq_A,\tau_{A,A} \cdot \leq_A) \cdot \wedge
: A{\times}A \rightarrow \Omega{\times}\Omega \rightarrow \Omega$.
A \emph{partial order} or \emph{posetal object} is a preorder, 
whose order relation is antisymmetric.
The \emph{transpose} or \emph{opposite} preorder is
$\mathbf{A}^{\propto} = \langle A, \leq^{\propto}_A \rangle$.
Any function $f : A \rightarrow B$ maps a preorder $\mathbf{B} = \langle B, \leq_B \rangle$
to the \emph{kernel} preorder
$\mathsf{ker}_f(\mathbf{B}) = \langle A, \leq_f^B \rangle$,
whose order relation is $\leq_f^B = \mathbf{B}(f,f) : A \rightharpoondown A$.
Any preorder $\mathbf{A} = \langle A, \leq_A \rangle$
has an associated \emph{quotient} partial order
$\mathsf{quo}(\mathbf{A}) 
= [\mathbf{A}] 
= \langle A/{\equiv_A}, \leq_{[\mathbf{A}]} \rangle$,
whose order relation $\leq_{[\mathbf{A}]} : A/{\equiv_A} \rightharpoondown A/{\equiv_A}$
has the monic component of the epi-mono factorization of 
$\iota_A \cdot ([\mbox{-}]_{\equiv_A}{\times}\,[\mbox{-}]_{\equiv_A}) 
: \Box_A \rightarrow A{\times}A \rightarrow {(A/{\equiv_A})}{\times}{(A/{\equiv_A})}$
as representative monic of its subobject.

\paragraph{Monotonic Morphisms.}
A \emph{monotonic morphism}
$f : \mathbf{A} \rightarrow \mathbf{B}$
is a $\mathcal{B}$-morphism $f : A \rightarrow B$ that preserves order: 
${\leq}_A \leq {\leq}_f^B$.
An \emph{isotonic morphism}
$f : \mathbf{A} \rightarrow \mathbf{B}$
is a morphism $f : A \rightarrow B$ that preserves and respects order:
${\leq}_A = {\leq}_f^B$.
A parallel pair of monotonic morphisms
$f,g : \mathbf{A} \rightarrow \mathbf{B}$
is ordered $f \leq g$ when
$(f,g) \cdot {\leq}_B : A \rightarrow B{\times}B \rightarrow \Omega$ 
is the top character $\top_A : A \rightarrow \Omega$.
The composition and identities of monotonic morphisms can be defined in terms of the underlying $\mathcal{B}$-objects and $\mathcal{B}$-morphisms.
Let $\mathsf{Ord}(\mathcal{B})$ denote the category of preorders and monotonic morphisms.
There is an underlying functor
$|\mbox{-}|_{\mathcal{B}} : \mathsf{Ord}(\mathcal{B}) \rightarrow \mathcal{B}$,
which gives the underlying $\mathcal{B}$-object of a preorder 
and the underlying $\mathcal{B}$-morphism of a monotonic morphism.

A preorder $\mathbf{A} = \langle A, \leq_A \rangle$
can equivalently be regarded as a monotonic $\mathcal{B}$-morphism in two ways:
either (1) the \emph{up segment} monotonic morphism
${\uparrow}_{\mathbf{A}} = {\leq}_A^{01} : \mathbf{A} \rightarrow {\wp}\mathbf{A}^{\propto}$
or (2) the \emph{down segment} monotonic morphism 
${\downarrow}_{\mathbf{A}} = {\leq}_A^{10} : \mathbf{A} \rightarrow {\wp}\mathbf{A}$.
For any preorder $\mathbf{A} = \langle A, \leq_A \rangle$,
the quotient epimorphism 
$[\mbox{-}] : A \rightarrow A/{\equiv_A}$
is a monotonic morphism
$[\mbox{-}]_A : \mathbf{A} \rightarrow \mathsf{quo}(\mathbf{A})$.
For any monotonic morphism
$f : \mathbf{A} \rightarrow \mathbf{B}$,
since the composite
$f \cdot [\mbox{-}]_B : A \rightarrow B \rightarrow \mathsf{quo}(\mathbf{B})$
respects the equivalence relation of $\mathbf{A}$,
there is a quotient monotonic morphism
$\mathsf{quo}(f) = [f] : \mathsf{quo}(\mathbf{A}) \rightarrow \mathsf{quo}(\mathbf{B})$
that satisfies the naturality condition
$f \cdot [\mbox{-}]_B = [\mbox{-}]_A \cdot [f]$.

\paragraph{Order Bimodules.}
An \emph{order left semimodule}
$\mathbf{r} : \mathbf{A} \rightharpoondown B$
is a relation $r : A \rightharpoondown B$
that is closed on the left (at the source) $({\leq}_A \circ {\leq}_r) \leq {\leq}_r$.
An \emph{order right semimodule}
$\mathbf{r} : A \rightharpoondown \mathbf{B}$
is a relation $r : A \rightharpoondown B$
that is closed on the right (at the target) $({\leq}_r \circ {\leq}_B) \leq {\leq}_r$.
An \emph{order bimodule}
$\mathbf{r} : \mathbf{A} \rightharpoondown \mathbf{B}$
is both a left and right semimodule;
that is,
it is closed on the left and on the right.
Any left semimodule $\mathbf{r} : \mathbf{A} \rightharpoondown B$
is a bimodule $\mathbf{r} : \mathbf{A} \rightharpoondown B$,
where $B = \langle B, \bot_{B{\times}B} \rangle$ is the discrete order.
Any right semimodule $\mathbf{r} : A \rightharpoondown \mathbf{B}$
is a bimodule $\mathbf{r} : A \rightharpoondown \mathbf{B}$.
For any order bimodule
$\mathbf{r} : \mathbf{A} \rightharpoondown \mathbf{B}$,
the 01-fiber is a (contravariant) monotonic function
$\mathbf{r}^{01} : \mathbf{A} \rightarrow {\wp}\mathbf{B}^{\propto}$
and the 10-fiber is a (covariant) monotonic function
$\mathbf{r}^{10} : \mathbf{B} \rightarrow {\wp}\mathbf{A}$.
Any monotonic morphism $f : \mathbf{A} \rightarrow \mathbf{B}$
defines an order bimodule in each direction:
(1) the \emph{forward bimodule}
$f^\triangleright = \mathbf{B}(f,1_{\mathbf{B}})
: \mathbf{A} \rightharpoondown \mathbf{B}$,
which has 01-fiber the composition of itself with up segment
$(\mathbf{f}^\triangleright)^{01}
= \mathbf{f} \cdot {\uparrow}_{\mathbf{B}}
: \mathbf{A} \rightarrow \mathbf{B} \rightarrow {\wp}\mathbf{B}$, 
and has 10-fiber the composition of down segment with inverse image
$(\mathbf{f}^\triangleright)^{10}
= {\downarrow}_{\mathbf{B}} \cdot \mathbf{f}^{-1}
 : \mathbf{B} \rightarrow {\wp}\mathbf{B} \rightarrow {\wp}\mathbf{A}$; and
(2) the \emph{reverse bimodule}
$f^\triangleleft  = \mathbf{B}(1_{\mathbf{B}},f)
: \mathbf{B} \rightharpoondown \mathbf{A}$,
which has 01-fiber the composition of up segment with inverse image
$(\mathbf{f}^\triangleleft)^{01} 
= {\uparrow}_{\mathbf{A}} \cdot \mathbf{f}^{-1}
: \mathbf{B} \rightarrow {\wp}\mathbf{B} \rightarrow {\wp}\mathbf{A}^{\propto}$,
and has 10-fiber the composition of itself with down segment
$(\mathbf{f}^\triangleleft)^{10}
= \mathbf{f} \cdot {\downarrow}_{\mathbf{A}}
: \mathbf{A} \rightarrow \mathbf{B} \rightarrow {\wp}\mathbf{B}$.

Let $\mathsf{Ord}(\mathcal{B})_{=} \subset \mathsf{Ord}(\mathcal{B})$ 
denote the full subcategory of partial orders and monotonic $\mathcal{B}$-morphisms\footnote{Also denoted $\mathsf{Pos}(\mathcal{B})$}.
There is an inclusion functor
$\mathsf{incl}_{\mathcal{B}} 
: \mathsf{Ord}(\mathcal{B})_{=} \rightarrow \mathsf{Ord}(\mathcal{B})$
and a quotient functor
$\mathsf{quo}_{\mathcal{B}} 
: \mathsf{Ord}(\mathcal{B}) \rightarrow \mathsf{Ord}(\mathcal{B})_{=}$.
There is a canon(ical) natural transformation
$\eta_{\mathcal{B}} 
: \mathsf{id}_{\mathsf{Ord}(\mathcal{B})} \Rightarrow \mathsf{quo}_{\mathcal{B}} \circ \mathsf{incl}_{\mathcal{B}}
: \mathsf{Ord}(\mathcal{B}) \rightarrow \mathsf{Ord}(\mathcal{B})$
whose $\mathbf{A}^{\mathrm{th}}$-component is the epimorphic canonical isotone
$[\mbox{-}]_{\mathbf{A}} : \mathsf{A} \rightarrow \mathsf{quo}(\mathbf{A})$.
The quotient functor is left adjoint to the inclusion functor 
$\mathsf{quo}_{\mathcal{B}} \dashv \mathsf{incl}_{\mathcal{B}}$
with counit being an isomorphism and unit being the canon.
This adjunction is a  reflection: 
$\mathsf{Ord}(\mathcal{B})_{=}$ is a reflective subcategory of $\mathsf{Ord}(\mathcal{B})$ 
with the quotient functor being the reflector (Figure~\ref{order-fibration}).

A $\mathcal{B}$-monotonic morphism $f : {\mathbf A} \rightarrow {\mathbf B}$ 
is an \emph{isomorphism} $f : {\mathbf A} \cong {\mathbf B}$
when there is an oppositely-directed $\mathcal{B}$-monotonic morphism 
$f^{-1} : {\mathbf B} \rightarrow {\mathbf A}$
called its inverse
such that $f \cdot f^{-1} = 1_A$ and $f^{-1} \cdot f = 1_B$.
A $\mathcal{B}$-monotonic morphism $f : {\mathbf A} \rightarrow {\mathbf B}$ 
is an \emph{equivalence} $f : {\mathbf A} \equiv {\mathbf B}$
when there is an oppositely-directed $\mathcal{B}$-monotonic morphism 
$f^\prime : {\mathbf B} \rightarrow {\mathbf A}$
called its pseudo-inverse
such that $f \cdot f^\prime \equiv 1_A$ and $f^\prime \cdot f \equiv 1_B$.
A $\mathcal{B}$-monotonic morphism $e : {\mathbf A} \rightarrow {\mathbf B}$ 
is a \emph{pseudo-epimorphism}
when for any parallel pair of $\mathcal{B}$-monotonic morphisms 
$f, g : {\mathbf B} \rightarrow {\mathbf C}$,
if $e \cdot f \equiv e \cdot g$ then $f \equiv g$.
There is a dual definition for a \emph{pseudo-monomorphism}.

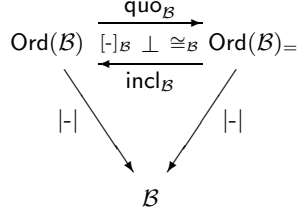
\begin{figure}
\begin{center}
\setlength{\unitlength}{0.65pt}
\begin{picture}(120,120)(-25,10)
\put(-30,75){\makebox(60,30){$\mathsf{Ord}(\mathcal{B})$}}
\put(90,75){\makebox(60,30){$\mathsf{Ord}(\mathcal{B})_{=}$}}
\put(30,-15){\makebox(60,30){$\mathcal{B}$}}
\put(45,100){\makebox(30,20){\footnotesize{$\mathsf{quo}_{\mathcal{B}}$}}}
\put(25,80){\makebox(30,20){\scriptsize{$[\mbox{-}]_{\mathcal{B}}$}}}
\put(45,80){\makebox(30,20){\footnotesize{$\bot$}}}
\put(65,80){\makebox(30,20){\scriptsize{$\cong_{\mathcal{B}}$}}}
\put(45,58){\makebox(30,20){\footnotesize{$\mathsf{incl}_{\mathcal{B}}$}}}
\put(-3,35){\makebox(30,20){\footnotesize{$|\mbox{-}|$}}}
\put(93,35){\makebox(30,20){\footnotesize{$|\mbox{-}|$}}}
\put(30,102){\vector(1,0){60}}
\put(90,78){\vector(-1,0){60}}
\put(10,75){\vector(2,-3){40}}
\put(110,75){\vector(-2,-3){40}}
\end{picture}
\end{center}
\caption{Order Fibration}
\label{order-fibration}
\end{figure}

\paragraph{Finite Limits.}
Given two preorders $\mathbf{A}_1$ and $\mathbf{A}_2$,
the \emph{binary product} is the preorder 
$\mathbf{A}_1 {\times} \mathbf{A}_2
= \langle A_1{\times}A_2, \leq_{A_1{\times}A_2} \rangle$,
whose order relation is defined by 
\begin{center}
$\begin{array}{r@{\hspace{5pt}}c@{\hspace{5pt}}l}
\leq_{A_1{\times}A_2}
& \doteq & \Delta_{A_1{\times}A_2} \cdot (1_{A_1}{\times}\tau_{A_1,A_2}{\times}1_{A_2}) \cdot (\leq_{A_1}{\times}\leq_{A_1}) \cdot \wedge
\\
& : & A_1{\times}A_2 \rightarrow  A_1{\times}A_2{\times}A_1{\times}A_2
\rightarrow A_1{\times}A_1{\times}A_2{\times}A_2 \rightarrow \Omega{\times}\Omega \rightarrow \Omega .
\end{array}$
\end{center}
The underlying component projectio $\mathcal{B}$-morphisms are monotonic: 
$\pi_1 : \mathbf{A}_1 {\times} \mathbf{A}_2 \rightarrow \mathbf{A}_1$
and 
$\pi_2 : \mathbf{A}_1 {\times} \mathbf{A}_2 \rightarrow \mathbf{A}_2$.
This is a categorical product in $\mathsf{Ord}(\mathcal{B})$,
since given any pair of monotonic $\mathcal{B}$-morphisms
$\mathbf{f}_1 : \mathbf{C} \rightarrow \mathbf{A}_1$,
and $\mathbf{f}_2 : \mathbf{C} \rightarrow \mathbf{A}_2$ 
with common source,
the unique mediating $\mathcal{B}$-morphism 
that satisfies
$\mathbf{f} \cdot \pi_1 = \mathbf{f}_1$ and $\mathbf{f} \cdot \pi_2 = \mathbf{f}_2$
is monotonic:
$\mathbf{f} = (\mathbf{f}_1,\mathbf{f}_2) : \mathbf{C} \rightarrow \mathbf{A}_1 {\times} \mathbf{A}_2$.
This definition can be extended to any finite number of preorders.
Also, the finite product of partial orders is a partial order.
The terminal $\mathcal{B}$-object $1$ forms a partial order 
$\mathbf{1} = \langle 1,\top_{1{\times}1} \rangle$ that is the nullary product,
since for any preorder $\mathbf{A}$ the unique $\mathcal{B}$-morphism is monotonic: 
$!_A : \mathbf{A} \rightarrow \mathbf{1}$.
Given any parallel pair of monotonic $\mathcal{B}$-morphisms
$\mathbf{f},\mathbf{g} : \mathbf{A} \rightarrow \mathbf{B}$,
the $\mathcal{B}$-equalizer
$e : E \rightarrow A$
lifts to an \emph{equalizer} 
$e : \mathbf{E} = \langle E, \leq_e \rangle \rightarrow \mathbf{A}$
in $\mathsf{Ord}$,
where the order relation is the kernel of $e$
and hence $e$ is a monic isotone.
Hence,
the categories $\mathsf{Ord}(\mathcal{B})$ and $\mathsf{Ord}(\mathcal{B})_{=}$ are finite complete,
and the underlying functors preserve these limits.

\paragraph{Power.}
Let $A$ be any $\mathcal{B}$-object.
The \emph{power preorder} 
${\wp}\mathbf{A} = \langle {\wp}A, \leq_{{\wp}A} \rangle$
is the power object with the \emph{inclusion} order,
whose subobject is
$\iota_\Omega^P = {(\mbox{-})} \cdot \iota_\Omega 
: \Box_\Omega^P \hookrightarrow {(\Omega{\times}\Omega)}^P \cong \Omega^P\!{\times}\Omega^P$.
The \emph{binary intersection} $\mathcal{B}$-morphism
$\cap_A : {\wp}A{\times}{\wp}A \rightarrow {\wp}A$
is defined, using conjunction $\cap$ on $\Omega$, to be the exponential adjoint of the $\mathcal{B}$-morphism
$(\Delta_A{\times}1_{({\wp}A{\times}{\wp}A)}) 
\cdot (1_{A}{\times}\tau_{A,{\wp}A}{\times}1_{{\wp}A})
\cdot ({\in}_{A}{\times}{\in}_{A}) \cdot {\cap}
: A{\times}{\wp}A{\times}{\wp}A 
\rightarrow A{\times}A{\times}{\wp}A{\times}{\wp}A 
\rightarrow A{\times}{\wp}A{\times}A{\times}{\wp}A 
\rightarrow \Omega{\times}\Omega \rightarrow \Omega$.
The \emph{binary union} and \emph{relative pseudo-complement} 
${\cup}_A, {\Rightarrow}_A : {\wp}A{\times}{\wp}A \rightarrow {\wp}A$
have similar definitions using disjunction $\cup$ and implication $\Rightarrow$ on $\Omega$.
Using the idea that a lattice element is smaller than another element when the meet is the first,
the inclusion relation
${\leq}_A : {\wp}A \rightharpoondown {\wp}A$
can also be defined via the character
$(\Delta_{{\wp}A}{\times}1_{{\wp}A}) \cdot (1_{{\wp}A}{\times}\cap_{A}) \cdot \delta_{{\wp}A}
: {\wp}A{\times}{\wp}A \rightarrow {\wp}A{\times}{\wp}A{\times}{\wp}A 
\rightarrow {\wp}A{\times}{\wp}A \rightarrow \Omega$.
The \emph{intersection} monotonic morphism
$\cap_A : {\wp}{\wp}A^{\propto} \rightarrow {\wp}A$
is the 01-fiber of the left residuation 
${\in}_{{\wp}A} {\setminus}\, {\in}_{A}^{\propto} 
: {\wp}{\wp}A \rightharpoondown A$
of the opposite of the basic membership relation
$\in_{A}^{\propto} : {\wp}A \rightharpoondown A$
along the membership relation on power
$\in_{{\wp}A} : {\wp}A \rightharpoondown {\wp}{\wp}A$.
Of course,
intersection could also be defined with right residuation.
The \emph{union} monotonic morphism
$\cup_A : {\wp}{\wp}A \rightarrow {\wp}A$
is the 10-fiber of the composition 
${\in}_{A} \!\circ {\in}_{{\wp}A}
: A \rightharpoondown {\wp}{\wp}A$
of the basic membership relation
$\in_{A} : A \rightharpoondown {\wp}A$
with the membership relation on power
$\in_{{\wp}A} : {\wp}A \rightharpoondown {\wp}{\wp}A$.
{\bfseries Prove:}
The tuple 
${\wp}A = \langle {\wp}A, {\subseteq}_{A}, {\cup}_{A}, {\cap}_{A}, {\Rightarrow}_{A} \rangle$
forms a complete Heyting algebra in (internal to) $\mathcal{B}$. 

\subsection{Order Adjunctions.}

An \emph{(order) adjunction}
$\mathbf{g} 
= \langle \check{\mathbf{g}}, \hat{\mathbf{g}} \rangle
: \mathbf{A}_0 \rightleftharpoons \mathbf{A}_1$
in (internal to) a topos $\mathcal{B}$
consists of 
a left adjoint monotonic morphism in the forward direction
$\check{\mathbf{g}} : \mathbf{A}_{0} \rightarrow \mathbf{A}_{1}$
and a right adjoint monotonic morphism in the reverse direction
$\hat{\mathbf{g}} : \mathbf{A}_{1} \rightarrow \mathbf{A}_{0}$
that satisfy any of the following equivalent conditions:
\begin{description}
\item [fundamental:]
$\mathbf{A}_{1}(\check{\mathbf{g}},1_{\mathbf{A}_{1}})
= \mathbf{A}_{0}(1_{\mathbf{A}_{0}},\hat{\mathbf{g}})$; 
or equivalently,
\item [external:]
$\mathbf{A}_{1}(a \cdot \check{\mathbf{g}}, b)
= \mathbf{A}_{0}(a, b \cdot \hat{\mathbf{g}})$
for every preorder $\mathbf{C}$ and every pair of elements
$a \in^{\mathbf{C}} \mathbf{A}_0$ and $b \in^{\mathbf{C}} \mathbf{A}_1$;
or equivalently,
\item [closure/interior:]
$1_A \leq \check{\mathbf{g}} \cdot \hat{\mathbf{g}}$ 
and $\hat{\mathbf{g}} \cdot \check{\mathbf{g}} \leq 1_B$;
or equivalently,
\item [factor:]
$(1_A, \check{\mathbf{g}} \cdot \hat{\mathbf{g}}) : A \rightarrow A{\times}A$
factors through the subobject of $\leq_A$
and
$(\hat{\mathbf{g}} \cdot \check{\mathbf{g}}, 1_B) : B \rightarrow B{\times}B$
factors through the subobject of $\leq_B$.
\end{description}
By using derivation on the order relation,
any preorder
$\mathbf{A} = \langle A, {\leq}_{\mathbf{A}} \rangle$
defines the \emph{bound} adjunction 
$\mathsf{bnd}_{\mathbf{A}}
= \langle {\Uparrow}_{\mathbf{A}}^{\propto}, {\Downarrow}_{\mathbf{A}} \rangle 
: {\wp}\mathbf{A} \rightarrow {\wp}\mathbf{A}^{\propto}$
where the left adjoint is the \emph{upper bound} monotonic function
${\Uparrow}_{\mathbf{A}} 
= {\leq}_{\mathbf{A}}^{\Rightarrow} 
= \exists {\leq}_{\mathbf{A}}^{01} \cdot {\cap}_{\mathbf{A}}
= \exists {\uparrow}_{\mathbf{A}} \cdot {\cap}_{\mathbf{A}}
: {\wp}\mathbf{A}^{\propto} \rightarrow {\wp}\mathbf{A}$
and the right adjoint is the \emph{lower bound} monotonic function
${\Downarrow}_{\mathbf{A}} 
= {\leq}_{\mathbf{A}}^{\Leftarrow} 
= \exists {\leq}_{\mathbf{A}}^{10} \cdot {\cap}_{A}
= \exists {\downarrow}_{\mathbf{A}} \cdot {\cap}_{A}
: {\wp}\mathbf{A}^{\propto} \rightarrow {\wp}\mathbf{A}$.
Externally,
a closed subobject $X \subseteq^1 A$ of the bound adjunction is the object of lower bounds 
${\Downarrow}_{\mathbf{A}} Y$ for some subobject $Y \subseteq^1 A$,
and an open subobject $Y \subseteq^1 A$ is the object of upper bounds 
${\Uparrow}_{\mathbf{A}} X$ for some subobject $X \subseteq^1 A$.

Composition\footnote{We use the symbol ``$\circ$'' for the composition of adjunctions.} and identities of adjunctions are defined componentwise.
Let $\mathsf{Adj}(\mathcal{B})$ denote the category of preorders and adjunctions.
Partial orders and adjunctions form the full subcategory 
$\mathsf{Adj}(\mathcal{B})_{=} \subset \mathsf{Adj}(\mathcal{B})$. 
Projecting to the left and right gives rise to two component functors.
The left functor 
$\mathsf{left}_{\mathcal{B}} 
: \mathsf{Adj}(\mathcal{B}) \rightarrow \mathsf{Ord}(\mathcal{B})$
is the identity on objects and maps an adjunction
$\mathbf{g} : \mathbf{A}_0 \rightleftharpoons \mathbf{A}_1$
to its left component
$\mathsf{left}_{\mathcal{B}}(\mathbf{g}) = \check{\mathbf{g}} 
: \mathbf{A}_0 \rightarrow \mathbf{A}_1$.
The right functor 
$\mathsf{right}_{\mathcal{B}} 
: \mathsf{Adj}(\mathcal{B}) \rightarrow \mathsf{Ord}(\mathcal{B})$
is the identity on objects and maps an adjunction
$\mathbf{g} : \mathbf{A}_0 \rightleftharpoons \mathbf{A}_1$
to its right component
$\mathsf{right}_{\mathcal{B}}(\mathbf{g}) = \hat{\mathbf{g}}
: \mathbf{A}_1 \rightarrow \mathbf{A}_0$.
We use the same notation for the underlying components
$\mathsf{left}_{\mathcal{B}} = \mathsf{left}_{\mathcal{B}} \circ |\mbox{-}| 
: \mathsf{Adj}(\mathcal{B}) \rightarrow \mathcal{B}$
and
$\mathsf{right}_{\mathcal{B}} = \mathsf{right}_{\mathcal{B}} \circ |\mbox{-}| 
: \mathsf{Adj}(\mathcal{B}) \rightarrow \mathcal{B}$.
The order-enriched \emph{involution} isomorphism
${(\mbox{-})}_{\mathcal{B}}^{\propto} 
: \mathsf{Adj}(\mathcal{B})^{\mathrm{op}} \rightarrow \mathsf{Adj}(\mathcal{B})$
flips source/target and left/right:
${(\mbox{-})}_{\mathcal{B}}^{\propto} \!\circ\, \mathsf{left}_{\mathcal{B}} 
= \mathsf{right}_{\mathcal{B}}$ 
and
${(\mbox{-})}_{\mathcal{B}}^{\propto} \!\circ\, \mathsf{right}_{\mathcal{B}}^{\mathrm{op}} 
= \mathsf{left}_{\mathcal{B}}^{\mathrm{op}}$.

\paragraph{Interior/Closure.}
Let $\mathbf{g} : \mathbf{A}_0 \rightleftharpoons \mathbf{A}_1$ be any $\mathcal{B}$-adjunction between partial orders.
The \emph{closure} of $\mathbf{g}$ is the $\mathcal{B}$-monotonic endomorphism
$(\mbox{-})^{\bullet_{\mathbf{g}}} 
= \check{\mathbf{g}} \cdot \hat{\mathbf{g}}
: \mathbf{A}_0 \rightarrow \mathbf{A}_0$.
Closure is increasing $1_A \leq (\mbox{-})^{\bullet_{\mathbf{g}}}$ and idempotent $(\mbox{-})^{\bullet_{\mathbf{g}}} \cdot (\mbox{-})^{\bullet_{\mathbf{g}}} = (\mbox{-})^{\bullet_{\mathbf{g}}}$.
Idempotency is implied by the fact that
$\check{\mathbf{g}} \cdot \hat{\mathbf{g}} \cdot \check{\mathbf{g}} = \check{\mathbf{g}}$.
The closure equalizer diagram in $\mathsf{Ord}(\mathcal{B})$ is the parallel pair
$1_A, (\mbox{-})^{\bullet_{\mathbf{g}}} : \mathbf{A}_0 \rightarrow \mathbf{A}_0$.
The internal suborder of closed elements of $\mathbf{g}$
is defined to be the equalizer 
$\mathrm{incl}_0^{\mathbf{g}} : \mathsf{clo}(\mathbf{g}) \rightarrow \mathbf{A}_0$
of this diagram.
Being part of a limiting cone, 
$\mathrm{incl}_0^{\mathbf{g}} \cdot (\mbox{-})^{\bullet_{\mathbf{g}}} 
= \mathrm{incl}_0^{\mathbf{g}}$.
The closure of any $A$-element $a : 1 \rightarrow A$
is the $A$-element 
$a^{\bullet_{\mathbf{g}}} = a \cdot (\mbox{-})^{\bullet_{\mathbf{g}}} : 1 \rightarrow A$.
An element $a : 1 \rightarrow A_0$ is a closed element of $\mathbf{g}$
when it factors through $\mathsf{clo}(\mathbf{g})$;
that is,
there is an element $\bar{a} : 1 \rightarrow \mathsf{clo}(\mathbf{g})$ 
such that $a$ is equal to its inclusion $a = \bar{a} \cdot \mathrm{incl}_0^{\mathbf{g}}$;
or equivalently,
when $a$ is equal to its closure $a = a^{\bullet_{\mathbf{g}}}$);
or equivalently,
when $a$ is equal $a = b \cdot \hat{\mathbf{g}}$ 
to the image of some target element $b : 1 \rightarrow A_1$.
Dually,
the \emph{interior} of $\mathbf{g}$ is the $\mathcal{B}$-monotonic endomorphism
$(\mbox{-})^{\circ_{\mathbf{g}}}
= \hat{\mathbf{g}} \cdot \check{\mathbf{g}}
: \mathbf{A}_1 \rightarrow \mathbf{A}_1$.
Interior is decreasing $1_B \geq (\mbox{-})^{\circ_{\mathbf{g}}}$ and idempotent $(\mbox{-})^{\circ_{\mathbf{g}}} \cdot (\mbox{-})^{\circ_{\mathbf{g}}} = (\mbox{-})^{\circ_{\mathbf{g}}}$.
Idempotency is implied by the fact that
$\hat{\mathbf{g}} \cdot \check{\mathbf{g}} \cdot \hat{\mathbf{g}} = \hat{\mathbf{g}}$.
The interior equalizer diagram in $\mathsf{Ord}(\mathcal{B})$ is the parallel pair
$1_B, (\mbox{-})^{\circ_{\mathbf{g}}} : \mathbf{A}_1 \rightarrow \mathbf{A}_1$.
The internal suborder of open elements of $\mathbf{g}$
is defined to be the equalizer 
$\mathrm{incl}_1^{\mathbf{g}} : \mathsf{open}(\mathbf{g}) \rightarrow \mathbf{A}_1$
of this diagram.
Being part of a limiting cone, 
$\mathrm{incl}_1^{\mathbf{g}} \cdot (\mbox{-})^{\circ_{\mathbf{g}}} 
= \mathrm{incl}_1^{\mathbf{g}}$.
The interior of any $B$-element $b : 1 \rightarrow B$
is the $B$-element 
$b^{\circ_{\mathbf{g}}} = b \cdot (\mbox{-})^{\circ_{\mathbf{g}}} : 1 \rightarrow B$.
An element $b : 1 \rightarrow B$ is an open element of $\mathbf{g}$
when it factors through $\mathsf{open}(\mathbf{g})$;
that is,
there is an element $\tilde{b} : 1 \rightarrow \mathsf{open}(\mathbf{g})$ 
such that $b$ is equal to its inclusion $b = \tilde{b} \cdot \mathrm{incl}_1^{\mathbf{g}}$;
or equivalently,
when $b$ is equal to its interior $b = b^{\circ_{\mathbf{g}}}$;
or equivalently,
when $b$ is equal $b = a \cdot \check{\mathbf{g}}$ 
to the image of some source element $a : 1 \rightarrow A_0$.

\paragraph{Reflections/Coreflections.}
A $\mathcal{B}$-\emph{pseudo-reflection} is a $\mathcal{B}$-adjunction 
$\mathbf{g} : \mathbf{A}_0 \rightleftharpoons \mathbf{A}_1$
that satisfies the equivalence
$1_{B} \equiv (\mbox{-})^{\circ_{\mathbf{g}}}$.
The left adjoint of a pseudo-reflection is a pseudo-epimorphism,
and the right adjoint is a pseudo-monomorphism.
A $\mathcal{B}$-\emph{reflection} is a $\mathcal{B}$-pseudo-reflection that is strict: 
it satisfies the identity
$1_{B} = (\mbox{-})^{\circ_{\mathbf{g}}}$. 
The right adjoint of a reflection is an isotonic morphism.
If $\mathbf{g} : \mathbf{A}_0 \rightleftharpoons \mathbf{A}_1$ is a $\mathcal{B}$-reflection
and the source $\mathbf{A}_0$ is a partial order,
then the target $\mathbf{A}_1$ is also a partial order.
Let $\mathsf{Ref}(\mathcal{B})$ denote the morphism subclass of all $\mathcal{B}$-reflections.
A $\mathcal{B}$-\emph{pseudo-coreflection} is a $\mathcal{B}$-adjunction 
$\mathbf{g} : \mathbf{A}_0 \rightleftharpoons \mathbf{A}_1$
that satisfies the equivalence
$1_{A} \equiv (\mbox{-})^{\bullet_{\mathbf{g}}}$. 
The left adjoint of a pseudo-coreflection is a pseudo-monomorphism,
and the right adjoint is a pseudo-epiomorphism.
A $\mathcal{B}$-\emph{coreflection} is a $\mathcal{B}$-pseudo-coreflection that is strict:
it satisfies the identity
$1_{A} = (\mbox{-})^{\bullet_{\mathbf{g}}}$. 
The left adjoint of a coreflection is an isotonic morphism.
If $\mathbf{g} : \mathbf{A}_0 \rightleftharpoons \mathbf{A}_1$ is a $\mathcal{B}$-coreflection
and the target $\mathbf{A}_1$ is a partial order,
then the source $\mathbf{A}_0$ is also a partial order.
Let $\mathsf{Ref}^\propto(\mathcal{B})$ denote the morphism subclass of all $\mathcal{B}$-coreflections.
The involution of a $\mathcal{B}$-pseudo-reflection is a $\mathcal{B}$-pseudo-coreflection, and vice-versa.

Let $\mathbf{g} : \mathbf{A}_0 \rightleftharpoons \mathbf{A}_1$ be a $\mathcal{B}$-adjunction.
A \emph{bipole} (bipolar pair) $(a, b)$ is a pair consisting of
a closed element $a : 1 \rightarrow \mathsf{clo}(\mathbf{g})$ and 
an open element $b : 1 \rightarrow \mathsf{open}(\mathbf{g})$,
where $a = b \cdot \hat{\mathbf{g}}$ 
(equivalently, $a \cdot \check{\mathbf{g}} = b$).
Define the external bipolar order
$(a_1, b_1) \leq (a_2, b_2)$
when $a_1 \leq_{\mathbf{A}_0} a_2$
(equivalently, when $b_1 \leq_{\mathbf{A}_1} b_2$).
There is an internal representaton for this bipolar order.

\paragraph{The Polar Factorization.}

Consider the axis diagram in $\mathsf{Ord}(\mathcal{B})$
consisting of the two opspans
$\mathrm{incl}_0^{\mathbf{g}} : \mathsf{clo}(\mathbf{g}) \rightarrow A_0
\leftarrow \mathsf{open}(\mathbf{g}) : \hat{\mathbf{g}}_1$
and
$\check{\mathbf{g}}_0 : \mathsf{clo}(\mathbf{g}) \rightarrow A_1
\leftarrow \mathsf{open}(\mathbf{g}) : \mathrm{incl}_1^{\mathbf{g}}$.
The \emph{axis} preorder $\diamondsuit(\mathbf{g})$ is the pullback of this diagram.
It comes equipped with two pullback projections
$\tilde{\pi}_0^{\mathbf{g}} : \diamondsuit(\mathbf{g}) \rightarrow \mathsf{clo}(\mathbf{g})$
and
$\tilde{\pi}_1^{\mathbf{g}} : \diamondsuit(\mathbf{g}) \rightarrow \mathsf{open}(\mathbf{g})$.
These satisfy
$\tilde{\pi}_0^{\mathbf{g}} \cdot \mathrm{incl}_0^{\mathbf{g}}
= \tilde{\pi}_1^{\mathbf{g}} \cdot \hat{\mathbf{g}}_1$
and
$\tilde{\pi}_0^{\mathbf{g}} \cdot \check{\mathbf{g}}_0
= \tilde{\pi}_1^{\mathbf{g}} \cdot \mathrm{incl}_1^{\mathbf{g}}$.
Define the extended projections
$\pi_0^{\mathbf{g}}
= \tilde{\pi}_0^{\mathbf{g}} \cdot \mathrm{incl}_0^{\mathbf{g}} 
: \diamondsuit(\mathbf{g}) \rightarrow \mathsf{clo}(\mathbf{g}) \rightarrow A_0$
and
$\pi_1^{\mathbf{g}}
= \tilde{\pi}_1^{\mathbf{g}} \cdot \mathrm{incl}_1^{\mathbf{g}} 
: \diamondsuit(\mathbf{g}) \rightarrow \mathsf{open}(\mathbf{g}) \rightarrow A_1$.
Define the left adjoint source restriction
$\check{\mathbf{g}}_0 
\doteq \mathrm{incl}_0^{\mathbf{g}} \cdot \check{\mathbf{g}}
: \mathsf{clo}(\mathbf{g}) \rightarrow A_0 \rightarrow A_1$. 
Define the left adjoint target restriction
$\check{\mathbf{g}}_1 
\doteq \check{\mathbf{g}} \cdot (\mbox{-})_1^{\circ_{\mathbf{g}}}
: A_0 \rightarrow A_1 \rightarrow \mathsf{open}(\mathbf{g})$. 
The pair of monotonic morphisms
$(\mbox{-})_0^{\bullet_{\mathbf{g}}} : A_0 \rightarrow \mathsf{clo}(\mathbf{g})$
and
$\check{\mathbf{g}}_1 : A_0 \rightarrow \mathsf{open}(\mathbf{g})$
forms a cone for the axis diagram,
since
$(\mbox{-})_0^{\bullet_{\mathbf{g}}} \cdot \mathrm{incl}_0^{\mathbf{g}}
= (\mbox{-})^{\bullet_{\mathbf{g}}}
= \check{\mathbf{g}}_1 \cdot \hat{\mathbf{g}}_1$
and
$(\mbox{-})_0^{\bullet_{\mathbf{g}}} \cdot \check{\mathbf{g}}_0
= \check{\mathbf{g}}
= \check{\mathbf{g}}_1 \cdot \mathrm{incl}_1^{\mathbf{g}}$.
Let
$\xi_0^{\mathbf{g}} : A_0 \rightarrow \diamondsuit(\mathbf{g})$
denote the mediating monotonic morphism for this cone;
so that
$\xi_0^{\mathbf{g}}$ is the unique monotonic morphism
such that
$\xi_0^{\mathbf{g}} \cdot \tilde{\pi}_0^{\mathbf{g}}
= (\mbox{-})_0^{\bullet_{\mathbf{g}}}$
and
$\xi_0^{\mathbf{g}} \cdot \tilde{\pi}_1^{\mathbf{g}}
= \check{\mathbf{g}}_1$.
The source embedding/projection pair form a reflection
$\mathsf{ref}(\mathbf{g})
= \langle \xi_0^{\mathbf{g}}, \pi_0^{\mathbf{g}} \rangle
: \mathbf{A}_0 \rightleftharpoons \diamondsuit(\mathbf{g})$
called the \emph{extent reflection} of $g$.
Dually,
the pair of monotonic morphisms
$\hat{g}_0 : A_1 \rightarrow \mathsf{clo}(\mathbf{g})$
and
$(\mbox{-})_1^{\circ_{g}} : A_1 \rightarrow \mathsf{open}(\mathbf{g})$
forms a cone for the axis diagram.
Let
$\xi_1^{g} : A_1 \rightarrow \diamondsuit(\mathbf{g})$
denote the unique mediating monotonic morphism for this cone.
The target projection/embedding pair form a coreflection
$\mathsf{ref}^\propto(\mathbf{g})
= \langle \pi_1^{\mathbf{g}}, \xi_1^{\mathbf{g}} \rangle
: \diamondsuit(\mathbf{g}) \rightleftharpoons \mathbf{A}_1$
called the \emph{intent reflection} of $g$.
The original adjunction factors in terms of 
its extent reflection and intent coreflection
$\mathbf{g} = \mathsf{ref}(\mathbf{g})\circ \mathsf{ref}^\propto(\mathbf{g})$.
The quintuple
$( \mathbf{A}_0, 
\mathsf{ref}(\mathbf{g}), \diamondsuit(\mathbf{g}), \mathsf{ref}^\propto(\mathbf{g}), 
\mathbf{A}_1 )$
is called the \emph{polar factorization} of $\mathbf{g}$.
Since both source $\mathbf{A}_0$ and target $\mathbf{A}_1$ are partial orders, 
the axis $\diamondsuit(\mathbf{g})$ is also a partial order.

\begin{lemma} [Diagonalization]
Assume that we are given a commutative square
$\mathbf{A}_0 \stackrel{\mathbf{e}}{\rightleftharpoons}
\mathbf{C}_1 \stackrel{\mathbf{s}}{\rightleftharpoons} \mathbf{A}_1 
\;=\; 
\mathbf{A}_0 \stackrel{\mathbf{r}}{\rightleftharpoons}
\mathbf{C}_2 \stackrel{\mathbf{m}}{\rightleftharpoons} \mathbf{A}_1$
of adjunctions between partial orders,
with reflection $\mathbf{e}$ and coreflection $\mathbf{m}$.
Then there is a unique adjunction $\mathbf{d} : \mathbf{C}_1 \rightleftharpoons \mathbf{C}_2$
with $\mathbf{e} \circ \mathbf{d} = \mathbf{r}$ and $\mathbf{d} \circ \mathbf{m} = \mathbf{s}$.
\end{lemma}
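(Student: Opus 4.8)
The plan is to produce the diagonal adjunction by an explicit formula and then verify the two required equations by checking only one component of each, invoking uniqueness of adjoints for the other. First I would unpack the hypotheses at the level of components. Since composition of adjunctions is componentwise, the commuting square $\mathbf{e}\circ\mathbf{s}=\mathbf{r}\circ\mathbf{m}$ becomes the two $\mathcal{B}$-monotonic identities $\check{\mathbf{e}}\cdot\check{\mathbf{s}}=\check{\mathbf{r}}\cdot\check{\mathbf{m}}$ (left adjoints) and $\hat{\mathbf{s}}\cdot\hat{\mathbf{e}}=\hat{\mathbf{m}}\cdot\hat{\mathbf{r}}$ (right adjoints). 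Because $\mathbf{e}$ is a reflection it is strict, $\hat{\mathbf{e}}\cdot\check{\mathbf{e}}=1_{\mathbf{C}_1}$, and because $\mathbf{m}$ is a coreflection it is strict, $\check{\mathbf{m}}\cdot\hat{\mathbf{m}}=1_{\mathbf{C}_2}$; these two strictness identities are the engine of the whole argument.

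Next I would define the candidate diagonal by the mixed formula $\mathbf{d}=\langle\check{\mathbf{d}},\hat{\mathbf{d}}\rangle$ with left adjoint $\check{\mathbf{d}}=\hat{\mathbf{e}}\cdot\check{\mathbf{r}}:\mathbf{C}_1\rightarrow\mathbf{A}_0\rightarrow\mathbf{C}_2$ and right adjoint $\hat{\mathbf{d}}=\check{\mathbf{m}}\cdot\hat{\mathbf{s}}:\mathbf{C}_2\rightarrow\mathbf{A}_1\rightarrow\mathbf{C}_1$. Although this pairs the right adjoint of $\mathbf{e}$ with the left adjoint of $\mathbf{r}$ (and dually on the other side), so that it is not visibly a composite of adjunctions, I claim it is nonetheless an order adjunction, which I would establish through the closure/interior criterion. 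Computing the closure on $\mathbf{C}_1$ gives $\check{\mathbf{d}}\cdot\hat{\mathbf{d}}=\hat{\mathbf{e}}\cdot(\check{\mathbf{r}}\cdot\check{\mathbf{m}})\cdot\hat{\mathbf{s}}=\hat{\mathbf{e}}\cdot(\check{\mathbf{e}}\cdot\check{\mathbf{s}})\cdot\hat{\mathbf{s}}=(\hat{\mathbf{e}}\cdot\check{\mathbf{e}})\cdot(\check{\mathbf{s}}\cdot\hat{\mathbf{s}})=\check{\mathbf{s}}\cdot\hat{\mathbf{s}}$, which is the closure of $\mathbf{s}$ and hence $\geq 1_{\mathbf{C}_1}$; dually the interior on $\mathbf{C}_2$ collapses, via the right square and the coreflection identity, to $\hat{\mathbf{d}}\cdot\check{\mathbf{d}}=\hat{\mathbf{r}}\cdot\check{\mathbf{r}}$, the interior of $\mathbf{r}$, which is $\leq 1_{\mathbf{C}_2}$. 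Thus $\check{\mathbf{d}}\dashv\hat{\mathbf{d}}$ and $\mathbf{d}$ is a genuine adjunction.

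With $\mathbf{d}$ in hand I would verify each triangle on its convenient component. For $\mathbf{d}\circ\mathbf{m}=\mathbf{s}$ I compute the left component $\check{\mathbf{d}}\cdot\check{\mathbf{m}}=\hat{\mathbf{e}}\cdot(\check{\mathbf{r}}\cdot\check{\mathbf{m}})=\hat{\mathbf{e}}\cdot\check{\mathbf{e}}\cdot\check{\mathbf{s}}=\check{\mathbf{s}}$, so the left adjoint of $\mathbf{d}\circ\mathbf{m}$ agrees with that of $\mathbf{s}$; since adjoints between partial orders are unique, the right components then agree automatically and $\mathbf{d}\circ\mathbf{m}=\mathbf{s}$. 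Symmetrically, for $\mathbf{e}\circ\mathbf{d}=\mathbf{r}$ I compute the right component $\hat{\mathbf{d}}\cdot\hat{\mathbf{e}}=\check{\mathbf{m}}\cdot(\hat{\mathbf{s}}\cdot\hat{\mathbf{e}})=\check{\mathbf{m}}\cdot\hat{\mathbf{m}}\cdot\hat{\mathbf{r}}=\hat{\mathbf{r}}$, and uniqueness of adjoints again upgrades this to the full equation. Working on the other component in each case would only produce the inequalities $\check{\mathbf{e}}\cdot\check{\mathbf{d}}\geq\check{\mathbf{r}}$ and $\hat{\mathbf{m}}\cdot\hat{\mathbf{d}}\leq\hat{\mathbf{s}}$, so it is precisely the appeal to uniqueness of adjoints that avoids having to prove these two as equalities directly.

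Finally, for uniqueness, I would suppose $\mathbf{d}'$ also satisfies both equations. From $\mathbf{e}\circ\mathbf{d}'=\mathbf{r}$ I read off $\check{\mathbf{e}}\cdot\check{\mathbf{d}}'=\check{\mathbf{r}}$; precomposing with $\hat{\mathbf{e}}$ and using $\hat{\mathbf{e}}\cdot\check{\mathbf{e}}=1_{\mathbf{C}_1}$ forces $\check{\mathbf{d}}'=\hat{\mathbf{e}}\cdot\check{\mathbf{r}}=\check{\mathbf{d}}$, whence $\mathbf{d}'=\mathbf{d}$ by uniqueness of the right adjoint. The one genuinely non-obvious step, and the main obstacle, is guessing the mixed formula for $\mathbf{d}$ and recognizing that the strictness identities of the reflection and the coreflection are exactly what is needed to collapse $\check{\mathbf{d}}\cdot\hat{\mathbf{d}}$ and $\hat{\mathbf{d}}\cdot\check{\mathbf{d}}$ into a bona fide closure and interior; once that is noticed, every remaining verification is a one-line rewrite using the square and the two identities.
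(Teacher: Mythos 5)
Your proof is correct, and its core is the paper's own: you define the diagonal by exactly the paper's mixed formula $\check{\mathbf{d}} = \hat{\mathbf{e}} \cdot \check{\mathbf{r}}$, $\hat{\mathbf{d}} = \check{\mathbf{m}} \cdot \hat{\mathbf{s}}$, and you verify adjointness the same way, collapsing the closure to $\check{\mathbf{s}} \cdot \hat{\mathbf{s}} \geq 1_{\mathbf{C}_1}$ and the interior to $\hat{\mathbf{r}} \cdot \check{\mathbf{r}} \leq 1_{\mathbf{C}_2}$ using the two component squares and the two strictness identities. Where you diverge is in the verification of the triangles and of uniqueness. The paper first proves the dual expressions $\check{\mathbf{d}} = \check{\mathbf{s}} \cdot \hat{\mathbf{m}} = \hat{\mathbf{e}} \cdot \check{\mathbf{r}}$ and $\hat{\mathbf{d}} = \hat{\mathbf{r}} \cdot \check{\mathbf{e}} = \check{\mathbf{m}} \cdot \hat{\mathbf{s}}$, after which each of the four component equations $\check{\mathbf{e}} \cdot \check{\mathbf{d}} = \check{\mathbf{r}}$, $\hat{\mathbf{d}} \cdot \hat{\mathbf{e}} = \hat{\mathbf{r}}$, $\check{\mathbf{d}} \cdot \check{\mathbf{m}} = \check{\mathbf{s}}$, $\hat{\mathbf{m}} \cdot \hat{\mathbf{d}} = \hat{\mathbf{s}}$ is an outright equality obtained by choosing the convenient expression; uniqueness is likewise settled by computing both components of the competing adjunction. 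You instead keep a single expression for each adjoint, prove one component equation per triangle, and let uniqueness of adjoints between partial orders supply the other — correctly noting that direct computation with your expressions would only give the inequalities $\check{\mathbf{e}} \cdot \check{\mathbf{d}} \geq \check{\mathbf{r}}$ and $\hat{\mathbf{m}} \cdot \hat{\mathbf{d}} \leq \hat{\mathbf{s}}$. Your route is leaner (two computations instead of four, plus a principle the paper itself invokes elsewhere, e.g.\ in the caption of Fig.~\ref{hyperdoctrinal-diagram}); the paper's route stays entirely at the level of component identities and, as a bonus, exhibits the symmetric pair of formulas for $\mathbf{d}$, which it immediately reuses in defining the axis adjunction $\diamondsuit_{(\mathbf{a},\mathbf{b})}$ after the lemma. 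The only point to watch is that your uniqueness-of-adjoints appeals genuinely require antisymmetry of the orders involved — exactly the standing hypothesis that all four objects are partial orders, which you do flag.
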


\begin{proof} 
The necessary conditions give the definitions
$\check{\mathbf{d}} \doteq \check{\mathbf{s}} \cdot \hat{\mathbf{m}} = \hat{\mathbf{e}} \cdot \check{\mathbf{r}}$
and $\hat{\mathbf{d}} \doteq \hat{\mathbf{r}} \cdot \check{\mathbf{e}} = \check{\mathbf{m}} \cdot \hat{\mathbf{s}}$.
Existence follows from these definitions.

In more detail,
the fundamental adjointness property,
the special conditions for (co) reflections
and the above commutative diagram,
resolve into the following identities and inequalities:
$\hat{\mathbf{e}} \cdot \check{\mathbf{e}} = 1_{\mathbf{B}}$,
$1_{\mathbf{A}_0} \leq \check{\mathbf{e}} \cdot \hat{\mathbf{e}}$,
$\hat{\mathbf{s}} \cdot \check{\mathbf{s}} \leq 1_{\mathbf{A}_1}$,
$1_{\mathbf{B}} \leq \check{\mathbf{s}} \cdot \hat{\mathbf{s}}$,
$\hat{\mathbf{r}} \cdot \check{\mathbf{r}} \leq 1_{\mathbf{C}}$,
$1_{\mathbf{A}_0} \leq \check{\mathbf{r}} \cdot \hat{\mathbf{r}}$,
$\hat{\mathbf{m}} \cdot \check{\mathbf{m}} \leq 1_{\mathbf{A}_1}$,
$1_{\mathbf{C}} = \check{\mathbf{m}} \cdot \hat{\mathbf{m}}$,
$\check{\mathbf{e}} \cdot \check{\mathbf{s}} =
\check{\mathbf{r}} \cdot \check{\mathbf{m}}$,
and
$\hat{\mathbf{m}} \cdot \hat{\mathbf{r}} =
\hat{\mathbf{s}} \cdot \hat{\mathbf{e}}$.
By suitable pre- and post-composition we can prove the identities:
$\check{\mathbf{e}} \cdot \check{\mathbf{s}} \cdot \hat{\mathbf{m}} 
= \check{\mathbf{r}}$,
$\check{\mathbf{m}} \cdot \hat{\mathbf{s}} \cdot \hat{\mathbf{e}}
= \hat{\mathbf{r}}$,
$\hat{\mathbf{m}} \cdot \hat{\mathbf{r}} \cdot \check{\mathbf{e}} 
= \hat{\mathbf{s}}$
and
$\hat{\mathbf{e}} \cdot \check{\mathbf{r}} \cdot \check{\mathbf{m}} 
= \check{\mathbf{s}}$,
(and then)
$\check{\mathbf{s}} \cdot \hat{\mathbf{m}} 
= \hat{\mathbf{e}} \cdot \check{\mathbf{r}}$ and
$\hat{\mathbf{r}} \cdot \check{\mathbf{e}} 
= \check{\mathbf{m}} \cdot \hat{\mathbf{s}}$.

[{\bfseries Existence}]
Define the $\mathcal{B}$-morphisms
\underline{$\check{\mathbf{d}}
\doteq 
\check{\mathbf{s}} \cdot \hat{\mathbf{m}} 
= \hat{\mathbf{e}} \cdot \check{\mathbf{r}}$}
and
\underline{$\hat{\mathbf{d}}
\doteq 
\hat{\mathbf{r}} \cdot \check{\mathbf{e}} 
= \check{\mathbf{m}} \cdot \hat{\mathbf{s}}$}.
The properties
$\hat{\mathbf{d}} \cdot \check{\mathbf{d}}
= \check{\mathbf{m}} \cdot \hat{\mathbf{s}} \cdot \hat{\mathbf{e}} \cdot \check{\mathbf{r}}
= \hat{\mathbf{r}} \cdot \check{\mathbf{r}}
\leq 1_{\mathbf{C}}$ and
$\check{\mathbf{d}} \cdot \hat{\mathbf{d}}
= \check{\mathbf{s}} \cdot \hat{\mathbf{m}} \cdot \hat{\mathbf{r}} \cdot \check{\mathbf{e}}
= \check{\mathbf{s}} \cdot \hat{\mathbf{s}}
\geq 1_{\mathbf{B}}$ 
show that
$\mathbf{d} = \langle \check{\mathbf{d}}, \hat{\mathbf{d}} \rangle 
: \mathbf{B} \rightleftharpoons \mathbf{C}$
is a $\mathcal{B}$-adjunction.
The properties
$\check{\mathbf{d}} \cdot \check{\mathbf{m}}
= \hat{\mathbf{e}} \cdot \check{\mathbf{r}} \cdot \check{\mathbf{m}}
= \check{\mathbf{s}}$
and
$\hat{\mathbf{m}} \cdot \hat{\mathbf{d}}
= \hat{\mathbf{m}} \cdot \hat{\mathbf{r}} \cdot \check{\mathbf{e}}
= \hat{\mathbf{s}}$
show that
$\mathbf{d}$
satisfies the required identity
$\mathbf{d} \circ \mathbf{m} = \mathbf{s}$.
The properties
$\check{\mathbf{e}} \cdot \check{\mathbf{d}}
= \check{\mathbf{e}} \cdot \check{\mathbf{s}} \cdot \hat{\mathbf{m}}
= \check{\mathbf{r}}$
and
$\hat{\mathbf{d}} \cdot \hat{\mathbf{e}}
= \check{\mathbf{m}} \cdot \hat{\mathbf{s}} \cdot \hat{\mathbf{e}}
= \hat{\mathbf{r}}$
show that
$\mathbf{d}$
satisfies the required identity
$\mathbf{e} \circ \mathbf{d} = \mathbf{r}$.

[{\bfseries Uniqueness}]
Suppose $\mathbf{b} = \langle \check{\mathbf{b}}, \hat{\mathbf{b}} \rangle
: \mathbf{B} \rightleftharpoons \mathbf{C}$
is another $\mathcal{B}$-adjunction satisfying the require identities
$\mathbf{e} \circ \mathbf{b} = \mathbf{r}$
and $\mathbf{b} \circ \mathbf{m} = \mathbf{s}$.
These identities resolve to the identities
$\check{\mathbf{e}} \cdot \check{\mathbf{b}} = \check{\mathbf{r}}$,
$\hat{\mathbf{b}} \cdot \hat{\mathbf{e}} = \hat{\mathbf{r}}$,
$\check{\mathbf{b}} \cdot \check{\mathbf{m}} = \check{\mathbf{s}}$,
and
$\hat{\mathbf{m}} \cdot \hat{\mathbf{b}} = \hat{\mathbf{s}}$.
Hence,
$\check{\mathbf{b}}
= \check{\mathbf{b}} \cdot \check{\mathbf{m}} \cdot \hat{\mathbf{m}}
= \check{\mathbf{s}} \cdot \hat{\mathbf{m}}
= \check{\mathbf{d}}$,
$\hat{\mathbf{b}}
= \hat{\mathbf{b}} \cdot \hat{\mathbf{e}} \cdot \check{\mathbf{e}}
= \hat{\mathbf{r}} \cdot \check{\mathbf{e}}
= \hat{\mathbf{d}}$
and thus
$\mathbf{b} = \mathbf{d}$.
\qed
\end{proof}

\begin{lemma} [Polar Factorization]
The classes $\mathsf{Ref}(\mathcal{B})$ and $\mathsf{Ref}(\mathcal{B})^\propto$ of reflections and coreflections form a factorization system for $\mathsf{Adj}(\mathcal{B})_{=}$.
The polar factorization makes this a factorization system with choice.
\end{lemma}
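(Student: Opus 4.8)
The plan is to verify directly the three defining conditions of a factorization system from Section~\ref{sec:factorization:systems}, instantiated with $\mathcal{C} = \mathsf{Adj}(\mathcal{B})_{=}$, $\mathcal{E} = \mathsf{Ref}(\mathcal{B})$ and $\mathcal{M} = \mathsf{Ref}(\mathcal{B})^\propto$, where $\mathcal{C}$-composition is composition of adjunctions $\circ$. For the \textbf{subcategories} condition I would first observe that an adjunction $\mathbf{g}$ is a $\mathcal{C}$-isomorphism exactly when $\check{\mathbf{g}}$ and $\hat{\mathbf{g}}$ are mutually inverse, so that both its interior $\hat{\mathbf{g}} \cdot \check{\mathbf{g}}$ and its closure $\check{\mathbf{g}} \cdot \hat{\mathbf{g}}$ are identities; by the strict definitions this places $\mathbf{g}$ simultaneously in $\mathsf{Ref}(\mathcal{B})$ and $\mathsf{Ref}(\mathcal{B})^\propto$. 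Closure under $\circ$ then follows by a short computation using $\check{(\mathbf{g} \circ \mathbf{h})} = \check{\mathbf{g}} \cdot \check{\mathbf{h}}$ and $\hat{(\mathbf{g} \circ \mathbf{h})} = \hat{\mathbf{h}} \cdot \hat{\mathbf{g}}$: the interior of a composite of reflections collapses as $\hat{\mathbf{h}} \cdot (\hat{\mathbf{g}} \cdot \check{\mathbf{g}}) \cdot \check{\mathbf{h}} = \hat{\mathbf{h}} \cdot \check{\mathbf{h}} = 1$, and dually the closure of a composite of coreflections collapses to an identity.

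For \textbf{existence with choice} I would invoke the polar factorization construction established above. For any $\mathbf{g} : \mathbf{A}_0 \rightleftharpoons \mathbf{A}_1$ between partial orders, that construction produces the extent reflection $\mathsf{ref}(\mathbf{g}) \in \mathsf{Ref}(\mathcal{B})$ and the intent coreflection $\mathsf{ref}^\propto(\mathbf{g}) \in \mathsf{Ref}(\mathcal{B})^\propto$ through the axis $\diamondsuit(\mathbf{g})$, with $\mathbf{g} = \mathsf{ref}(\mathbf{g}) \circ \mathsf{ref}^\propto(\mathbf{g})$; and since both $\mathbf{A}_0$ and $\mathbf{A}_1$ are partial orders, $\diamondsuit(\mathbf{g})$ is again a partial order, so the factorization lies entirely within $\mathsf{Adj}(\mathcal{B})_{=}$. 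Because this factorization is defined canonically from $\mathbf{g}$, it is itself a choice function assigning to each morphism a specified factorization, which is precisely what upgrades the system to one \emph{with choice}.

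The \textbf{diagonalization} condition is exactly the content of the preceding Diagonalization Lemma. A commutative square $\mathbf{e} \circ \mathbf{s} = \mathbf{r} \circ \mathbf{m}$ with $\mathbf{e}$ a reflection and $\mathbf{m}$ a coreflection is its hypothesis, and the unique adjunction $\mathbf{d}$ satisfying $\mathbf{e} \circ \mathbf{d} = \mathbf{r}$ and $\mathbf{d} \circ \mathbf{m} = \mathbf{s}$ is its conclusion; under the identification of the four corners $A = \mathbf{A}_0$, $B = \mathbf{C}_1$, $C = \mathbf{C}_2$, $D = \mathbf{A}_1$ this matches the abstract diagonalization condition verbatim. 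Uniqueness of the filler, and hence uniqueness of the factorization up to isomorphism, is inherited directly from that lemma.

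The main obstacle is not any one computation but the bookkeeping of making the abstract axioms align cleanly with the concrete adjunction data: one must confirm that the orientation of the diagonalization square in the general definition ($e$ in the upper-left, $m$ in the lower-right) agrees with the reflection/coreflection placement in the Diagonalization Lemma, and one must track that every object produced along the way — the axis $\diamondsuit(\mathbf{g})$ in the factorization and the diagonal filler $\mathbf{d}$ — is again a partial order, so that the entire argument stays inside $\mathsf{Adj}(\mathcal{B})_{=}$ rather than slipping into the larger $\mathsf{Adj}(\mathcal{B})$. Once these alignments are in place, the two substantive ingredients (the polar factorization and the Diagonalization Lemma) supply existence and diagonalization, and the subcategory condition is the only part requiring fresh, though routine, verification.
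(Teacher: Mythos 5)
Your proposal is correct and takes essentially the same route as the paper: the paper's own proof is literally ``the previous discussion and lemma,'' i.e., existence with choice is supplied by the polar factorization construction (which stays inside $\mathsf{Adj}(\mathcal{B})_{=}$ because the axis $\diamondsuit(\mathbf{g})$ of an adjunction between partial orders is again a partial order), diagonalization is the preceding Diagonalization Lemma, and the subcategory conditions are left as routine. Your write-up simply makes these three verifications explicit, including the short closure/interior computations for isomorphisms and composites that the paper leaves implicit.
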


\begin{proof} 
The previous discussion and lemma.
$\Box$
\end{proof}

\noindent
With this result,
we can specialize the discussion of section~\ref{sec:factorization:systems}
to the case $\mathcal{C} = \mathsf{Adj}(\mathcal{B})_{=}$.
The arrow category $\mathsf{Adj}(\mathcal{B})_{=}^{\mathsf{2}}$
has adjunctions
$(\mathbf{A}, \mathbf{g}, \mathbf{B})$
as objects
and pairs of adjunctions
$(\mathbf{a}, \mathbf{b}) : (\mathbf{A}_1, \mathbf{g}_1, \mathbf{B}_1) \rightarrow (\mathbf{A}_2, \mathbf{g}_2, \mathbf{B}_2)$ 
forming a commutative diagram $\mathbf{a} \circ \mathbf{g}_2 = \mathbf{g}_1 \circ \mathbf{b}$ as morphisms.
The factorization category 
$\mathsf{Ref}(\mathcal{B}) \odot \mathsf{Ref}(\mathcal{B})^\propto$
has reflection-coreflection factorizations 
$(\mathbf{A}, \mathbf{e}, \mathbf{C}, \mathbf{m}, \mathbf{B})$
as objects
and triples of adjunctions
$(\mathbf{a}, \mathbf{c}, \mathbf{b}) : (\mathbf{A}_1, \mathbf{e}_1, \mathbf{C_1}, \mathbf{m}_1, \mathbf{B}_1) \rightarrow (\mathbf{A}_2, \mathbf{e}_2, \mathbf{C}_2, \mathbf{m}_2, \mathbf{B}_2)$
forming commutative diagrams 
$\mathbf{a} \circ \mathbf{e}_2 = \mathbf{e}_1 \circ \mathbf{c}$ 
and $\mathbf{c} \circ \mathbf{m}_2 = \mathbf{m}_1 \circ \mathbf{b}$ as morphisms.
The polar factorization functor
$\div_{\mathsf{Adj}(\mathcal{B})_{=}} 
: \mathsf{Adj}(\mathcal{B})_{=}^{\mathsf{2}} \rightarrow
\mathsf{Ref}(\mathcal{B}) \odot \mathsf{Ref}(\mathcal{B})^\propto$
maps an adjunction $(\mathbf{A}, \mathbf{g}, \mathbf{B})$ 
to its polar factorization 
$\div_{\mathsf{Adj}(\mathcal{B})_{=}}(\mathbf{A}, \mathbf{g}, \mathbf{B})
= (\mathbf{A}, \mathsf{ref}_{\mathcal{B}}(\mathbf{g}), \diamondsuit(\mathbf{g}), \mathsf{ref}_{\mathcal{B}}^\propto(\mathbf{g}), \mathbf{B})$,
and maps a morphism of adjunctions
$(\mathbf{a}, \mathbf{b}) : (\mathbf{A}_1, \mathbf{g}_1, \mathbf{B}_1) \rightarrow (\mathbf{A}_2, \mathbf{g}_2, \mathbf{B_2})$ 
to a morphism of polar factorizations
$\div_{\mathsf{Adj}(\mathcal{B})_{=}}(\mathbf{a}, \mathbf{b})
= (\mathbf{a}, \diamondsuit_{(\mathbf{a}, \mathbf{b})}, \mathbf{b}) 
: \div_{\mathsf{Adj}(\mathcal{B})_{=}}(\mathbf{A}_1, \mathbf{g}_1, \mathbf{B_1})
\rightarrow 
\div_{\mathsf{Adj}(\mathcal{B})_{=}}(\mathbf{A}_2, \mathbf{g}_2, \mathbf{B}_2)$,
where the axis adjunction
$\diamondsuit_{(\mathbf{a}, \mathbf{b})}
: \diamondsuit(\mathbf{g}_1) \rightleftharpoons \diamondsuit(\mathbf{g}_2)$
is given by diagonalization of the commutative square
$\mathsf{ref}_{\mathcal{B}}(\mathbf{g}_1) 
\circ
\left( \mathsf{ref}_{\mathcal{B}}^\propto(\mathbf{g}_1) \circ \mathbf{b} \right)
= 
\left( \mathbf{a} \circ \mathsf{ref}_{\mathcal{B}}(\mathbf{g}_2) \right)
\circ
\mathsf{ref}_{\mathcal{B}}^\propto(\mathbf{g}_2)$.
The axis 
$\diamondsuit_{(\mathbf{a}, \mathbf{b})}
= \langle \check{\diamondsuit}_{(\mathbf{a}, \mathbf{b})}, \hat{\diamondsuit}_{(\mathbf{a}, \mathbf{b})} \rangle$ 
is defined as follows.
\begin{center}
$\begin{array}{r@{\hspace{5pt}\doteq\hspace{5pt}}c@{\hspace{5pt}=\hspace{5pt}}c}
\check{\diamondsuit}_{(\mathbf{a}, \mathbf{b})}
& \pi_1^{\mathbf{g}_1} \cdot \check{\mathbf{b}} \cdot \xi_1^{\mathbf{g}_2}
& \pi_0^{\mathbf{g}_1} \cdot \check{\mathbf{a}} \cdot \xi_0^{\mathbf{g}_2}
: \diamondsuit(\mathbf{g}_1) \rightarrow \diamondsuit(\mathbf{g}_2)
\\
\hat{\diamondsuit}_{(\mathbf{a}, \mathbf{b})} 
& \pi_0^{\mathbf{g}_2} \cdot \hat{\mathbf{a}} \cdot \xi_0^{\mathbf{g_1}} 
& \pi_1^{\mathbf{g}_2} \cdot \hat{\mathbf{b}} \cdot \xi_1^{\mathbf{g}_1}
: \diamondsuit(\mathbf{g}_2) \rightarrow \diamondsuit(\mathbf{g}_1)
\end{array}$
\end{center}
Hence,
to compute either adjoint, 
first project to either source or target order,
next use the corresponding component adjoint,
and finally embed from the corresponding order.

\begin{theorem} [Special Equivalence] \label{special-equivalence}
The $\mathsf{Adj}(\mathcal{B})_{=}$-arrow category
is equivalent (Fig.~\ref{conceptual-structure}) to
the $\langle \mathsf{Ref}(\mathcal{B}), \mathsf{Ref}(\mathcal{B})^\propto \rangle$-factorization category
\[\mathsf{Adj}(\mathcal{B})_{=}^{\mathsf{2}} \equiv \mathsf{Ref}(\mathcal{B}) \odot \mathsf{Ref}(\mathcal{B})^\propto.\]
\end{theorem}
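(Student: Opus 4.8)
The plan is to apply the General Equivalence result (Thm.~\ref{general-equivalence}) to the specialized situation $\mathcal{C} = \mathsf{Adj}(\mathcal{B})_{=}$, $\mathcal{E} = \mathsf{Ref}(\mathcal{B})$ and $\mathcal{M} = \mathsf{Ref}(\mathcal{B})^\propto$. Since Theorem~\ref{general-equivalence} already asserts the equivalence $\mathcal{C}^{\mathsf{2}} \equiv \mathcal{E} \odot \mathcal{M}$ for \emph{any} category $\mathcal{C}$ carrying an $\langle \mathcal{E}, \mathcal{M} \rangle$-factorization system with choice, the entire content of the present theorem reduces to verifying that this hypothesis holds for the polar data. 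First I would invoke the Polar Factorization Lemma, which states precisely that reflections and coreflections form a factorization system with choice for $\mathsf{Adj}(\mathcal{B})_{=}$, the chosen factorization of an adjunction $\mathbf{g} : \mathbf{A}_0 \rightleftharpoons \mathbf{A}_1$ being its polar factorization $(\mathbf{A}_0, \mathsf{ref}(\mathbf{g}), \diamondsuit(\mathbf{g}), \mathsf{ref}^\propto(\mathbf{g}), \mathbf{A}_1)$. This single citation discharges all four requirements: the preceding development established that $\mathsf{Ref}(\mathcal{B})$ and $\mathsf{Ref}(\mathcal{B})^\propto$ each contain the isomorphisms and are closed under composition (Subcategories), that every adjunction between partial orders factors as extent reflection followed by intent coreflection (Existence), and, via the Diagonalization Lemma, that the mediating diagonal adjunction is unique (Diagonalization, hence Choice).

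With this hypothesis secured, the two mediating functors of the equivalence are simply the instances $\div_{\mathsf{Adj}(\mathcal{B})_{=}}$ and $\circ_{\mathsf{Adj}(\mathcal{B})_{=}}$ of the generic factorization and composition functors. The abstract theorem already supplies both halves of the equivalence, $\div_{\mathcal{C}} \circ \circ_{\mathcal{C}} = \mathsf{id}_{\mathcal{C}}$ and $\circ_{\mathcal{C}} \circ \div_{\mathcal{C}} \cong \mathsf{id}_{\mathcal{E} \odot \mathcal{M}}$, so these transfer verbatim after specialization. The only specialization-specific point I would check is that the diagonalization used to define $\div_{\mathsf{Adj}(\mathcal{B})_{=}}$ on a morphism $(\mathbf{a}, \mathbf{b})$ coincides with the explicit axis adjunction $\diamondsuit_{(\mathbf{a}, \mathbf{b})}$ exhibited above; but this is exactly the unique diagonal produced by the Diagonalization Lemma applied to the square $\mathsf{ref}(\mathbf{g}_1) \circ (\mathsf{ref}^\propto(\mathbf{g}_1) \circ \mathbf{b}) = (\mathbf{a} \circ \mathsf{ref}(\mathbf{g}_2)) \circ \mathsf{ref}^\propto(\mathbf{g}_2)$, and its uniqueness is precisely what renders $\div_{\mathsf{Adj}(\mathcal{B})_{=}}$ functorial.

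The hard part is therefore already finished before the theorem is even stated: the genuine labor lay in constructing the polar factorization (the axis preorder as a pullback of the two opspans, together with the extent reflection and intent coreflection) and in proving the Diagonalization Lemma, which jointly constitute the factorization-system-with-choice hypothesis. Consequently I expect no real obstacle in the proof of the theorem itself --- it is a one-line instantiation of Thm.~\ref{general-equivalence} --- and I would close by remarking, as in the general case, that the equivalence is mediated by the opposite motions of polar factorization and composition of adjunctions.
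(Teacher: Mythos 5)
Your proposal is correct and follows exactly the paper's own route: the paper presents Thm.~\ref{special-equivalence} as an instantiation of the General Equivalence (Thm.~\ref{general-equivalence}) at $\mathcal{C} = \mathsf{Adj}(\mathcal{B})_{=}$, with the factorization-system-with-choice hypothesis supplied by the Polar Factorization Lemma (itself resting on the polar factorization construction and the Diagonalization Lemma), and the equivalence mediated by the functors $\div_{\mathsf{Adj}(\mathcal{B})_{=}}$ and $\circ_{\mathsf{Adj}(\mathcal{B})_{=}}$. Your observation that the morphism-level diagonal of $\div_{\mathsf{Adj}(\mathcal{B})_{=}}$ is precisely the axis adjunction $\diamondsuit_{(\mathbf{a},\mathbf{b})}$ matches the paper's explicit description of the polar factorization functor.
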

This equivalence,
mediated by polar factorization and composition,
is a special case for adjunctions of the general equivalence
(Thm.~\ref{general-equivalence}).




\section{Classification Structures}\label{sec:classification:structures}

\subsection{Classifications}\label{subsec:classifications}

A \emph{classification structure} $\mathbf{A}$ in (internal to) a topos $\mathcal{B}$ 
has two components,
a $\mathcal{B}$-object of \emph{instances} $\mathsf{inst}(\mathbf{A})$ 
and a $\mathcal{B}$-object of \emph{types} $\mathsf{typ}(\mathbf{A})$.
Classification structures can alternately be defined in three equivalent versions:
a relation version, a morphism version or an adjunction version.
The relation version of classification structure
is constrained by a binary classification relation 
$\models_{\mathbf{A}} : \mathsf{inst}(\mathbf{A}) \rightharpoondown \mathsf{typ}(\mathbf{A})$.
We can use the abbreviation $\mathbf{A}$ for the relation $\models_{\mathbf{A}}$.
This version is known as a formal context in FCA \cite{ganter:wille:99},
where instances are called formal objects,
types are called formal attributes,
and the classification relation is called an incidence relation.
The morphism version of classification structure
is constrained by a pair of dual $\mathcal{B}$-morphisms:
the \emph{intent} morphism
$\mathsf{int}_{\mathbf{A}} = {\mathbf{A}}^{\!01} 
: \mathsf{inst}(\mathbf{A}) \rightarrow {\wp}\,\mathsf{typ}(\mathbf{A})$
and the \emph{extent} morphism
$\mathsf{ext}_{\mathbf{A}} = {\mathbf{A}}^{\!10} 
: \mathsf{typ}(\mathbf{A}) \rightarrow {\wp}\,\mathsf{inst}(\mathbf{A})$.
The adjunction version of classification structure
is constrained by a pair of dual \emph{derivation} monotonic morphisms:
forward derivation
${\mathbf{A}}^{\Rightarrow}
= (\exists\mathsf{int}_{\mathbf{A}})^{\propto} \cdot {\cap}_{\mathsf{typ}(\mathbf{A})}
: {\wp}\,\mathsf{inst}(\mathbf{A})^{\propto} \rightarrow {{\wp}\,\mathsf{typ}(\mathbf{B})}$
mapping instance subobjects (extents) to type subobjects (intents), 
and reverse derivation
${\mathbf{A}}^{\Leftarrow}
= (\exists\mathsf{ext}_{\mathbf{A}})^{\propto} \cdot {\cap}_{\mathsf{inst}(\mathbf{A})}
: {\wp}\,\mathsf{inst}(\mathbf{B})^{\propto} \rightarrow {{\wp}\,\mathsf{typ}(\mathbf{A})}$
mapping intents to extents.
Derivation forms a order adjunction
$\mathsf{deriv}_{\mathbf{A}} 
= \langle {{\mathbf{A}}^{\!\Rightarrow}}^{\propto}, {\mathbf{A}}^{\!\Leftarrow} \rangle
: {\wp}\,\mathsf{inst}(\mathbf{A}) \rightleftharpoons {{\wp}\,\mathsf{typ}(\mathbf{B})}^{\propto}$
from the complete lattice of extents to the complete (opposite) lattice of intents.
Thus,
a classification $\mathbf{A}$ defines an object
$\mathsf{incl}(\mathbf{A})
=
\left( {\wp}\mathsf{inst}(\mathbf{A}), 
\mathsf{deriv}_{\mathbf{A}}, 
{{\wp}\mathsf{typ}(\mathbf{A})}^{\propto} \right)$
in the arrow subcategory $\mathsf{Adj}(\mathcal{B})_{=}^{\mathsf{2}}$.

\begin{sloppypar}
Application of polar factorization to derivation,
results in the conceptual structure 
$\mathsf{clg}(\mathbf{A}) = \div_{\mathsf{Adj}_{=}}(\mathsf{incl}(\mathbf{A}))$,
which is visualized as
\[
{\wp}\,\mathsf{inst}(\mathbf{A}) 
\stackrel{\mathsf{extent}_{\mathsf{clg}(\mathbf{A})}}{\rightleftharpoons} 
\mathsf{axis}(\mathbf{A})
\stackrel{\mathsf{intent}_{\mathsf{clg}(\mathbf{A})}}{\rightleftharpoons}
{\wp}\,{\mathsf{typ}(\mathbf{A})}^{\propto}
.\]
The axis of derivation $\mathsf{axis}(\mathbf{A}) = \diamondsuit(\mathsf{deriv}(\mathbf{A}))$ 
is called the \emph{concept lattice} of $\mathbf{A}$.
A bipole of derivation, called a \emph{formal concept},
is a pair $(X, Y)$ consisting of
a closed extent $X \in^1 \mathsf{clo}(\mathsf{deriv}(\mathbf{A}))$ and 
an open intent $Y \in^1 \mathsf{open}(\mathsf{deriv}(\mathbf{A}))$,
where $X = Y \cdot {\models}_{\mathbf{A}}^{\Leftarrow}$ 
(equivalently, $Y = X \cdot {\models}_{\mathbf{A}}^{\Rightarrow}$).
The \emph{extent} reflection of derivation
$\mathsf{extent}_{\mathsf{clg}(\mathbf{A})} 
= \langle \xi_0, \pi_0 \rangle
= \mathsf{ref}(\mathsf{deriv}(\mathbf{A}))
: {\wp}\,\mathsf{inst}(\mathbf{A}) \rightleftharpoons \mathsf{axis}(\mathbf{A})$
consists of 
the source embedding
$\xi_0 : {\wp}\,\mathsf{inst}(\mathbf{A}) \rightarrow \mathsf{axis}(\mathbf{A})$
and the projection 
$\pi_0 : \mathsf{axis}(\mathbf{A}) \rightarrow {\wp}\,\mathsf{inst}(\mathbf{A})$.
The \emph{intent} coreflection of derivation
$\mathsf{intent}_{\mathsf{clg}(\mathbf{A})} 
= \langle \pi_1^{\propto}, \xi_1 \rangle
= \mathsf{ref}^{\propto}(\mathsf{deriv}(\mathbf{A}))
: \mathsf{axis}(\mathbf{A}) \rightleftharpoons {\wp}\,\mathsf{typ}(\mathbf{A})^{\propto}$
consists of the projection 
$\pi_1^{\propto} : \mathsf{axis}(\mathbf{A}) \rightarrow {\wp}\,\mathsf{typ}(\mathbf{A})^{\propto}$,
and the target embedding
$\xi_1 : {\wp}\,\mathsf{typ}(\mathbf{A})^{\propto} \rightarrow \mathsf{axis}(\mathbf{A})$.
The preorder $\mathsf{axis}(\mathbf{A})$ is a complete lattice,
with join and meet defined by
\begin{center}
$\begin{array}{r@{\hspace{5pt}=\hspace{5pt}}ll}
\vee_{\mathbf{A}} 
& \exists {\pi}_0 \cdot \cup_{\mathsf{inst}(\mathbf{A})} \cdot\, {\xi}_0
& : {\wp}\,\mathsf{axis}(\mathbf{A}) \rightarrow {\wp}\,{\wp}\,\mathsf{inst}(\mathbf{A})  
\rightarrow {\wp}\,\mathsf{inst}(\mathbf{A}) \rightarrow \mathsf{axis}(\mathbf{A})
\\
& \exists{\pi}_1 \cdot \cap_{\mathsf{typ}(\mathbf{A})}^{\propto} \cdot\, {\xi}_1
& : {\wp}\,\mathsf{axis}(\mathbf{A}) \rightarrow {\wp}\,{\wp}\,\mathsf{typ}(\mathbf{A})  
\rightarrow {\wp}\,\mathsf{typ}(\mathbf{A})^{\propto} \rightarrow \mathsf{axis}(\mathbf{A})
\\
\wedge_{\mathbf{A}} 
& {(\exists{\pi}_0)}^{\propto} \cdot \cap_{\mathsf{inst}(\mathbf{A})} \cdot\, {\xi}_0
& : {\wp}\,\mathsf{axis}(\mathbf{A})
\rightarrow {\left({\wp}\,{\wp}\,\mathsf{inst}(\mathbf{A})\right)}^{\propto}  
\rightarrow {\wp}\,\mathsf{inst}(\mathbf{A}) \rightarrow \mathsf{axis}(\mathbf{A})
\\
& {(\exists{\pi}_1)}^{\propto} \cdot \cup_{\mathsf{typ}(\mathbf{A})}^{\propto} \cdot\, {\xi}_1
& : {({\wp}\,\mathsf{axis}(\mathbf{A}))}^{\propto}
\rightarrow {\left({\wp}\,{\wp}\,\mathsf{typ}(\mathbf{A})\right)}^{\propto}  
\rightarrow {\wp}\,\mathsf{typ}(\mathbf{A})^{\propto} \rightarrow \mathsf{axis}(\mathbf{A})
\end{array}$ 
\end{center}
In summary,
this conceptual structure is the polar factorization of the classification structure $\mathbf{A}$ in its adjunction version.
We can recover the original classification structure by composition:
$\mathsf{deriv}_{\mathbf{A}} 
= \mathsf{extent}_{\mathsf{clg}(\mathbf{A})} 
\circ \mathsf{intent}_{\mathsf{clg}(\mathbf{A})}$. 
\end{sloppypar}

\subsection{Infomorphisms}\label{subsec:infomorphisms}

A morphism of classification structures
$\mathbf{f} 
= \langle \mathsf{inst}(\mathbf{f}), \mathsf{typ}(\mathbf{f}) \rangle 
= \langle \check{\mathbf{f}}, \hat{\mathbf{f}} \rangle 
: \mathbf{A} \rightleftharpoons \mathbf{B}$
called an \emph{infomorphism}, 
consists of 
an \emph{instance} $\mathcal{E}$-morphism 
$\mathsf{inst}(\mathbf{f}) = \check{\mathbf{f}} 
: \mathsf{inst}(\mathbf{A}) \leftarrow \mathsf{inst}(\mathbf{B})$ 
and 
a \emph{type} $\mathcal{B}$-morphism 
$\mathsf{typ}(\mathbf{f}) = \hat{\mathbf{f}} 
: \mathsf{typ}(\mathbf{A}) \rightarrow \mathsf{typ}(\mathbf{B})$.
Infomorphisms can alternately be defined in three isomorphic versions:
a relation version, a morphism version or an adjunction version.
Each version expresses the invariance of classification under change of notation. 
The relation version of infomorphism \cite{barwise:seligman:97}
satisfies the fundamental condition
\begin{description}
\item [fundamental:]
$\mathbf{A}(\mathsf{inst}(\mathbf{f}),1_{\mathsf{typ}(\mathbf{A})})
= \mathbf{B}(1_{\mathsf{inst}(\mathbf{B})},\mathsf{typ}(\mathbf{f}))$; 
or equivalently,
\item [external:]
$\mathbf{A}(x{\cdot}\mathsf{inst}(\mathbf{f}), y)
= \mathbf{B}(x, y{\cdot}\mathsf{typ}(\mathbf{f}))$
for every instance element
$x \in^C \mathsf{inst}(\mathbf{B})$ 
and type element $y \in^C \mathsf{typ}(\mathbf{A})$.
\end{description}
The morphism version of infomorphism
satisfies the two naturality conditions
$\mathsf{ext}_{\mathbf{A}} \cdot {\mathsf{inst}(\mathbf{f})}^{{-}1} 
= \mathsf{typ}(\mathbf{f}) \cdot \mathsf{ext}_{\mathbf{B}}$
and
$\mathsf{int}_{\mathbf{B}} \cdot {\mathsf{typ}(\mathbf{f})}^{{-}1} 
= \mathsf{inst}(\mathbf{f}) \cdot \mathsf{int}_{\mathbf{A}}$.
The fundamental condition for infomorphisms
can be extended (existentionally) in two ways to extents and intents.
\begin{itemize}
\item First,
fix source type $y \in \mathsf{typ}(\mathbf{A})$ and let 
instance $x$ universally vary over some target extent $X \subseteq \mathsf{inst}(\mathbf{B})$.
Then,
the fundamental condition translates to
$y \in \mathbf{A}^{\Rightarrow}(\exists\mathsf{inst}(\mathbf{f})(X))
\;\;\mbox{iff}\;\; y \in {\mathsf{typ}(\mathbf{f})}^{-1}(\mathbf{B}^{\Rightarrow}(X))$
for any source type $y \in \mathsf{typ}(\mathbf{A})$
and target extent $X \subseteq \mathsf{inst}(\mathbf{B})$.
Pointlessly,
since
$\exists\mathsf{inst}(\mathbf{f}) \cdot 
(\mathbf{A}^{\Rightarrow})^{\propto}
= \exists\mathsf{inst}(\mathbf{f}) \cdot \exists\mathsf{int}_{\mathbf{A}} 
\cdot {\cap}_{\mathsf{typ}(\mathbf{A})}^{\propto}
= \exists\mathsf{int}_{\mathbf{B}} \cdot \exists{\mathsf{typ}(\mathbf{f})}^{{-}1} 
\cdot {\cap}_{\mathsf{typ}(\mathbf{A})}^{\propto}
= \exists\mathsf{int}_{\mathbf{B}} 
\cdot {\cap}_{\mathsf{typ}(\mathbf{B})}^{\propto}
\cdot {\mathsf{typ}(\mathbf{f})}^{-1}
= (\mathbf{B}^{\Rightarrow})^{\propto} \cdot {\mathsf{typ}(\mathbf{f})}^{-1}$,
\begin{center}
$\begin{array}{l@{\hspace{12pt}}r@{\hspace{5pt}=\hspace{5pt}}l@{\hspace{5pt}:\hspace{5pt}}l}
\mathrm{\underline{morphism}}
& \exists\mathsf{inst}(\mathbf{f}) \cdot (\mathbf{A}^{\Rightarrow})^{\propto}
& \mathbf{B}^{\Rightarrow} \cdot {\mathsf{typ}(\mathbf{f})}^{-1}
& {\wp}\mathsf{inst}(\mathbf{B}) \rightarrow {\wp}\mathsf{typ}(\mathbf{A})^{\propto} \\
\mathrm{\underline{relation}}
& \mathsf{inst}(\mathbf{f})^{\triangleright} \circ {\models}_{\mathbf{A}}
& {\models}_{\mathbf{B}} \circ \mathsf{typ}(\mathbf{f})^{\triangleleft}
& \mathsf{inst}(\mathbf{B}) \rightarrow \mathsf{typ}(\mathbf{A}).
\end{array}$
\end{center}
\item Second,
fix target instance $x \in \mathsf{inst}(\mathbf{B})$ and let 
type $y$ universally vary over some source intent $Y \subseteq \mathsf{typ}(\mathbf{A})$.
Then,
the fundamental condition translates to
$x \in {\mathsf{inst}(\mathbf{f})}^{-1}(\mathbf{A}^{\Leftarrow}(Y))
\;\;\mbox{iff}\;\; x \in \mathbf{B}^{\Leftarrow}(\exists\mathsf{typ}(\mathbf{f})(Y))$
for any target instance $x \in \mathsf{inst}(\mathbf{B})$
and source intent $Y \subseteq \mathsf{typ}(\mathbf{A})$.
Pointlessly,
since
$\mathbf{A}^{\Leftarrow} \cdot {\mathsf{inst}(\mathbf{f})}^{-1}
= (\exists\mathsf{ext}_{\mathbf{A}})^{\propto} \cdot 
{\cap}_{\mathsf{inst}(\mathbf{A})} 
\cdot {\mathsf{inst}(\mathbf{f})}^{-1}
= (\exists\mathsf{ext}_{\mathbf{A}})^{\propto} \cdot 
(\exists{\mathsf{inst}(\mathbf{f})}^{{-}1})^{\propto} 
\cdot {\cap}_{\mathsf{inst}(\mathbf{B})}
= (\exists\mathsf{typ}(\mathbf{f}))^{\propto} 
\cdot (\exists\mathsf{ext}_{\mathbf{B}})^{\propto} 
\cdot {\cap}_{\mathsf{inst}(\mathbf{B})}
= (\exists\mathsf{typ}(\mathbf{f}))^{\propto} \cdot \mathbf{B}^{\Leftarrow}$,
\begin{center}
$\begin{array}{l@{\hspace{12pt}}r@{\hspace{5pt}=\hspace{5pt}}l@{\hspace{5pt}:\hspace{5pt}}l}
\mathrm{\underline{morphism}}
& \mathbf{A}^{\Leftarrow} \cdot {\mathsf{inst}(\mathbf{f})}^{-1}
& (\exists\mathsf{typ}(\mathbf{f}))^{\propto} \cdot \mathbf{B}^{\Leftarrow}
& {\wp}\mathsf{typ}(\mathbf{A})^{\propto} \rightarrow {\wp}\mathsf{inst}(\mathbf{B}) \\
\mathrm{\underline{relation}}
& {\models}_{\mathbf{A}}^{\propto} \circ \mathsf{inst}(\mathbf{f})^{\triangleleft}
& \mathsf{typ}(\mathbf{f})^{\triangleright} \circ {\models}_{\mathbf{B}}^{\propto}
& \mathsf{typ}(\mathbf{A}) \rightarrow \mathsf{inst}(\mathbf{B}).
\end{array}$
\end{center}
\end{itemize}
Hence,
the adjunction version of infomorphism (Figure~\ref{extent-intent})
satisfies the naturality condition
$\mathsf{dir}(\mathsf{inst}(\mathbf{f})) \circ \mathsf{deriv}_{\mathbf{A}} 
= \mathsf{deriv}_{\mathbf{B}} \circ \mathsf{inv}(\mathsf{typ}(\mathbf{f}))$.
Thus,
an infomorphism $\mathbf{f} : \mathbf{A} \rightleftharpoons \mathbf{B}$
defines a morphism
$\mathsf{incl}(\mathbf{f})
=
\left( \mathsf{dir}(\mathsf{inst}(\mathbf{f})), 
\mathsf{inv}(\mathsf{typ}(\mathbf{f})) \right)
: \mathsf{incl}(\mathbf{B}) \rightarrow \mathsf{incl}(\mathbf{A})$
in the category $\mathsf{Adj}(\mathcal{B})_{=}^{\mathsf{2}}$.

\begin{figure}
\begin{center}
\begin{tabular}{c}
\begin{tabular}{c@{\hspace{100pt}}c}
\setlength{\unitlength}{0.6pt}
\begin{picture}(120,120)(0,-20)
\put(-60,90){\makebox(120,60) {${\wp}\mathsf{typ}(\mathbf{A})^{\mathrm{op}}$}}
\put(60,90){\makebox(120,60)  {${\wp}\mathsf{typ}(\mathbf{B})^{\mathrm{op}}$}}
\put(-60,30){\makebox(120,60) {$\mathsf{conc}(\mathbf{A})$}}
\put(60,30){\makebox(120,60)  {$\mathsf{conc}(\mathbf{B})$}}
\put(-60,-30){\makebox(120,60){${\wp}\mathsf{inst}(\mathbf{A})$}}
\put(60,-30){\makebox(120,60) {${\wp}\mathsf{inst}(\mathbf{B})$}}
\put(-60,60){\makebox(60,60){\footnotesize{$\mathsf{intent}_{\mathbf{A}}$}}}
\put(125,60){\makebox(60,60){\footnotesize{$\mathsf{intent}_{\mathbf{B}}$}}}
\put(-60,0){\makebox(60,60){\footnotesize{$\mathsf{extent}_{\mathbf{A}}$}}}
\put(125,0){\makebox(60,60){\footnotesize{$\mathsf{extent}_{\mathbf{B}}$}}}
\put(10,105){\makebox(120,60){\footnotesize{$\mathsf{inv}(\mathsf{typ}(\mathbf{f}))$}}}
\put(0,45){\makebox(120,60){\footnotesize{$\mathsf{adj}(\mathbf{f})$}}}
\put(4,-15){\makebox(120,60){\footnotesize{$\mathsf{dir}(\mathsf{inst}(\mathbf{f}))$}}}
\thicklines
\put(0,75){\vector(0,1){30}}
\put(0,15){\vector(0,1){30}}
\put(120,75){\vector(0,1){30}}
\put(120,15){\vector(0,1){30}}
\put(80,120){\vector(-1,0){40}}
\put(80,60){\vector(-1,0){40}}
\put(80,0){\vector(-1,0){40}}
\end{picture}
&
\setlength{\unitlength}{0.6pt}
\begin{picture}(160,160)(0,0)
\put(-10,100){\vector(0,1){40}}
\put(10,140){\vector(0,-1){40}}
\put(150,100){\vector(0,1){40}}
\put(170,140){\vector(0,-1){40}}
\put(-10,20){\vector(0,1){40}}
\put(10,60){\vector(0,-1){40}}
\put(150,20){\vector(0,1){40}}
\put(170,60){\vector(0,-1){40}}
\put(50,170){\vector(1,0){60}}
\put(110,150){\vector(-1,0){60}}
\put(50,90){\vector(1,0){60}}
\put(110,70){\vector(-1,0){60}}
\put(50,10){\vector(1,0){60}}
\put(110,-10){\vector(-1,0){60}}
\put(-50,100){\makebox(40,40){\footnotesize{$\mathsf{int}_{\mathbf{A}}$}}}
\put(10,100){\makebox(40,40){\footnotesize{$\mathsf{tau}_{\mathbf{A}}$}}}
\put(110,100){\makebox(40,40){\footnotesize{$\mathsf{int}_{\mathbf{B}}$}}}
\put(170,100){\makebox(40,40){\footnotesize{$\mathsf{tau}_{\mathbf{B}}$}}}
\put(-50,20){\makebox(40,40){\footnotesize{$\mathsf{iota}_{\mathbf{A}}$}}}
\put(10,20){\makebox(40,40){\footnotesize{$\mathsf{ext}_{\mathbf{A}}$}}}
\put(110,20){\makebox(40,40){\footnotesize{$\mathsf{iota}_{\mathbf{B}}$}}}
\put(170,20){\makebox(40,40){\footnotesize{$\mathsf{ext}_{\mathbf{B}}$}}}
\put(48,160){\makebox(60,40){\footnotesize{${\exists}\mathsf{typ}(\mathbf{f})$}}}
\put(58,120){\makebox(60,40){\footnotesize{${\mathsf{typ}(\mathbf{f})}^{-1}$}}}
\put(50,80){\makebox(60,40){\footnotesize{$\mathsf{right}(\mathbf{f})$}}}
\put(50,40){\makebox(60,40){\footnotesize{$\mathsf{left}(\mathbf{f})$}}}
\put(58,0){\makebox(60,40){\footnotesize{${\mathsf{inst}(\mathbf{f})}^{-1}$}}}
\put(50,-41){\makebox(60,40){\footnotesize{${\exists}\mathsf{inst}(\mathbf{f})$}}}
\put(-80,120){\makebox(160,80) {$\langle {\wp}\mathsf{typ}(\mathbf{A}), \supseteq \rangle$}}
\put(80,120){\makebox(160,80)  {$\langle {\wp}\mathsf{typ}(\mathbf{B}), \supseteq \rangle$}}
\put(-80,40){\makebox(160,80) {$\langle \mathsf{conc}(\mathbf{A}), \leq_{\mathbf{A}} \rangle$}}
\put(80,40){\makebox(160,80)  {$\langle \mathsf{conc}(\mathbf{B}), \leq_{\mathbf{B}} \rangle$}}
\put(-80,-40){\makebox(160,80){$\langle {\wp}\mathsf{inst}(\mathbf{A}), \subseteq \rangle$}}
\put(80,-40){\makebox(160,80) {$\langle {\wp}\mathsf{inst}(\mathbf{B}), \subseteq \rangle$}}
\end{picture}
\\ \\ \\
(iconic) & (detailed)
\end{tabular}
\\ \\
$\begin{array}{r@{\hspace{5pt}=\hspace{5pt}}c@{\hspace{5pt}:\hspace{5pt}}l}
\mathsf{dir}(\mathsf{inst}(\mathbf{f})) 
& \langle \exists\mathsf{inst}(\mathbf{f}), {\mathsf{inst}(\mathbf{f})}^{-1} \rangle
& {\wp}\mathsf{inst}(\mathbf{B}) \rightleftharpoons {\wp}\mathsf{inst}(\mathbf{A})
\\
\mathsf{deriv}(\mathbf{A}) 
& \langle \mathbf{A}^{\Rightarrow}, \mathbf{A}^{\Leftarrow} \rangle
& {\wp}\mathsf{inst}(\mathbf{A}) \rightleftharpoons {\wp}\mathsf{typ}(\mathbf{A})^{\propto}
\\
\mathsf{deriv}(\mathbf{B}) 
& \langle \mathbf{B}^{\Rightarrow}, \mathbf{B}^{\Leftarrow} \rangle
& {\wp}\mathsf{inst}(\mathbf{B}) \rightleftharpoons {\wp}\mathsf{typ}(\mathbf{B})^{\propto}
\\
\mathsf{inv}(\mathsf{typ}(\mathbf{f})) 
& \langle {\mathsf{typ}(\mathbf{f})}^{-1}, \exists\mathsf{typ}(\mathbf{f}) \rangle
& {\wp}\mathsf{typ}(\mathbf{B})^{\propto} \rightleftharpoons {\wp}\mathsf{typ}(\mathbf{A})^{\propto}
\end{array}$
\end{tabular}
\end{center}
\caption{The extent and intent natural transformations}
\label{extent-intent}
\end{figure}

Defining composition and identity coordinatewise,
classifications and infomorphisms form the category $\mathsf{Clsn}(\mathcal{B})$.
There is an inclusion functor
$\mathsf{incl}_{\mathcal{B}} 
: \mathsf{Clsn}(\mathcal{B})^{\mathrm{op}} \rightarrow \mathsf{Adj}(\mathcal{B})_{=}^{\mathsf{2}}$.
Classifications and infomorphisms resolve into components:
there is an instance functor
$\mathsf{inst}_{\mathcal{B}} 
: \mathsf{Clsn}(\mathcal{B})^{\mathrm{op}} \rightarrow \mathcal{B}$
with 
$\mathsf{inst}_{\mathcal{B}} \circ \mathsf{dir}_{\mathcal{B}} 
= \mathsf{incl}_{\mathcal{B}} \circ \partial_0 
: \mathsf{Clsn}(\mathcal{B})^{\mathrm{op}} \rightarrow \mathsf{Adj}(\mathcal{B})_{=}$,
and there is a type functor
$\mathsf{typ}_{\mathcal{B}} 
: \mathsf{Clsn}(\mathcal{B}) \rightarrow \mathcal{B}$
with 
$\mathsf{typ}_{\mathcal{B}}^{\mathrm{op}} \circ \mathsf{inv}_{\mathcal{B}} 
= \mathsf{incl}_{\mathcal{B}} \circ \partial_1 
: \mathsf{Clsn}(\mathcal{B})^{\mathrm{op}} \rightarrow \mathsf{Adj}(\mathcal{B})_{=}$.
The morphism version of infomorphism means
there is an extent natural transformation
$\mathsf{ext} : \mathsf{typ}_{\mathcal{B}} \Rightarrow \mathsf{inst}_{\mathcal{B}}^{\mathrm{op}} \circ {(-)}_{\mathcal{B}}^{-1} 
: \mathsf{Clsn}(\mathcal{B}) \rightarrow \mathcal{B}$
and an intent natural transformation
$\mathsf{int} : \mathsf{inst}_{\mathcal{B}} \Rightarrow \mathsf{typ}_{\mathcal{B}}^{\mathrm{op}} \circ {(-)}_{\mathcal{B}}^{-1} : \mathsf{Clsn}(\mathcal{B})^{\mathrm{op}} \rightarrow \mathcal{B}$.
The adjunction version of infomorphism means
there is a derivation natural transformation
$\mathsf{deriv}_{\mathcal{B}}
: \mathsf{inst}_{\mathcal{B}} \circ \mathsf{dir}_{\mathcal{B}} \Rightarrow \mathsf{typ}_{\mathcal{B}}^{\mathrm{op}} \circ \mathsf{inv}_{\mathcal{B}} 
: \mathsf{Clsn}(\mathcal{B})^{\mathrm{op}} \!\rightarrow \mathsf{Adj}(\mathcal{B})$
with 
$\mathsf{deriv}_{\mathcal{B}} 
= \mathsf{incl}_{\mathcal{B}} \,\alpha_{\mathsf{Adj}(\mathcal{B})_{=}}$.

\begin{sloppypar}
Application of polar factorization to derivation morphism,
results in the morphism of conceptual structures 
$\mathsf{clg}(\mathbf{f}) = \div_{\mathsf{Adj}_{=}}(\mathsf{incl}(\mathbf{f}))$.
The axis of derivation morphism
$\mathsf{axis}_{\mathbf{f}}
= \diamondsuit_{\mathsf{deriv}(\mathbf{f})}
: \diamondsuit(\mathsf{clg}(\mathbf{B}))
\rightleftharpoons \diamondsuit(\mathsf{clg}(\mathbf{A}))$,
which is called the \emph{concept adjunction} of $\mathbf{f}$,
is defined by polar diagonalization of the commutative square
$\mathsf{ref}_{\mathbf{B}} 
\circ
\left( \mathsf{ref}_{\mathbf{B}}^\propto \circ \mathsf{inv}(\mathsf{typ}(\mathbf{f})) \right)
= 
\left( \mathsf{dir}(\mathsf{inst}(\mathbf{f})) \circ \mathsf{ref}_{\mathbf{A}} \right)
\circ 
\mathsf{ref}_{\mathbf{A}}^\propto$.
The concept adjunction is defined as follows.
\end{sloppypar}
\begin{center}
{\footnotesize 
$\begin{array}{r@{\hspace{5pt}}c@{\hspace{5pt}}l@{\hspace{-5pt}}l}
\mathsf{left}(\mathsf{axis}_{\mathbf{f}}) 
& \doteq &
\pi_1^{\mathbf{B}\,\propto} \cdot \mathsf{typ}(\mathbf{f})^{-1} \cdot \xi_1^{\mathbf{A}}
&
\\
& = & 
\pi_0^{\mathbf{B}} \cdot \exists\mathsf{inst}(\mathbf{f}) \cdot \xi_0^{\mathbf{A}}
& : \mathsf{axis}(\mathbf{B}) \rightarrow \mathsf{axis}(\mathbf{A})
\\
\mathsf{right}(\mathsf{axis}_{\mathbf{f}})
& \doteq &
\pi_0^{\mathbf{A}} \cdot \mathsf{inst}(\mathbf{f})^{-1} \cdot \xi_0^{\mathbf{B}}
&
\\ 
& = &
\pi_1^{\mathbf{A}\,\propto} \cdot \exists\mathsf{typ}(\mathbf{f}) \cdot \xi_1^{\mathbf{B}}
& : \mathsf{axis}(\mathbf{A}) \rightarrow \mathsf{axis}(\mathbf{B})
\end{array}$}
\end{center}
Hence,
to compute either adjoint, 
first project to either extent or intent order,
next use the corresponding component adjoint,
and finally embed from the corresponding order.

\subsubsection{Orders.}

Any preorder 
$\mathbf{A} = \langle A, {\leq}_{\mathbf{A}} \rangle$ 
is a classification $\mathsf{incl}(\mathbf{A})$,
whose objects of instances and types are the underlying object $A$,
and whose classification relation is the order relation 
${\leq}_{\mathbf{A}} : A \rightharpoondown A$.
The intent morphism of $\mathsf{incl}(\mathbf{A})$ is the underlying up segment morphism
$\mathsf{int}_{\mathsf{incl}(\mathbf{A})} = {\uparrow}_{\mathbf{A}} 
: A \rightarrow {\wp}A$,
and dually
the extent morphism of $\mathsf{incl}(\mathbf{A})$ is the underlying down segment morphism
$\mathsf{ext}_{\mathsf{incl}(\mathbf{A})} = {\downarrow}_{\mathbf{A}} 
: A \rightarrow {\wp}A$.
The derivation adjunction of $\mathsf{incl}(\mathbf{A})$ is the bound adjunction
$\mathsf{dir}_{\mathsf{incl}(\mathbf{A})} 
= \mathsf{bnd}_{\mathbf{A}} 
: {\wp}\mathbf{A} \rightarrow {\wp}\mathbf{A}^{\propto}$.
Any order adjunction 
$\mathbf{f} = \langle \check{\mathbf{f}}, \hat{\mathbf{f}} \rangle 
: \mathbf{A} \rightleftharpoons \mathbf{B}$ 
is an infomorphism (in reverse direction)
$\mathsf{incl}(\mathbf{f}) 
: \mathsf{incl}(\mathbf{B}) \rightleftharpoons \mathsf{incl}(\mathbf{A})$,
whose instance morphism is the left adjoint  
$\check{\mathbf{f}} : \mathbf{A} \rightarrow \mathbf{B}$,
whose type morphism is the right adjoint  
$\hat{\mathbf{f}} : \mathbf{B} \rightarrow \mathbf{A}$,
and whose fundamental condition is that of order adjunctions.
There is an inclusion functor
$\mathsf{incl}_{\mathcal{B}} 
: \mathsf{Adj}(\mathcal{B})^{\mathrm{op}} \rightarrow \mathsf{Clsn}(\mathcal{B})$.

\subsubsection{Complete Lattices.}
A \emph{complete lattice} 
$\mathbf{L} 
= \langle L, {\leq}_{\mathbf{L}}, {\vee}_{\mathbf{L}}, {\wedge}_{\mathbf{L}} \rangle$ 
in (internal to) $\mathcal{B}$
is a partial order
$\mathsf{ord}(\mathbf{L}) 
= \langle L, {\leq}_{\mathbf{L}} \rangle$ 
that is isomorphic to the axis of derivation 
$\mathbf{L} \cong \mathsf{axis}(\mathbf{L})$ 
in the polar factorization of its bound adjunction
$\mathsf{bnd}_{\mathbf{L}}
= \langle {\Uparrow}_{\mathbf{L}}^{\propto}, {\Downarrow}_{\mathbf{L}} \rangle 
: {\wp}\mathbf{L} \rightarrow {\wp}\mathbf{L}^{\propto}$,
with ${\downarrow}_{\mathbf{L}} \cong \pi_0^{\mathbf{L}}$
and ${\uparrow}_{\mathbf{L}} \cong \pi_1^{\mathbf{L}}$,
\[
\mathsf{bnd}_{\mathbf{L}}
=
{\wp}\,\mathbf{L} 
\stackrel{\mathsf{join}_{\mathbf{L}}}{\rightleftharpoons} 
\mathbf{L}
\stackrel{\mathsf{meet}_{\mathbf{L}}}{\rightleftharpoons}
{\wp}\,\mathbf{L}^{\propto}
.\]
The \emph{join} reflection
$\mathsf{join}_{\mathbf{L}}
= \langle {\vee}_{\mathbf{L}}, {\downarrow}_{\mathbf{L}} \rangle 
: {\wp}\,\mathbf{L} \rightleftharpoons \mathbf{L}$
has the \emph{join} monotonic function 
$\vee_L : {\wp}\mathbf{L} \rightarrow \mathbf{L}$
as left adjoint and 
the down segment monotonic function
$\downarrow_L : \mathbf{L} \rightarrow {\wp}\mathbf{L}$
as right adjoint.
The \emph{meet} coreflection
$\mathsf{meet}_{\mathbf{L}}
= \langle {\uparrow}_{\mathbf{L}}^{\propto}, {\wedge}_{\mathbf{L}} \rangle 
: \mathbf{L} \rightleftharpoons {\wp}\,\mathbf{L}^{\propto}$
has the (opposite) up segment monotonic function 
$\uparrow_L^{\propto} : \mathbf{L} \rightarrow {\wp}\,\mathbf{L}^{\propto}$
as left adjoint and the \emph{meet} monotonic function 
$\wedge_L : {\wp}\,\mathbf{L}^{\propto} \rightarrow \mathbf{L}$
as right adjoint.
The fundamental condition for the join reflection
$\mathbf{L}(X{\cdot}{\vee}_{\mathbf{L}}, x)
= {\wp}\mathbf{L}(X, x{\cdot}{\downarrow}_{\mathbf{L}})$
states that ``$X{\cdot}{\vee}_{\mathbf{L}}$ is the least upper bound of $X$''
for every subobject $X \in^1 {\wp}\,L$;
also,
$x \cdot {\downarrow}_{\mathbf{L}} \cdot {\vee}_{\mathbf{L}} = x$
for every element $x \in^1 L$.
The fundamental condition for the meet coreflection
${{\wp}\mathbf{L}}^{\!\propto}(y{\cdot}{\uparrow}_{\mathbf{L}}, Y)
= {\mathbf{L}}(y, Y{\cdot}{\wedge}_{\mathbf{L}})$
states that ``$Y{\cdot}{\wedge}_{\mathbf{L}}$ is the greatest lower bound of $Y$''
for every subobject $Y \in^1 {\wp}\,L$;
also,
$y \cdot {\uparrow}_{\mathbf{L}} \cdot {\wedge}_{\mathbf{L}} = y$
for every element $y \in^1 L$.
The composition
$\mathsf{bnd}_{\mathbf{L}}
= \mathsf{join}_{\mathbf{L}} \circ \mathsf{meet}_{\mathbf{L}}$
means that
${\Uparrow}_{\mathbf{L}}^{\propto}
= {\vee}_{\mathbf{L}} \cdot {\uparrow}_{\mathbf{L}}^{\propto}$
and
${\Downarrow}_{\mathbf{L}}
= {\wedge}_{\mathbf{L}} \cdot {\downarrow}_{\mathbf{L}}$.
Hence,
${\Uparrow}_{\mathbf{L}}^{\propto} \cdot {\wedge}_{\mathbf{L}} = {\vee}_{\mathbf{L}}$
and ${\Downarrow}_{\mathbf{L}} \cdot {\vee}_{\mathbf{L}} = {\wedge}_{\mathbf{L}}$.

Complete lattices are related through order adjunctions.
A \emph{complete adjoint}
$\mathbf{g} 
= \langle \check{\mathbf{g}}, \hat{\mathbf{g}} \rangle
: \mathbf{A} \rightleftharpoons \mathbf{B}$
(internal to) a topos $\mathcal{B}$
is an order adjunction between complete lattices $\mathbf{A}$ and $\mathbf{B}$.
The left adjoint is join-preserving and the right adjoint is meet-preserving.
They determine each other.
Let $\mathsf{CAdj}_{\mathcal{B}}$
denote the full subcategory of complete lattices and complete adjoints 
with inclusion functor
$\mathsf{incl}_{\mathcal{B}} 
: \mathsf{CAdj}(\mathcal{B}) \hookrightarrow \mathsf{Adj}(\mathcal{B})$.

Any order bimodule
$\mathbf{r} : \mathbf{A} \rightharpoondown \mathbf{B}$
between complete lattices,
has 
(1) a \emph{01-embedding} monotonic function
$\mathbf{r}^{\wedge} 
= \mathbf{r}^{01} \cdot {\wedge}_{\mathbf{B}}
: \mathbf{A} \rightarrow {\wp}\mathbf{B}^{\propto} \rightarrow \mathbf{B}$
that is the composite of target meet with the 01-fiber, and
(2) a \emph{10-embedding} monotonic function
$\mathbf{r}^{\vee}
= \mathbf{r}^{10} \cdot {\vee}_{\mathbf{A}}
: \mathbf{B} \rightarrow {\wp}\mathbf{A} \rightarrow \mathbf{A}$
that is the composite of source join with the 10-fiber.
Any monotonic morphism
$\mathbf{f} : \mathbf{A} \rightarrow \mathbf{B}$
between complete lattices
is both the 01-embedding of its forward bimodule
$(\mathbf{f}^{\triangleright})^{\vee}
= (\mathbf{f}^{\triangleright})^{01} \cdot {\vee}_{\mathbf{B}}
= \mathbf{f} \cdot {\uparrow}_{\mathbf{B}} \cdot {\vee}_{\mathbf{B}}
=  \mathbf{f}$
and the 10-embedding of its reverse bimodule
$(\mathbf{f}^{\triangleleft})^{\wedge}
= (\mathbf{f}^{\triangleleft})^{10} \cdot {\wedge}_{\mathbf{B}}
= \mathbf{f} \cdot {\downarrow}_{\mathbf{B}} \cdot {\wedge}_{\mathbf{B}}
=  \mathbf{f}$.
Hence,
the forward and reverse maps are injective and the embedding maps are surjective.
However,
there may be order bimodules that are not the embedding of any monotonic morphism.

\begin{center}
\setlength{\unitlength}{0.6pt}
\begin{picture}(200,100)(0,0)
\put(-50,42){\makebox(100,50){monotonic}}
\put(-50,26){\makebox(100,50){morphism}}
\put(-50,9){\makebox(100,50){$\mathbf{f} : \mathbf{A} \rightarrow \mathbf{B}$}}
\put(150,35){\makebox(100,50){bimodule}}
\put(150,18){\makebox(100,50){$\mathbf{r} : \mathbf{A} \rightharpoondown \mathbf{B}$}}
\put(50,70){\makebox(100,50){forward}}
\put(50,50){\makebox(100,50){${(\mbox{-})}^{\triangleright}$}}
\put(50,0){\makebox(100,50){${(\mbox{-})}_{\vee}$}}
\put(50,-20){\makebox(100,50){01-embedding}}
\put(50,60){\vector(1,0){100}}
\put(150,40){\vector(-1,0){100}}
\end{picture}
\end{center}

Any adjunction 
$\mathbf{g} : \mathbf{A} \rightleftharpoons \mathbf{B}$ 
between complete lattices satisfies the naturality diagrams
$\mathsf{join}_{\mathbf{A}} \circ \mathbf{g}
= \mathsf{dir}(\mathsf{left}(\mathbf{g})) \circ \mathsf{join}_{\mathbf{B}}$
and
$\mathsf{meet}_{\mathbf{A}} \circ \mathsf{inv}(\mathsf{right}(\mathbf{g}))
= \mathbf{g} \circ \mathsf{meet}_{\mathbf{B}}$. 
The first asserts join-continuity of the left adjoint
${\vee}_{\mathbf{A}} \cdot \check{\mathbf{g}} 
= \exists\check{\mathbf{g}} \cdot {\vee}_{\mathbf{B}}$
and the fundamental condition for adjoints 
$\hat{\mathbf{g}} \cdot {\downarrow}_{\mathbf{A}} 
= {\downarrow}_{\mathbf{B}} \cdot {\check{\mathbf{g}}}^{-1}$,
and the second asserts the fundamental condition for adjoints 
${\uparrow}_{\mathbf{A}} \cdot {\hat{\mathbf{g}}}^{-1} 
= \check{\mathbf{g}} \cdot {\uparrow}_{\mathbf{B}}$
and meet-continuity of the right adjoint
$\exists\hat{\mathbf{g}} \cdot {\wedge}_{\mathbf{A}} 
= {\wedge}_{\mathbf{B}} \cdot \hat{\mathbf{g}}$. 
The join reflection and meet coreflection are two special cases,
which assert the join-continuity of join
${\cup}_{\mathbf{L}} \cdot {\vee}_{\mathbf{L}} 
= \exists{\vee}_{\mathbf{L}} \cdot {\vee}_{\mathbf{L}}$
and meet-continuity of meet
${\cup}_{\mathbf{L}}^{\propto} \cdot {\wedge}_{\mathbf{L}} 
= {\exists\wedge}_{\mathbf{L}}^{\propto} \cdot {\wedge}_{\mathbf{L}}$. 
Hence,
Join and meet are natural transformations
\begin{center}
$\begin{array}{l}
\mathsf{join}
: \mathsf{left} \circ \mathsf{dir} \Rightarrow \mathsf{incl}
: \mathsf{CAdj} \rightarrow \mathsf{Set} \rightarrow \mathsf{Adj}
\\
\mathsf{meet}
: \mathsf{incl} \Rightarrow \mathsf{right}^{\mathrm{op}} \circ \mathsf{inv}
: \mathsf{CAdj} \rightarrow \mathsf{Set}^{\mathrm{op}} \rightarrow \mathsf{Adj}
\end{array}$
\end{center}

For any adjunction $\mathbf{g} : \mathbf{A} \rightleftharpoons \mathbf{B}$ between complete lattices, 
the right adjoint morphism is expressed in terms of the left adjoint morphism 
as the composition with source join
$\hat{\mathbf{g}}
= (\check{\mathbf{g}}^\triangleright)^{10} \cdot \vee_{\mathbf{A}}
= {\downarrow}_{\mathbf{B}} \cdot \check{\mathbf{g}}^{-1} \cdot \vee_{\mathbf{A}}
: \mathsf{elem}(\mathbf{B}) \rightarrow \mathsf{elem}(\mathbf{A})$
of the 10-fiber of the forward bimodule 
$\check{\mathbf{g}}^\triangleright
: \mathsf{ord}(\mathbf{A}) \rightharpoondown \mathsf{ord}(\mathbf{B})$
induced by the left adjoint monotonic morphism, and 
the left adjoint morphism is expressed in terms of the right adjoint morphism 
as the composition with target meet
$\check{\mathbf{g}}
= (\hat{\mathbf{g}}^\triangleleft)^{01} \cdot {\wedge}_{\mathbf{B}}
= {\uparrow}_{\mathbf{A}} \cdot \hat{\mathbf{g}}^{-1} \cdot {\wedge}_{\mathbf{B}}
: \mathsf{elem}(\mathbf{A}) \rightarrow \mathsf{elem}(\mathbf{B})$
of the 01-fiber of the reverse bimodule 
$\hat{\mathbf{g}}^\triangleleft
: \mathsf{ord}(\mathbf{A}) \rightharpoondown \mathsf{ord}(\mathbf{B})$
induced by the right adjoint monotonic morphism.

\begin{lemma}\label{induce:lattice}
The following properties hold.
\begin{itemize}
\item Let $\mathbf{g} : \mathbf{A} \rightleftharpoons \mathbf{B}$ be a reflection.
If the source $\mathbf{A}$ is a poset,
then the target $\mathbf{B}$ is also a poset.
If the source $\mathbf{A}$ is a complete lattice,
then the target $\mathbf{B}$ is a complete lattice
with the definitions
$\bigvee_{\mathbf{B}}Y 
= \check{\mathbf{g}}\left( \bigvee_{\mathbf{A}} \hat{\mathbf{g}}[Y] \right)$
and
$\bigwedge_{\mathbf{B}}Y 
= \check{\mathbf{g}}\left( \bigwedge_{\mathbf{A}} \hat{\mathbf{g}}[Y] \right)$
for any target subobject $Y$.
Also,
the following identities hold:
$\hat{\mathbf{g}} \left( \bigvee_{\mathbf{B}}Y \right)
= {\left( \bigvee_{\mathbf{A}} \hat{\mathbf{g}}[Y] \right)}^{\bullet}$
and
$\hat{\mathbf{g}}\left( \bigwedge_{\mathbf{B}}Y \right)
= \bigwedge_{\mathbf{A}} \hat{\mathbf{g}}[Y]$.
\item Let $\mathbf{g} : \mathbf{A} \rightleftharpoons \mathbf{B}$ be a coreflection.
If the target $\mathbf{B}$ is a poset,
then the source $\mathbf{A}$ is also a poset.
If the target $\mathbf{B}$ is a complete lattice,
then the source $\mathbf{A}$ is a complete lattice
with the definitions
$\bigwedge_{\mathbf{A}}X 
= \hat{\mathbf{g}}\left( \bigwedge_{\mathbf{B}} \check{\mathbf{g}}[X] \right)$
and
$\bigvee_{\mathbf{A}}X 
= \hat{\mathbf{g}}\left( \bigvee_{\mathbf{B}} \check{\mathbf{g}}[X] \right)$
for any source subobject $X$.
Also,
the following identities hold:
$\check{\mathbf{g}} ( \bigwedge_{\mathbf{A}}X )
= {\left( \bigwedge_{\mathbf{B}} \check{\mathbf{g}}[X] \right)}^{\circ}$
and
$\check{\mathbf{g}}(\bigvee_{\mathbf{A}}X)
= \bigvee_{\mathbf{B}} \check{\mathbf{g}}[X]$.
\end{itemize}
\end{lemma}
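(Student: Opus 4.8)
The plan is to prove the reflection case (first item) and then obtain the coreflection case (second item) by the order-enriched involution ${(\mbox{-})}^{\propto}_{\mathcal{B}}$, which carries reflections to coreflections, exchanges source with target, and interchanges $\bigvee$ with $\bigwedge$ and closure ${(\mbox{-})}^{\bullet}$ with interior ${(\mbox{-})}^{\circ}$. Applying the first item to the reflection $\mathbf{g}^{\propto} : \mathbf{B}^{\propto} \rightleftharpoons \mathbf{A}^{\propto}$ and reading the four displayed identities back through this involution produces exactly the second item, so the real content lies in the reflection case.

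For a reflection $\mathbf{g} : \mathbf{A} \rightleftharpoons \mathbf{B}$ the poset claim is already established in the discussion of reflections and can simply be cited. For the complete-lattice claim I would first observe that, since $\mathbf{g}$ is a reflection, $\hat{\mathbf{g}} \cdot \check{\mathbf{g}} = 1_{\mathbf{B}}$, so $\hat{\mathbf{g}}$ is a split monic; being the right adjoint of a reflection it is isotonic, hence an order-embedding. Moreover the triangle identity $\check{\mathbf{g}} \cdot \hat{\mathbf{g}} \cdot \check{\mathbf{g}} = \check{\mathbf{g}}$ gives $\hat{\mathbf{g}} \cdot (\check{\mathbf{g}} \cdot \hat{\mathbf{g}}) = \hat{\mathbf{g}}$, so every value of $\hat{\mathbf{g}}$ is a closed element and $\hat{\mathbf{g}}$ embeds $\mathbf{B}$ isomorphically onto the closed suborder $\mathsf{clo}(\mathbf{g}) \hookrightarrow \mathbf{A}$, with inverse the restriction of $\check{\mathbf{g}}$. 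The claim then reduces to the classical fact that the closed elements of a closure operator on a complete lattice again form a complete lattice, transported across this embedding.

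Concretely, for a target subobject $Y$ I would verify the two fundamental conditions of the Complete Lattices subsection for the displayed operations. For the meet the key point is that a meet of closed elements is closed, so $\bigwedge_{\mathbf{A}} \hat{\mathbf{g}}[Y]$ already lies in $\mathsf{clo}(\mathbf{g})$ and therefore equals $\hat{\mathbf{g}}(\check{\mathbf{g}}(\bigwedge_{\mathbf{A}} \hat{\mathbf{g}}[Y])) = \hat{\mathbf{g}}(\bigwedge_{\mathbf{B}} Y)$; since $\hat{\mathbf{g}}$ preserves and reflects order, the greatest-lower-bound property transfers verbatim from $\mathbf{A}$ to $\mathbf{B}$, yielding $\hat{\mathbf{g}}(\bigwedge_{\mathbf{B}} Y) = \bigwedge_{\mathbf{A}} \hat{\mathbf{g}}[Y]$. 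For the join, the join in $\mathbf{B}$ is the closure of the join in $\mathbf{A}$, so $\hat{\mathbf{g}}(\bigvee_{\mathbf{B}} Y) = {(\bigvee_{\mathbf{A}} \hat{\mathbf{g}}[Y])}^{\bullet}$; applying $\check{\mathbf{g}}$ and collapsing by $\check{\mathbf{g}} \cdot \hat{\mathbf{g}} \cdot \check{\mathbf{g}} = \check{\mathbf{g}}$ gives $\bigvee_{\mathbf{B}} Y = \check{\mathbf{g}}(\bigvee_{\mathbf{A}} \hat{\mathbf{g}}[Y])$. This delivers all four identities once the two universal properties are in place.

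The main obstacle is carrying this argument internally in $\mathcal{B}$, reading ``subobject'', ``image'' and ``least upper bound'' as the internal $Y \in^{1} \wp\mathbf{B}$, the existential image $\exists\hat{\mathbf{g}}$, and the fundamental conditions phrased through $\mathsf{join}$ and $\mathsf{meet}$. Here one must resist invoking the continuity naturality diagrams recorded earlier for adjunctions between complete lattices, since those presuppose that $\mathbf{B}$ is already complete. Instead I would show directly that $\langle \exists\hat{\mathbf{g}} \cdot \vee_{\mathbf{A}} \cdot \check{\mathbf{g}}, {\downarrow}_{\mathbf{B}} \rangle$ is a reflection $\wp\mathbf{B} \rightleftharpoons \mathbf{B}$ by checking the single fundamental inequality $Y \subseteq {\downarrow}_{\mathbf{B}}\, x \Leftrightarrow \bigvee_{\mathbf{B}} Y \leq_{\mathbf{B}} x$ over an arbitrary stage $\mathbf{C}$, using only that $\hat{\mathbf{g}}$ is an order-embedding onto $\mathsf{clo}(\mathbf{g})$, that $\langle \vee_{\mathbf{A}}, {\downarrow}_{\mathbf{A}} \rangle$ is already a reflection, and the identities $\hat{\mathbf{g}} \cdot \check{\mathbf{g}} = 1_{\mathbf{B}}$ and $\check{\mathbf{g}} \cdot \hat{\mathbf{g}} \cdot \check{\mathbf{g}} = \check{\mathbf{g}}$; the meet coreflection is dual. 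Once the join reflection and meet coreflection on $\mathbf{B}$ are secured, the polar factorization of $\mathsf{bnd}_{\mathbf{B}}$ certifies $\mathbf{B}$ as an internal complete lattice, and the coreflection case follows by the involution as above.
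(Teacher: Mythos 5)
The paper never proves this lemma: it is stated bare in the Complete Lattices subsection (the poset halves merely repeat assertions already made in the Reflections/Coreflections paragraph), so there is no in-paper argument to compare yours against, and your proposal must be judged on its own merits. On that footing it is correct, and it uses exactly the machinery the paper has set up. The involution reduction is sound, since ${(\mbox{-})}^{\propto}_{\mathcal{B}}$ swaps reflections with coreflections and source with target, and opposing both orders exchanges joins with meets and closure ${(\mbox{-})}^{\bullet}$ with interior ${(\mbox{-})}^{\circ}$; the second bullet is literally the first bullet read through it. The core reflection argument is also the right one: $\hat{\mathbf{g}} \cdot \check{\mathbf{g}} = 1_{\mathbf{B}}$ together with isotonicity of the right adjoint of a reflection identifies $\mathbf{B}$ with the suborder $\mathsf{clo}(\mathbf{g})$ of closed elements, meets of closed elements are closed (which yields $\hat{\mathbf{g}}(\bigwedge_{\mathbf{B}}Y) = \bigwedge_{\mathbf{A}}\hat{\mathbf{g}}[Y]$ and the greatest-lower-bound property), and joins in $\mathbf{B}$ are closures of joins in $\mathbf{A}$; your caution against invoking the complete-adjoint naturality squares, which presuppose that $\mathbf{B}$ is already complete, is well placed. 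Two points should be tightened in a full write-up. First, the identity $\hat{\mathbf{g}}(\bigvee_{\mathbf{B}}Y) = {(\bigvee_{\mathbf{A}}\hat{\mathbf{g}}[Y])}^{\bullet}$ is immediate from the definition of $\bigvee_{\mathbf{B}}$ and of closure; what genuinely requires proof is the least-upper-bound property, which your stage-wise verification of the fundamental condition for $\langle \exists\hat{\mathbf{g}} \cdot \vee_{\mathbf{A}} \cdot \check{\mathbf{g}}, {\downarrow}_{\mathbf{B}} \rangle$ does supply. Second, to certify $\mathbf{B}$ as a complete lattice in the paper's internal sense --- a partial order isomorphic to $\mathsf{axis}(\mathsf{bnd}_{\mathbf{B}})$ with ${\downarrow}_{\mathbf{B}} \cong \pi_0$ and ${\uparrow}_{\mathbf{B}} \cong \pi_1$ --- you must also record that your join reflection and meet coreflection compose to the bound adjunction, i.e. ${\Uparrow}_{\mathbf{B}}^{\propto} = {\vee}_{\mathbf{B}} \cdot {\uparrow}_{\mathbf{B}}^{\propto}$ and ${\Downarrow}_{\mathbf{B}} = {\wedge}_{\mathbf{B}} \cdot {\downarrow}_{\mathbf{B}}$; only then does uniqueness of polar factorization (via the Diagonalization Lemma) produce the required isomorphism with the axis. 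That check is a routine unwinding of the derivation definition of the bound adjunction and is equivalent to the fundamental conditions you already verify, but it is the step that connects your construction to the paper's definition of internal complete lattice and should be made explicit rather than left to the phrase ``certifies''.
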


\subsection{Multiplication and Exponent}\label{subsec:exponent}

Given any two classifications
$\mathbf{A}$ 
and
$\mathbf{B}$ 
the \emph{exponent} classification ${\mathbf{B}}^{\mathbf{A}}$ is defined as follows.
\begin{itemize}
\item
The instance $\mathcal{B}$-object is the pullback\footnote{Composition of ${\mathsf{int}_{\mathbf{A}}}^{\!\mathsf{inst}(\mathbf{B})}
: {\mathsf{inst}(\mathbf{A})}^{\mathsf{inst}(\mathbf{B})}
\rightarrow {{\wp}\mathsf{typ}(\mathbf{A})}^{\mathsf{inst}(\mathbf{B})}$
with the isomorphism
${{\wp}\mathsf{typ}(\mathbf{A})}^{\mathsf{inst}(\mathbf{B})}
\cong {\wp}{\left(\mathsf{inst}(\mathbf{B}){\times}\mathsf{typ}(\mathbf{A})\right)}$
gives the $\mathcal{B}$-morphism on the bottom of Figure~\ref{exponent},
and composition of 
${\mathsf{ext}_{\mathbf{B}}}^{\!\mathsf{typ}(\mathbf{A})}
: {\mathsf{typ}(\mathbf{B})}^{\mathsf{typ}(\mathbf{A})}
\rightarrow {{\wp}\mathsf{inst}(\mathbf{B})}^{\mathsf{typ}(\mathbf{A})}$.
with the isomorphism
${{\wp}\mathsf{inst}(\mathbf{B})}^{\mathsf{typ}(\mathbf{A})}
\cong {\wp}{\left(\mathsf{typ}(\mathbf{A}){\times}\mathsf{inst}(\mathbf{B})\right)}$
gives the $\mathcal{B}$-morphism on the right of Figure~\ref{exponent}.}
$\mathsf{inst}({\mathbf{B}}^{\mathbf{A}})
= \mathcal{B}\left({\mathbf{A}},{\mathbf{B}}\right)$
in Figure~\ref{exponent}.
\item The type $\mathcal{B}$-object is the binary product
$\mathsf{typ}({\mathbf{B}}^{\mathbf{A}})
= \mathsf{inst}(\mathbf{B}){\times}\mathsf{typ}(\mathbf{A})$.
\item The character of the classification relation
$\models_{{\mathbf{B}}^{\mathbf{A}}} : \mathsf{inst}({\mathbf{B}}^{\mathbf{A}}) \rightharpoondown \mathsf{typ}({\mathbf{B}}^{\mathbf{A}})$
is defined in terms of the equalizing monomorphism
$\mathsf{inst}({\mathbf{B}}^{\mathbf{A}}) \hookrightarrow
{\mathsf{inst}(\mathbf{A})}^{\mathsf{inst}(\mathbf{B})}
\!\!{\times}
{\mathsf{typ}(\mathbf{B})}^{\mathsf{typ}(\mathbf{A})}$,
the evaluation morphisms
${\mathsf{inst}(\mathbf{B})}{\times}{{\mathsf{inst}(\mathbf{A})}^{\mathsf{inst}(\mathbf{B})}} \!\!{\rightarrow} {\mathsf{inst}(\mathbf{A})}$
and
${\mathsf{typ}(\mathbf{A})}{\times}{{\mathsf{typ}(\mathbf{B})}^{\mathsf{typ}(\mathbf{A})}} \!\!{\rightarrow} {\mathsf{typ}(\mathbf{B})}$,
and the common image character
$\mathsf{inst}(\mathbf{A}){\times}\mathsf{typ}(\mathbf{B}) \rightarrow \Omega$,
as the composite
\begin{center}
$\begin{array}{rcl}
\mathsf{inst}({\mathbf{B}}^{\mathbf{A}})
{\times} \mathsf{inst}(\mathbf{B}){\times}\mathsf{typ}(\mathbf{A})
& \hookrightarrow &
{{\mathsf{inst}(\mathbf{A})}^{\mathsf{inst}(\mathbf{B})}}
\!{\times}
{\mathsf{typ}(\mathbf{B})}^{\mathsf{typ}(\mathbf{A})}
\!{\times}
\mathsf{inst}(\mathbf{B}){\times}\mathsf{typ}(\mathbf{A})
\\
& \cong &
{\mathsf{inst}(\mathbf{B})}{\times}{{\mathsf{inst}(\mathbf{A})}^{\mathsf{inst}(\mathbf{B})}}
\!{\times}
{\mathsf{typ}(\mathbf{A})}{\times}{{\mathsf{typ}(\mathbf{B})}^{\mathsf{typ}(\mathbf{A})}}
\\
& \rightarrow & \mathsf{inst}(\mathbf{A}){\times}\mathsf{typ}(\mathbf{B})
\\
& \rightarrow & \Omega
\end{array}$
\end{center}
\end{itemize}
Given two classifications $\mathbf{A}$ and $\mathbf{B}$, 
the \emph{multiplication} classification
${\mathbf{A}} \otimes {\mathbf{B}}$
is the involution of the exponent ${\mathbf{B}}^{{\mathbf{A}}^{\!\propto}}$.

\begin{figure}
\begin{center}
\setlength{\unitlength}{0.6pt}
\begin{picture}(160,80)(0,0)
\put(-30,65){\makebox(60,30){$\mathcal{B}\left({\mathbf{A}},{\mathbf{B}}\right)$}}
\put(130,65){\makebox(60,30){${\mathsf{typ}(\mathbf{B})}^{\mathsf{typ}(\mathbf{A})}$}}
\put(-30,-15){\makebox(60,30){${\mathsf{inst}(\mathbf{A})}^{\mathsf{inst}(\mathbf{B})}$}}
\put(130,-15){\makebox(60,30){${\wp}{\left(\mathsf{inst}(\mathbf{B}){\times}\mathsf{typ}(\mathbf{A})\right)}$}}
\put(208,65){\makebox(30,30){\footnotesize{$\ni\; \hat{\mathbf{f}}$}}}
\put(-70,65){\makebox(30,30){\footnotesize{$\mathbf{f} \;\in$}}}
\put(-78,-15){\makebox(30,30){\footnotesize{$\check{\mathbf{f}} \;\in$}}}
\put(38,80){\vector(1,0){70}}
\put(47,0){\vector(1,0){42}}
\put(0,65){\vector(0,-1){50}}
\put(160,65){\vector(0,-1){50}}
\end{picture}
\end{center}
\caption{Exponent in $\mathsf{clsn}(\mathcal{B})$}
\label{exponent}
\end{figure}

\subsection{The Concept Lattice Functor}\label{subsec:concept:lattice:functor}

The category of classifications and infomorphisms is a subcategory of the arrow category of adjunctions
$\mathsf{incl}_{\mathcal{B}} 
: \mathsf{Clsn}(\mathcal{B}) \rightarrow \mathsf{Adj}(\mathcal{B})_{=}^{\mathsf{2}}$,
and the category of concept lattices and concept morphisms is a subcategory of the factorization category of adjunctions
$\mathsf{incl}_{\mathcal{B}} 
: \mathsf{Clg}(\mathcal{B}) \rightarrow \mathsf{Ref}(\mathcal{B}) \odot \mathsf{Ref}(\mathcal{B})^\propto$.
The concept lattice functor 
$\mathsf{clg}_{\mathcal{B}} 
: \mathsf{Clsn}(\mathcal{B}) \rightarrow \mathsf{Clg}(\mathcal{B}) 
= \mathsf{Clg}(\mathcal{B})_\iota \odot \mathsf{Clg}(\mathcal{B})_\tau$
is the restriction of the polar factorization functor
$\div_{\mathsf{Adj}_{=}} 
: \mathsf{Adj}(\mathcal{B})_{=}^{\mathsf{2}} \rightarrow
\mathsf{Ref}(\mathcal{B}) \odot \mathsf{Ref}(\mathcal{B})^\propto$
to $\mathsf{Clsn}(\mathcal{B})$ at the source and $\mathsf{Clg}(\mathcal{B})$ at the target.
This can be verified by definition of $\mathsf{Clg}(\mathcal{B})$.




\section{Conceptual Structures}\label{sec:conceptual:structures}

\subsection{Concept Lattices}\label{subsec:concept:lattices}

A \emph{conceptual structure} factors as, and is composed of, two aspects: 
an extensional or denotative aspect and an intensional or connotative aspect.
It consists of a hierarchy of concepts,
a generalization-specialization hierarchy,
that extensionally links instances to concepts and intensionally links concepts to types.
Both aspects of conceptual structure can be represented in three equivalent versions:
a relation version, a morphism version and an adjunction version.
More specifically,
a conceptual structure $\mathbf{L}$ in (internal to) a topos $\mathcal{B}$
has three components,
a $\mathcal{B}$-object of \emph{instances} $\mathsf{inst}(\mathbf{L})$, 
a $\mathcal{B}$-object of \emph{types} $\mathsf{typ}(\mathbf{L})$
and a complete $\mathcal{B}$-lattice of \emph{concepts}
$\mathbf{L} = \langle L, \leq_{\mathbf{L}}, {\vee}_{\mathbf{L}}, {\wedge}_{\mathbf{L}} \rangle$
that represents the conceptual hierarchy.

\begin{figure}
\begin{center}
\begin{tabular}{l@{\hspace{10pt}}l}
\underline{\sffamily extensional aspect} & \underline{\sffamily intensional aspect}
\\ \\
{\footnotesize \begin{tabular}{cr@{\hspace{5pt}:\hspace{5pt}}l}
instance-of & $\iota_{\mathbf{L}}$ & $\mathsf{inst}(\mathbf{L}) \rightharpoondown \mathbf{L}$ 
\\
instance embedding & $\mathbf{i}_{\mathbf{L}}$ & $\mathsf{inst}(\mathbf{L}) \rightarrow \mathbf{L}$
\\
extent & $\mathsf{ext}_{\mathbf{L}}$ & $\mathbf{L} \rightarrow {\wp}\mathsf{inst}(\mathbf{L})$
\\
iota & $\mathsf{iota}_{\mathbf{L}}$ & ${\wp}\mathsf{inst}(\mathbf{L}) \rightarrow \mathbf{L}$
\end{tabular}}
&
{\footnotesize \begin{tabular}{cr@{\hspace{5pt}:\hspace{5pt}}l}
of-type & $\tau_{\mathbf{L}}$ & $\mathbf{L} \rightharpoondown \mathsf{typ}(\mathbf{L})$
\\
type embedding & $\mathbf{t}_{\mathbf{L}}$ & $\mathsf{typ}(\mathbf{L}) \rightarrow \mathbf{L}$
\\
intent & $\mathsf{int}_{\mathbf{L}}^{\propto}$ & $\mathbf{L} \rightarrow {\wp}\,\mathsf{typ}(\mathbf{L})^{\propto}$
\\
tau & $\mathsf{tau}_{\mathbf{L}}$ & ${\wp}\,\mathsf{typ}(\mathbf{L})^{\propto} \!\!\rightarrow \mathbf{L}$
\end{tabular}}
\end{tabular}
\end{center}
\caption{Equivalent Components}
\end{figure}
The relation version of conceptual structure has an extensional aspect,
represented by the \emph{instance-of} bimodule 
$\iota_{\mathbf{L}} : \mathsf{inst}(\mathbf{L}) \rightharpoondown \mathbf{L}$,
that registers which instances belong to which concepts,
and has an intensional aspect,
represented by the \emph{of-type} bimodule 
$\tau_{\mathbf{L}} : \mathbf{L} \rightharpoondown \mathsf{typ}(\mathbf{L})$,
that describes the concepts by recording the types of each.
Being bimodules,
the instance-of relation is closed on the right with respect to concept order
and the of-type relation is closed on the left with respect to concept order.
The morphism version of conceptual structure has an extensional aspect,
represented by the \emph{instance embedding} morphism
$\mathbf{i}_{\mathbf{L}} : \mathsf{inst}(\mathbf{L}) \rightarrow \mathbf{L}$,
and has an intensional aspect represented by the \emph{type embedding} morphism
$\mathbf{t}_{\mathbf{L}} : \mathsf{typ}(\mathbf{L}) \rightarrow \mathbf{L}$.
We assume 
that instance-of is the forward bimodule of the instance embedding morphism
$\iota_{\mathbf{L}} = \mathbf{i}_{\mathbf{L}}^{\triangleright}$,
and that of-type is the reverse bimodule of the type embedding morphism
$\tau_{\mathbf{L}} = \mathbf{t}_{\mathbf{L}}^{\triangleleft}$.
It follows that 
instance embedding is the 01-embedding morphism of the instance-of bimodule
$\mathbf{i}_{\mathbf{L}} = \iota_{\mathbf{L}}^{\wedge}$,
and that type embedding is the 10-embedding morphism of the of-type bimodule
$\mathbf{t}_{\mathbf{L}} = \tau_{\mathbf{L}}^{\vee}$.
Although special kinds of relations,
the instance-of relation is equivalent to the instance embedding morphism,
and the of-type relation is equivalent to the type embedding morphism.

The adjunction version of conceptual structure 
consists of an extensional reflection and an intensional coreflection 
that are composable (as adjunctions).
We further assume that
the source preorder of the extent
and the target preorder of the intent
are free.
More formally,
the adjunction version of conceptual structure 
consists of the $\mathcal{B}$-preorder
\footnote{\label{12}For the relation and morphism versions,
we need the following additional assumptions (restrictions):
the concept order is a complete lattice
(it satisfies antisymmetry, and the meets and joins of all subsets exist),
the subobject of embedded instances is join-dense, 
and the subobject of embedded types is meet-dense.
We need these assumptions in order to move from either the relation or morphism versions to the adjunction version.
For the adjunction version 
we need no additional assumptions;
that is,
we initially assume only that the concept hierarchy is a preorder.
All the restrictions come from the assumptions about reflections and coreflections.
By Lem.~\ref{induce:lattice}, 
antisymmetry and existence of meets and joins follow from (co)reflection properties and the fact that instance (type) power is a complete lattice.
The facts that embedded instances are join-dense and embedded types are meet-dense is equivalent to the equality constraints of the extent reflection and intent coreflection.}
of concepts
$\mathbf{L} = \langle L, \leq_{\mathbf{L}} \rangle$
extensionally linked to the complete lattice of instances via the \emph{extent} reflection,
and intensionally linked to the complete lattice of types via the \emph{intent} coreflection
\[
{\wp}\mathsf{inst}(\mathbf{L}) 
\stackrel{\mathsf{extent}_{\mathbf{L}}}{\rightleftharpoons} 
\mathbf{L}
\stackrel{\mathsf{intent}_{\mathbf{L}}}{\rightleftharpoons}
{\wp}\mathsf{typ}(\mathbf{L})^{\propto}.
\]
The extent reflection 
$\mathsf{extent}_{\mathbf{L}} 
= \langle \mathsf{iota}_{\mathbf{L}}, \mathsf{ext}_{\mathbf{L}} \rangle$
unpacks into the \emph{extent} monotonic morphism 
$\mathsf{ext}_{\mathbf{L}} 
: \mathbf{L} \rightarrow {\wp}\mathsf{inst}(\mathbf{L})$
and the instance concept generator (\emph{iota}) monotonic morphism
$\mathsf{iota}_{\mathbf{L}} 
: {\wp}\mathsf{inst}(\mathbf{L}) \rightarrow \mathbf{L}$.
The extent morphism,
which lists the instances of each concept,
is the 10-fiber of the instance-of bimodule
$\mathsf{ext}_{\mathbf{L}} = \iota_{\mathbf{L}}^{10}
: \mathbf{L} \rightarrow {\wp}\mathsf{inst}(\mathbf{L})$.
Conversely, the instance-of bimodule is expressed in terms of the extent morphism
either with the expression
$\iota_{\mathbf{L}}
= {\in}_{\mathsf{inst}(\mathbf{L})}(1_{\mathsf{inst}(\mathbf{L})},\mathsf{ext}_{\mathbf{L}})$
or via the infomorphism
$(1_{\mathsf{inst}(\mathbf{L})},\mathsf{ext}_{\mathbf{L}}) 
: (\mathsf{inst}(\mathbf{L}),\mathbf{L},\iota_{\mathbf{L}}) \rightleftharpoons (\mathsf{inst}(\mathbf{L}),{\wp}\mathsf{inst}(\mathbf{L}),{\in}_{\mathsf{inst}(\mathbf{L})})$.
The iota monotonic morphism,
which computes the most specific concept that contains all instances of an extent,
is expressed in terms of 
the instance-of bimodule and the instance embedding monotonic morphism
as
$\mathsf{iota}_{\mathbf{L}}
= \iota_{\mathbf{L}}^{\Rightarrow} \cdot {\wedge}_{\mathbf{L}}
= \exists\iota_{\mathbf{L}}^{01} \cdot {\cap}_{L} \cdot {\wedge}_{\mathbf{L}}
= \exists\iota_{\mathbf{L}}^{01} \cdot {\cap}_{L} \cdot {\wedge}_{\mathbf{L}}
= \exists\mathbf{i}_{\mathbf{L}} \cdot \exists{\uparrow}_{\mathbf{L}} \cdot {\cap}_{L} \cdot {\wedge}_{\mathbf{L}}
= \exists\mathbf{i}_{\mathbf{L}} \cdot {\Uparrow}_{\mathbf{L}} \cdot {\wedge}_{\mathbf{L}}
= \exists\mathbf{i}_{\mathbf{L}} \cdot {\vee}_{\mathbf{L}}
: {\wp}\,\mathsf{inst}(\mathbf{L}) \rightarrow \mathbf{L}$. 
The instance embedding morphism is the restriction of the iota morphism to single instances
$\iota_{\mathbf{L}}
= \{\mbox{-}\}_{\mathsf{inst}(\mathbf{L})} \cdot \mathsf{iota}_{\mathbf{L}}
: \mathsf{inst}(\mathbf{L}) \rightarrow \mathbf{L}$. 
Since the iota and extent morphisms are adjoint morphisms between complete lattices,
they determine each other.
Hence, 
the instance-of bimodule and the extent reflection are equivalent.

The intent coreflection 
$\mathsf{intent}_{\mathbf{L}} 
= \langle \mathsf{int}_{\mathbf{L}}, \mathsf{tau}_{\mathbf{L}} \rangle$
unpacks into
the \emph{intent} monotonic morphism 
$\mathsf{int}_{\mathbf{L}}^{\propto} 
: \mathbf{L} \rightarrow {\wp}\,\mathsf{typ}(\mathbf{L})^{\propto}$
and the type concept generator (\emph{tau}) monotonic morphism
$\mathsf{tau}_{\mathbf{L}} 
: {\wp}\,\mathsf{typ}(\mathbf{L})^{\propto} \rightarrow \mathbf{L}$.
The intent morphism,
which collects the types possessed by each concept,
is the 01-fiber of the of-type bimodule
$\mathsf{int}_{\mathbf{L}} = \tau_{\mathbf{L}}^{01}
: \mathbf{L}^{\propto} \rightarrow {\wp}\,\mathsf{typ}(\mathbf{L})$.
Conversely, the of-type bimodule is expressed in terms of the intent morphism
either with the expression
$\tau_{\mathbf{L}}
= {\in}^{\propto}_{\mathsf{typ}(\mathbf{L})}(\mathsf{int}_{\mathbf{L}},1_{\mathsf{typ}(\mathbf{L})})$
or via the infomorphism
$(\mathsf{int}_{\mathbf{L}},1_{\mathsf{typ}(\mathbf{L})}) : ({\wp}\mathsf{typ}(\mathbf{L}),\mathsf{typ}(\mathbf{L}),{\in}^{\propto}_{\mathsf{typ}(\mathbf{L})}) \rightleftharpoons (\mathbf{L},\mathsf{typ}(\mathbf{L}),\tau_{\mathbf{L}})$.
The tau morphism,
which computes the most generic concept that has all types of an intent, 
is expressed in terms of 
the of-type bimodule and the type embedding monotonic morphism 
as
$\mathsf{tau}_{\mathbf{L}}
= \tau_{\mathbf{L}}^{\Leftarrow} \cdot {\vee}_{\mathbf{L}}
= \exists\tau_{\mathbf{L}}^{10} \cdot {\cap}_{L} \cdot {\vee}_{\mathbf{L}}
= \exists\mathbf{t}_{\mathbf{L}} \cdot \exists{\downarrow}_{\mathbf{L}} \cdot {\cap}_{L} \cdot {\vee}_{\mathbf{L}}
= \exists\mathbf{t}_{\mathbf{L}} \cdot {\Downarrow}_{\mathbf{L}} \cdot {\vee}_{\mathbf{L}}
= \exists\mathbf{t}_{\mathbf{L}} \cdot {\wedge}_{\mathbf{L}}
: {\wp}\,\mathsf{typ}(\mathbf{L})^{\propto} \rightarrow \mathbf{L}$. 
The type embedding morphism is the restriction of the tau morphism to single types
$\tau_{\mathbf{L}}
= \{\mbox{-}\}_{\mathsf{typ}(\mathbf{L})} \cdot \mathsf{tau}_{\mathbf{L}}
: \mathsf{typ}(\mathbf{L}) \rightarrow \mathbf{L}$. 
Since the tau and intent morphisms are adjoint morphisms between complete lattices,
they determine each other.
Hence, 
the of-type bimodule and the intent coreflection are equivalent.

To construct the associated classification structure
$\mathsf{clsn}(\mathbf{L})$,
compose either the relation, morphism or adjunction version.
The classification $\mathsf{clsn}(\mathbf{L})$
has $\mathbf{L}$-instances as its instances
and $\mathbf{L}$-types as its types.
The relational composition
$\models_{\mathsf{clsn}(\mathbf{L})} 
= \iota_{\mathbf{L}} \circ \tau_{\mathbf{L}}$
gives its classification relation.
The morphism compositions
$\mathsf{ext}_{\mathsf{clsn}(\mathbf{L})} 
= \tau_{\mathbf{L}} \cdot \mathsf{ext}_{\mathbf{L}}$
and
$\mathsf{int}_{\mathsf{clsn}(\mathbf{L})} 
= \iota_{\mathbf{L}} \cdot \mathsf{int}_{\mathbf{L}}$
give the instance and type embedding morphisms.
The adjunction composition
\[\mathsf{deriv}_{\mathsf{clsn}(\mathbf{L})}
= \mathsf{extent}_{\mathbf{L}} \circ \mathsf{intent}_{\mathbf{L}}.\]
gives the derivation Galois connection,
with the morphism composition 
$\mathsf{iota}_{\mathbf{L}} \cdot \mathsf{int}_{\mathbf{L}}$ 
giving the forward derivation morphism,
and the morphism composition 
$\mathsf{tau}_{\mathbf{L}} \cdot \mathsf{ext}_{\mathbf{L}}$ 
giving the reverse derivation morphism.
Moreover,
given any classification $\mathbf{A}$,
the classification of the concept lattice of $\mathbf{A}$ is itself:
$\mathsf{clsn}(\mathsf{clg}(\mathbf{A})) = \mathbf{A}$.

\subsection{Concept Morphisms}\label{subsec:concept:morphisms}

A \emph{morphism conceptual structures} factors as, and is composed of, two aspects: 
an extensional or denotative aspect and an intensional or connotative aspect.
Both aspects of conceptual structure morphism can be represented in three equivalent versions:
a relation version, a morphism version and an adjunction version.
More specifically,
a conceptual structure morphism 
is called a \emph{concept morphism} and is symbolized as
$\mathbf{h} : \mathbf{L}_1 \rightleftharpoons \mathbf{L}_2$
with source conceptual structure $\mathbf{L}_1$ and target conceptual structure $\mathbf{L}_2$.

\subsubsection{Extension.}

The extensional aspect of a concept morphism
$\mathbf{h} : \mathbf{L}_1 \rightleftharpoons \mathbf{L}_2$
consists of 
a \emph{conceptual connection} adjunction 
$\mathsf{adj}(\mathbf{h})
= \langle \mathsf{left}(\mathbf{h}), \mathsf{right}(\mathbf{h}) \rangle
: \mathbf{L}_2 \rightleftharpoons \mathbf{L}_1$
(in the reverse direction) between conceptual hierarchies,
and an \emph{instance} morphism
$\mathsf{inst}(\mathbf{h}) 
: \mathsf{inst}(\mathbf{L}_2) \rightarrow \mathsf{inst}(\mathbf{L}_1)$ 
(in the reverse direction) between instance ${\mathcal B}$-objects.

\begin{sloppypar}
In the relation version of the extensional aspect of conceptual morphisms,
the conceptual connection and instance morphism 
are required to preserve instance-of relationship 
by satisfying the \emph{extensional condition}
$\iota_{\mathbf{L}_1}(\mathsf{inst}(\mathbf{h}),1_{\mathbf{L}_1})
= \iota_{\mathbf{L}_2}(1_{\mathsf{inst}(\mathbf{L}_2)},\mathsf{right}(\mathbf{h}))$.
This condition states that
$\langle \mathsf{inst}(\mathbf{h}), \mathsf{right}(\mathbf{h}) \rangle 
: \langle \mathsf{inst}(\mathbf{L}_1), \mathbf{L}_1, \iota_{\mathbf{L}_1} \rangle \rightleftharpoons \langle \mathsf{inst}(\mathbf{L}_2), \mathbf{L}_2, \iota_{\mathbf{L}_2} \rangle$
is an infomorphism.
In the morphism version of the extensional aspect of conceptual morphisms,
the conceptual connection and instance morphism 
are required to preserve instance embedding:
$\iota_{\mathbf{L}_2} \cdot \mathsf{left}(\mathbf{h})
= \mathsf{inst}(\mathbf{h}) \cdot \iota_{\mathbf{L}_1}$;
in turn,
this implies the extensional condition.
In the adjunction version of the extensional aspect of conceptual morphisms,
the conceptual connection and instance morphism 
are required to preserve extent:
$\mathsf{extent}_{\mathbf{L}_2} \circ \mathsf{adj}(\mathbf{h})
= \mathsf{dir}(\mathsf{inst}(\mathbf{h})) \circ \mathsf{extent}_{\mathbf{L}_1}$;
in turn,
this implies the extensional condition.
This extent constraint unpacks into
the extent morphism identity
$\mathsf{right}(\mathbf{h}) \cdot \mathsf{ext}_{\mathbf{L}_2}
= \mathsf{ext}_{\mathbf{L}_1} \cdot {\mathsf{inst}(\mathbf{h})}^{-1}$
and
the iota morphism identity
$\mathsf{iota}_{\mathbf{L}_2} \cdot \mathsf{left}(\mathbf{h})
= \exists\,\mathsf{inst}(\mathbf{h}) \cdot \mathsf{iota}_{\mathbf{L}_1}$.
These are equivalent,
and both imply the extensional condition.
\end{sloppypar}

\subsubsection{Intension.}

The intensional aspect of a conceptual morphism
$\mathbf{h} : \mathbf{L}_1 \rightleftharpoons \mathbf{L}_2$
consists of 
a conceptual connection adjunction 
$\mathsf{adj}(\mathbf{h}) 
= \langle \mathsf{left}(\mathbf{h}), \mathsf{right}(\mathbf{h}) \rangle
: \mathbf{L}_2 \rightleftharpoons \mathbf{L}_1$
(in the reverse direction) between conceptual hierarchies,
and a \emph{type} morphism
$\mathsf{typ}(\mathbf{h}) 
: \mathsf{typ}(\mathbf{L}_1) \rightarrow \mathsf{typ}(\mathbf{L}_2)$ 
(in the forward direction) between type ${\mathcal B}$-objects.

\begin{sloppypar}
In the relation version of the intensional aspect of conceptual morphisms,
the conceptual connection and type morphism
are required to preserve of-type relationship 
by satisfying the \emph{intensional condition}
$\tau_{\mathbf{L}_1}(\mathsf{left}(\mathbf{h}),1_{\mathsf{typ}(\mathbf{L}_1)})
= \tau_{\mathbf{L}_2}(1_{\mathbf{L}_2},\mathsf{type}(\mathbf{h}))$.
This condition states that
$\langle \mathsf{left}(\mathbf{h}), \mathsf{typ}(\mathbf{h}) \rangle 
: \langle \mathbf{L}_1, \mathsf{typ}(\mathbf{L}_1), \tau_{\mathbf{L}_1} \rangle \rightleftharpoons \langle \mathbf{L}_2, \mathsf{typ}(\mathbf{L}_2), \tau_{\mathbf{L}_2} \rangle$
is an infomorphism.
In the morphism version of the intensional aspect of conceptual morphisms,
the conceptual connection and type morphism
are required to preserve type embedding
$\tau_{\mathbf{L}_1} \cdot \mathsf{right}(\mathbf{h})
= \mathsf{typ}(\mathbf{h}) \cdot \tau_{\mathbf{L}_2}$;
in turn,
this implies the intensional condition.
In the adjunction version of the intensional aspect of conceptual morphisms,
the conceptual connection and type morphism 
are required to preserve intent:
$\mathsf{adj}(\mathbf{h}) \circ \mathsf{intent}_{\mathbf{L}_1}
= \mathsf{intent}_{\mathbf{L}_2} \circ \mathsf{inv}(\mathsf{typ}(\mathbf{h}))$;
in turn,
this implies the intensional condition.
This intent constraint unpacks into
the intent morphism identity
$\mathsf{left}(\mathbf{h}) \cdot \mathsf{int}_{\mathbf{L}_1}
= \mathsf{int}_{\mathbf{L}_2} \cdot {\mathsf{typ}(\mathbf{h})}^{-1}$
and the tau morphism identity
$\mathsf{tau}_{\mathbf{L}_1} \cdot \mathsf{right}(\mathbf{h})
= \exists\,\mathsf{typ}(\mathbf{h}) \cdot \mathsf{tau}_{\mathbf{L}_2}$.
These are equivalent,
and both imply the extensional condition.
\end{sloppypar}

\subsubsection{Abstraction.}

The extensional aspect (top-upper part of Fig.~\ref{conceptual-structure}) is abstracted as 
a category $\mathsf{Clg}(\mathcal{B})_\iota$ of extensional conceptual structures,
with a contravariant instance component functor
$\mathsf{inst} : \mathsf{Clg}(\mathcal{B})_\iota^{\mathrm{op}} \rightarrow \mathcal{B}$,
a contravariant adjoint component functor
$\mathsf{adj} : \mathsf{Clg}(\mathcal{B})_\iota^{\mathrm{op}} \rightarrow \mathsf{Adj}(\mathcal{B})$,
and a natural transformation
$\mathsf{extent} 
: \mathsf{inst} \circ \mathsf{dir} \Rightarrow \mathsf{adj}
: \mathsf{Clg}(\mathcal{B})_\iota^{\mathrm{op}} \rightarrow \mathsf{Adj}(\mathcal{B})$.
The category $\mathsf{Clg}(\mathcal{B})_\iota^{\mathrm{op}}$ is a subcategory of 
$\mathsf{Ref}(\mathcal{B})$ the category of reflections,
with inclusion functor
$\mathsf{incl} : \mathsf{Clg}(\mathcal{B})_\iota^{\mathrm{op}} \rightarrow \mathsf{Ref}(\mathcal{B})$.
The components are related as
$\mathsf{inst} \circ \mathsf{dir} = \mathsf{incl} \circ \partial_0^{\mathrm{h}}$,
$\mathsf{adj} = \mathsf{incl} \circ \partial_1^{\mathrm{h}}$
and $\mathsf{extent} = \mathsf{incl} \circ \mathsf{ref}$.

The intensional aspect (top-lower part of Fig.~\ref{conceptual-structure}) is abstracted as 
a category $\mathsf{Clg}(\mathcal{B})_\tau$ of intensional conceptual structures,
with a contravariant adjoint component functor
$\mathsf{adj} : \mathsf{Clg}(\mathcal{B})_\tau^{\mathrm{op}} \rightarrow \mathsf{Adj}(\mathcal{B})$,
a covariant type component functor
$\mathsf{typ} : \mathsf{Clg}(\mathcal{B})_\tau \rightarrow \mathcal{B}$,
and a natural transformation
$\mathsf{intent} 
: \mathsf{adj} \Rightarrow {\mathsf{typ}}^{\mathrm{op}} \circ \mathsf{inv}
: \mathsf{Clg}(\mathcal{B})_\tau^{\mathrm{op}} \rightarrow \mathsf{Adj}(\mathcal{B})$.
The category $\mathsf{Clg}_\tau^{\mathrm{op}}$ is a subcategory of 
$\mathsf{Ref}(\mathcal{B})^{\propto}$ the category of coreflections,
with inclusion functor
$\mathsf{incl} : \mathsf{Clg}(\mathcal{B})_\tau^{\mathrm{op}} \rightarrow \mathsf{Ref}(\mathcal{B})^{\propto}$.
The components are related as
$\mathsf{adj} = \mathsf{incl} \circ \partial_0^{\mathrm{h}}$,
${\mathsf{typ}}^{\mathrm{op}} \circ \mathsf{inv} = \mathsf{incl} \circ \partial_1^{\mathrm{h}}$
and $\mathsf{intent} = \mathsf{incl} \circ \mathsf{ref}^\propto$.

Conceptual structures combine their extensional and intensional aspects 
by matching in the center.
The category of conceptual structures $\mathsf{Clg}(\mathcal{B})$ is the pullback
(bottom part of Fig.~\ref{conceptual-structure}) 
of $\mathsf{Clg}(\mathcal{B})_\iota$ and $\mathsf{Clg}(\mathcal{B})_\tau$ along their adjunction projections.
Hence,
there are two projection functors
$\pi_0 : \mathsf{Clg}(\mathcal{B}) \rightarrow \mathsf{Clg}(\mathcal{B})_\iota$
and $\pi_1 : \mathsf{Clg}(\mathcal{B}) \rightarrow \mathsf{Clg}(\mathcal{B})_\tau$
satisfying the identity
$\pi_0 \circ \mathsf{adj} = \pi_1 \circ \mathsf{adj}$.
The category of concept lattices and concept morphisms is a subcategory
(middle part of Fig.~\ref{conceptual-structure}) 
of the factorization category of adjunctions
$\mathsf{incl} 
: \mathsf{Clg}(\mathcal{B}) \rightarrow \mathsf{Ref}(\mathcal{B}) \odot \mathsf{Ref}(\mathcal{B})^{\propto}$
satisfying the identities
$\mathsf{incl} \circ \pi_0 = \pi_0 \circ \mathsf{incl}$
and $\mathsf{incl} \circ \pi_1 = \pi_1 \circ \mathsf{incl}$,
and the category of classifications and infomorphisms is a subcategory of the arrow category of adjunctions
$\mathsf{incl} : \mathsf{Clsn}(\mathcal{B}) \rightarrow \mathsf{Adj}(\mathcal{B})_{=}^{\mathsf{2}}$.
The classification functor 
$\mathsf{clsn} 
: \mathsf{Clg} = \mathsf{Clg}(\mathcal{B})_\iota \odot \mathsf{Clg}(\mathcal{B})_\tau \rightarrow \mathsf{Clsn}(\mathcal{B})$
is the restriction of the composition functor
$\circ_{\mathsf{Adj}_{=}} 
: \mathsf{Ref}(\mathcal{B}) \odot \mathsf{Ref}(\mathcal{B})^\propto
\rightarrow \mathsf{Adj}(\mathcal{B})_{=}^{\mathsf{2}} $
to $\mathsf{Clg}(\mathcal{B})$ at the source and $\mathsf{Clsn}(\mathcal{B})$ at the target.
This can be verified by definition of $\mathsf{Clsn}(\mathcal{B})$.
Also,
$\mathsf{clg} \circ \mathsf{clsn} = \mathsf{id}_{\mathsf{Clsn}(\mathcal{B})}$
and $\mathsf{clsn} \circ \mathsf{clg} \cong \mathsf{id}_{\mathsf{Clg}(\mathcal{B})}$.

\begin{theorem} [Restricted Equivalence] \label{restricted-equivalence}
The category $\mathsf{Clsn}(\mathcal{B})$
is equivalent 
(middle part of Fig.~\ref{conceptual-structure}) 
to the category $\mathsf{Clg}(\mathcal{B})$
\[\mathsf{Clsn}(\mathcal{B}) \equiv \mathsf{Clg}(\mathcal{B})\footnote{This is a category-theoretic rendering of the fundamental theorem of Formal Concept Analysis. 
See \cite{ganter:wille:99} and \cite{kent:02}.}.\]
\end{theorem}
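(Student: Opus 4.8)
The plan is to obtain the Restricted Equivalence as a \emph{restriction} of the Special Equivalence (Thm.~\ref{special-equivalence}) along the two inclusion functors $\mathsf{incl}_{\mathcal{B}} : \mathsf{Clsn}(\mathcal{B}) \rightarrow \mathsf{Adj}(\mathcal{B})_{=}^{\mathsf{2}}$ and $\mathsf{incl}_{\mathcal{B}} : \mathsf{Clg}(\mathcal{B}) \rightarrow \mathsf{Ref}(\mathcal{B}) \odot \mathsf{Ref}(\mathcal{B})^\propto$. The Special Equivalence already exhibits the polar factorization functor $\div_{\mathsf{Adj}_{=}}$ and the composition functor $\circ_{\mathsf{Adj}_{=}}$ as mutually inverse equivalences between the arrow category $\mathsf{Adj}(\mathcal{B})_{=}^{\mathsf{2}}$ and the factorization category, carrying (from Thm.~\ref{general-equivalence}) the strict identity $\div_{\mathsf{Adj}_{=}} \circ\, \circ_{\mathsf{Adj}_{=}} = \mathsf{id}$ together with the natural isomorphism $\circ_{\mathsf{Adj}_{=}} \circ \div_{\mathsf{Adj}_{=}} \cong \mathsf{id}$. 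The two functors I would use are the concept lattice functor $\mathsf{clg}_{\mathcal{B}}$, defined as the restriction of $\div_{\mathsf{Adj}_{=}}$, and the classification functor $\mathsf{clsn}$, defined as the restriction of $\circ_{\mathsf{Adj}_{=}}$.

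First I would verify that these restrictions are well defined, namely that $\div_{\mathsf{Adj}_{=}}$ sends the image of $\mathsf{Clsn}(\mathcal{B})$ into that of $\mathsf{Clg}(\mathcal{B})$ and that $\circ_{\mathsf{Adj}_{=}}$ sends $\mathsf{Clg}(\mathcal{B})$ back into $\mathsf{Clsn}(\mathcal{B})$. On objects this is exactly the computation of Section~\ref{sec:classification:structures}: polar factorization of the derivation adjunction $\mathsf{deriv}_{\mathbf{A}} : {\wp}\mathsf{inst}(\mathbf{A}) \rightleftharpoons {\wp}\mathsf{typ}(\mathbf{A})^{\propto}$ produces the extent reflection, the concept lattice $\mathsf{axis}(\mathbf{A})$ as intermediate object, and the intent coreflection --- which is precisely an object of $\mathsf{Clg}(\mathcal{B})$, with free power preorders as its extent source and intent target. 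Dually, Section~\ref{sec:conceptual:structures} shows that composing the extent reflection with the intent coreflection of a conceptual structure returns the derivation adjunction of a classification. On morphisms the same restriction is supplied by the axis diagonalization (the concept adjunction) and by the paired infomorphisms of the arrow category.

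With well-definedness in place, the equivalence follows formally: the strict identity $\mathsf{clg}_{\mathcal{B}} \circ \mathsf{clsn} = \mathsf{id}_{\mathsf{Clsn}(\mathcal{B})}$ and the natural isomorphism $\mathsf{clsn} \circ \mathsf{clg}_{\mathcal{B}} \cong \mathsf{id}_{\mathsf{Clg}(\mathcal{B})}$ are obtained simply by restricting the two Special-Equivalence identities to the subcategories, the object-level instance $\mathsf{clsn}(\mathsf{clg}(\mathbf{A})) = \mathbf{A}$ already being recorded in Section~\ref{sec:conceptual:structures}. These two conditions are precisely what it means for $\mathsf{clg}_{\mathcal{B}}$ and $\mathsf{clsn}$ to form an equivalence, yielding $\mathsf{Clsn}(\mathcal{B}) \equiv \mathsf{Clg}(\mathcal{B})$.

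I expect the main obstacle to be the well-definedness of the restriction rather than the assembly, and specifically the verification that the natural isomorphism $\circ_{\mathsf{Adj}_{=}} \circ \div_{\mathsf{Adj}_{=}} \cong \mathsf{id}$ descends to $\mathsf{Clg}(\mathcal{B})$. Because an object of $\mathsf{Clg}(\mathcal{B})$ carries the extra freeness constraint (extent source and intent target must be the free power preorders), one must confirm that the isomorphism furnished by uniqueness of polar factorization respects that constraint --- equivalently, that $\mathsf{Clg}(\mathcal{B})$ sits inside $\mathsf{Ref}(\mathcal{B}) \odot \mathsf{Ref}(\mathcal{B})^\propto$ as an isomorphism-closed subcategory, so that the ambient isomorphism cannot leave it. Granting this, every remaining step is a direct specialization of Thm.~\ref{special-equivalence}.
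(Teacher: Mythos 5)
Your proposal is correct and takes essentially the same route as the paper: the paper establishes Thm.~\ref{restricted-equivalence} precisely by restricting the special equivalence (Thm.~\ref{special-equivalence}), defining the concept lattice functor $\mathsf{clg}_{\mathcal{B}}$ as the restriction of $\div_{\mathsf{Adj}_{=}}$ and the classification functor $\mathsf{clsn}$ as the restriction of $\circ_{\mathsf{Adj}_{=}}$, verifying well-definedness from the definitions of $\mathsf{Clsn}(\mathcal{B})$ and $\mathsf{Clg}(\mathcal{B})$, and recording $\mathsf{clg} \circ \mathsf{clsn} = \mathsf{id}_{\mathsf{Clsn}(\mathcal{B})}$ and $\mathsf{clsn} \circ \mathsf{clg} \cong \mathsf{id}_{\mathsf{Clg}(\mathcal{B})}$. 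Your closing caveat about the ambient natural isomorphism descending into the non-full subcategory $\mathsf{Clg}(\mathcal{B})$ is the one point the paper leaves implicit; it holds because the uniqueness-of-factorization isomorphism has identity outer components, which are of the required form $\mathsf{dir}(1_{\mathsf{inst}(\mathbf{L})})$ and $\mathsf{inv}(1_{\mathsf{typ}(\mathbf{L})})$, so its components are legitimate concept morphisms.
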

This equivalence,
mediated by 
the restricted factorization of the concept lattice functor
and the restricted composition of the classification functor,
is a restriction of the special equivalence for adjunctions
(Thm.~\ref{special-equivalence}).

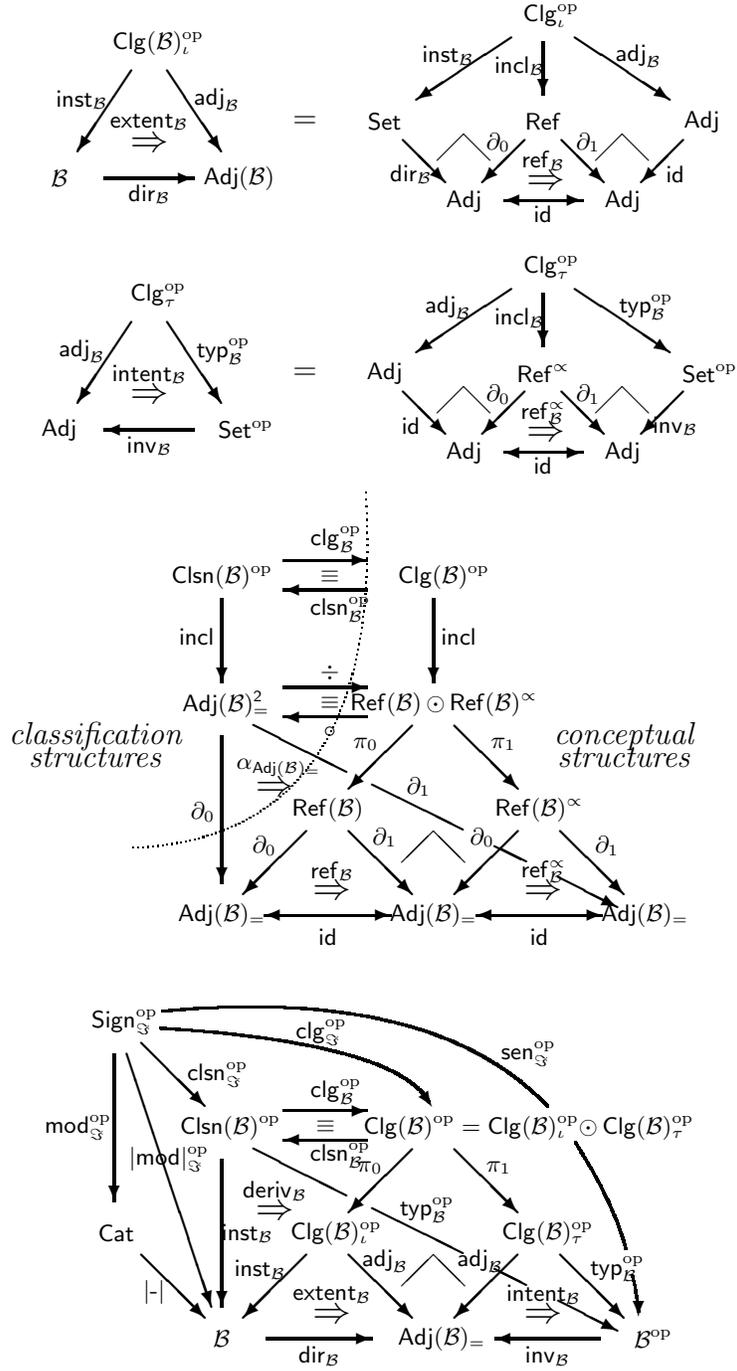
\begin{figure}
\begin{center}
\begin{tabular}{c}
\begin{tabular}{c@{\hspace{25pt}}c@{\hspace{30pt}}c}
\setlength{\unitlength}{0.85pt}
\begin{picture}(80,70)(0,-10)
\put(14,45){\makebox(60,30){$\mathsf{Clg}(\mathcal{B})_{\iota}^{\mathrm{op}}$}}
\put(-30,-15){\makebox(60,30){$\mathcal{B}$}}
\put(50,-15){\makebox(60,30){$\mathsf{Adj}(\mathcal{B})$}}
\put(-20,20){\makebox(60,30){\footnotesize{$\mathsf{inst}_{\mathcal{B}}$}}}
\put(40,20){\makebox(60,30){\footnotesize{$\mathsf{adj}_{\mathcal{B}}$}}}
\put(10,-22){\makebox(60,30){\footnotesize{$\mathsf{dir}_{\mathcal{B}}$}}}
\put(10,10){\makebox(60,30){\footnotesize{$\mathsf{extent}_{\mathcal{B}}$}}}
\put(10,0){\makebox(60,30){\Large{$\Rightarrow$}}}
\thicklines
\put(32,48){\vector(-2,-3){24}}
\put(48,48){\vector(2,-3){24}}
\put(20,0){\vector(1,0){40}}
\end{picture}
& 
\begin{picture}(0,0)(0,-30)
\put(0,0){\makebox(0,0){\large{$=$}}}
\end{picture}
&
\begin{picture}(120,70)(0,-30)
\put(33,25){\makebox(60,30){$\mathsf{Clg}_{\iota}^{\mathrm{op}}$}}
\put(-30,-15){\makebox(60,30){$\mathsf{Set}$}}
\put(30,-15){\makebox(60,30){$\mathsf{Ref}$}}
\put(90,-15){\makebox(60,30){$\mathsf{Adj}$}}
\put(0,-45){\makebox(60,30){$\mathsf{Adj}$}}
\put(60,-45){\makebox(60,30){$\mathsf{Adj}$}}
\put(-6,10){\makebox(60,30){\footnotesize{$\mathsf{inst}_{\mathcal{B}}$}}}
\put(21,6){\makebox(60,30){\footnotesize{$\mathsf{incl}_{\mathcal{B}}$}}}
\put(66,10){\makebox(60,30){\footnotesize{$\mathsf{adj}_{\mathcal{B}}$}}}
\put(-20,-35){\makebox(60,30){\footnotesize{$\mathsf{dir}_{\mathcal{B}}$}}}
\put(80,-35){\makebox(60,30){\footnotesize{$\mathsf{id}$}}}
\put(13,-23){\makebox(60,30){\footnotesize{$\partial_0$}}}
\put(47,-23){\makebox(60,30){\footnotesize{$\partial_1$}}}
\put(30,-30){\makebox(60,30){\footnotesize{$\mathsf{ref}_{\mathcal{B}}$}}}
\put(31,-38){\makebox(60,30){\Large{$\Rightarrow$}}}
\put(20,-15){\begin{picture}(20,10)(0,0)
\put(10,10){\line(-1,-1){10}}
\put(10,10){\line(1,-1){10}}
\end{picture}}
\put(80,-15){\begin{picture}(20,10)(0,0)
\put(10,10){\line(-1,-1){10}}
\put(10,10){\line(1,-1){10}}
\end{picture}}
\thicklines
\put(48,32){\vector(-3,-2){36}}
\put(72,32){\vector(3,-2){36}}
\put(60,30){\vector(0,-1){20}}
\put(7,-7){\vector(1,-1){16}}
\put(53,-7){\vector(-1,-1){16}}
\put(67,-7){\vector(1,-1){16}}
\put(113,-7){\vector(-1,-1){16}}
\put(45,-30){\vector(1,0){30}}
\put(45,-30){\vector(-1,0){0}}
\put(30,-50){\makebox(60,30){\footnotesize{$\mathsf{id}$}}}
\end{picture}
\\ \\ \\
\setlength{\unitlength}{0.85pt}
\begin{picture}(80,70)(0,-10)
\put(14,45){\makebox(60,30){$\mathsf{Clg}_{\tau}^{\mathrm{op}}$}}
\put(-30,-15){\makebox(60,30){$\mathsf{Adj}$}}
\put(53,-15){\makebox(60,30){$\mathsf{Set}^{\mathrm{op}}$}}
\put(-20,20){\makebox(60,30){\footnotesize{$\mathsf{adj}_{\mathcal{B}}$}}}
\put(43,20){\makebox(60,30){\footnotesize{$\mathsf{typ}^{\mathrm{op}}_{\mathcal{B}}$}}}
\put(10,-22){\makebox(60,30){\footnotesize{$\mathsf{inv}_{\mathcal{B}}$}}}
\put(10,10){\makebox(60,30){\footnotesize{$\mathsf{intent}_{\mathcal{B}}$}}}
\put(10,0){\makebox(60,30){\Large{$\Rightarrow$}}}
\thicklines
\put(32,48){\vector(-2,-3){24}}
\put(48,48){\vector(2,-3){24}}
\put(60,0){\vector(-1,0){40}}
\end{picture}
& 
\begin{picture}(0,0)(0,-30)
\put(0,0){\makebox(0,0){\large{$=$}}}
\end{picture}
&
\begin{picture}(120,70)(0,-30)
\put(33,25){\makebox(60,30){$\mathsf{Clg}_{\tau}^{\mathrm{op}}$}}
\put(-30,-15){\makebox(60,30){$\mathsf{Adj}$}}
\put(30,-15){\makebox(60,30){$\mathsf{Ref}^\propto$}}
\put(93,-15){\makebox(60,30){$\mathsf{Set}^{\mathrm{op}}$}}
\put(0,-45){\makebox(60,30){$\mathsf{Adj}$}}
\put(60,-45){\makebox(60,30){$\mathsf{Adj}$}}
\put(-6,10){\makebox(60,30){\footnotesize{$\mathsf{adj}_{\mathcal{B}}$}}}
\put(21,6){\makebox(60,30){\footnotesize{$\mathsf{incl}_{\mathcal{B}}$}}}
\put(69,10){\makebox(60,30){\footnotesize{$\mathsf{typ}^{\mathrm{op}}_{\mathcal{B}}$}}}
\put(-20,-35){\makebox(60,30){\footnotesize{$\mathsf{id}$}}}
\put(80,-35){\makebox(60,30){\footnotesize{$\mathsf{inv}_{\mathcal{B}}$}}}
\put(13,-23){\makebox(60,30){\footnotesize{$\partial_0$}}}
\put(47,-23){\makebox(60,30){\footnotesize{$\partial_1$}}}
\put(30,-30){\makebox(60,30){\footnotesize{$\mathsf{ref}^\propto_{\mathcal{B}}$}}}
\put(31,-38){\makebox(60,30){\Large{$\Rightarrow$}}}
\put(80,-15){\begin{picture}(20,10)(0,0)
\put(10,10){\line(-1,-1){10}}
\put(10,10){\line(1,-1){10}}
\end{picture}}
\put(20,-15){\begin{picture}(20,10)(0,0)
\put(10,10){\line(-1,-1){10}}
\put(10,10){\line(1,-1){10}}
\end{picture}}
\thicklines
\put(48,32){\vector(-3,-2){36}}
\put(72,32){\vector(3,-2){36}}
\put(60,30){\vector(0,-1){20}}
\put(7,-7){\vector(1,-1){16}}
\put(53,-7){\vector(-1,-1){16}}
\put(67,-7){\vector(1,-1){16}}
\put(113,-7){\vector(-1,-1){16}}
\put(45,-30){\vector(1,0){30}}
\put(45,-30){\vector(-1,0){0}}
\put(30,-50){\makebox(60,30){\footnotesize{$\mathsf{id}$}}}
\end{picture}
\end{tabular}
\\ \\ \\ \\ \\
\setlength{\unitlength}{0.8pt}
\begin{picture}(200,160)(-75,0)
\qbezier[100](18,200)(28,37)(-92,32)
\put(-160,60){\makebox(100,50){ {\itshape\large classification} }}
\put(-160,50){\makebox(100,50){ {\itshape\large structures} }}
\put(90,60){\makebox(100,50){ {\itshape\large conceptual} }}
\put(90,50){\makebox(100,50){ {\itshape\large structures} }}
\put(0,0){\begin{picture}(100,100)(0,0)
\put(5,75){\makebox(100,50){$\mathsf{Ref}(\mathcal{B}) \,{\odot}\, \mathsf{Ref}(\mathcal{B})^\propto$}}
\put(-50,25){\makebox(100,50){$\mathsf{Ref}(\mathcal{B})$}}
\put(50,25){\makebox(100,50){$\mathsf{Ref}(\mathcal{B})^{\propto}$}}
\put(-100,-25){\makebox(100,50){$\mathsf{Adj}(\mathcal{B})_{=}$}}
\put(0,-25){\makebox(100,50){$\mathsf{Adj}(\mathcal{B})_{=}$}}
\put(100,-25){\makebox(100,50){$\mathsf{Adj}(\mathcal{B})_{=}$}}
\put(-32,56){\makebox(100,50){\footnotesize{$\pi_0$}}}
\put(33,56){\makebox(100,50){\footnotesize{$\pi_1$}}}
\put(-80,8){\makebox(100,50){\footnotesize{$\partial_0$}}}
\put(-23,12){\makebox(100,50){\footnotesize{$\partial_1$}}}
\put(23,12){\makebox(100,50){\footnotesize{$\partial_0$}}}
\put(82,8){\makebox(100,50){\footnotesize{$\partial_1$}}}
\put(-50,-35){\makebox(100,50){\footnotesize{$\mathsf{id}$}}}
\put(50,-35){\makebox(100,50){\footnotesize{$\mathsf{id}$}}}
\put(-48,-4){\makebox(100,50){\footnotesize{$\mathsf{ref}_{\mathcal{B}}$}}}
\put(-48,-14){\makebox(100,50){\Large{$\Rightarrow$}}}
\put(52,-4){\makebox(100,50){\footnotesize{$\mathsf{ref}^{\propto}_{\mathcal{B}}$}}}
\put(52,-14){\makebox(100,50){\Large{$\Rightarrow$}}}
\put(50,25){\begin{picture}(30,15)(0,-15)
\put(0,0){\line(-1,-1){15}}
\put(0,0){\line(1,-1){15}}
\end{picture}}
\thicklines
\put(40,90){\vector(-1,-1){30}}
\put(60,90){\vector(1,-1){30}}
\put(-10,40){\vector(-1,-1){30}}
\put(10,40){\vector(1,-1){30}}
\put(-30,0){\vector(1,0){60}}
\put(-30,0){\vector(-1,0){0}}
\put(90,40){\vector(-1,-1){30}}
\put(110,40){\vector(1,-1){30}}
\put(70,0){\vector(1,0){60}}
\put(70,0){\vector(-1,0){0}}
\end{picture}}
\thicklines
\put(5,135){\makebox(100,50){$\mathsf{Clg}(\mathcal{B})^{\mathrm{op}}$}}
\put(-100,135){\makebox(100,50){$\mathsf{Clsn}(\mathcal{B})^{\mathrm{op}}$}}
\thicklines
\put(-112,107){\makebox(100,50){\footnotesize{$\mathsf{incl}$}}}
\put(12,107){\makebox(100,50){\footnotesize{$\mathsf{incl}$}}}
\put(50,150){\vector(0,-1){40}}
\put(-50,150){\vector(0,-1){40}}
\put(-21,168){\vector(1,0){40}}
\put(-46,153){\makebox(100,50){\footnotesize{$\mathsf{clg}_{\mathcal{B}}^{\mathrm{op}}$}}}
\put(-49,135.5){\makebox(100,50){\footnotesize{$\equiv$}}}
\put(-44,120){\makebox(100,50){\footnotesize{$\mathsf{clsn}_{\mathcal{B}}^{\mathrm{op}}$}}}
\put(19,154){\vector(-1,0){40}}
\put(-21,108){\vector(1,0){40}}
\put(-49,90){\makebox(100,50){\footnotesize{$\div$}}}
\put(-49,75.5){\makebox(100,50){\footnotesize{$\equiv$}}}
\put(-49,62){\makebox(100,50){\footnotesize{$\circ$}}}
\put(19,94){\vector(-1,0){40}}
\put(-35,90){\line(2,-1){44}}
\put(21,62){\line(2,-1){43}}
\put(79,33){\line(2,-1){15}}
\put(109,18){\vector(2,-1){28}}
\put(-50,85){\vector(0,-1){70}}
\put(-98,75){\makebox(100,50){$\mathsf{Adj}(\mathcal{B})_{=}^{\mathsf{2}}$}}
\put(-109,24){\makebox(100,50){\footnotesize{$\partial_0$}}}
\put(-7,36){\makebox(100,50){\footnotesize{$\partial_1$}}}
\put(-73,45){\makebox(100,50){\footnotesize{$\alpha_{\mathsf{Adj}(\mathcal{B})_{=}}$}}}
\put(-75,35){\makebox(100,50){\Large{$\Rightarrow$}}}
\end{picture}
\\ \\ \\ \\ \\ \\ \\ \\
\setlength{\unitlength}{0.8pt}
\begin{picture}(100,100)(-25,0)
\put(0,0){\begin{picture}(100,100)(0,0)
\put(45,75){\makebox(100,50){$\mathsf{Clg}(\mathcal{B})^{\mathrm{op}} = \mathsf{Clg}(\mathcal{B})_{\iota}^{\mathrm{op}} {\odot}\, \mathsf{Clg}(\mathcal{B})_{\tau}^{\mathrm{op}}$}}
\put(-46,25){\makebox(100,50){$\mathsf{Clg}(\mathcal{B})_{\iota}^{\mathrm{op}}$}}
\put(54,25){\makebox(100,50){$\mathsf{Clg}(\mathcal{B})_{\tau}^{\mathrm{op}}$}}
\put(-100,-25){\makebox(100,50){$\mathcal{B}$}}
\put(4,-25){\makebox(100,50){$\mathsf{Adj}(\mathcal{B})_{=}$}}
\put(104,-25){\makebox(100,50){$\mathcal{B}^{\mathrm{op}}$}}
\put(-30,56){\makebox(100,50){\footnotesize{$\pi_0$}}}
\put(31,56){\makebox(100,50){\footnotesize{$\pi_1$}}}
\put(-82,8){\makebox(100,50){\footnotesize{$\mathsf{inst}_{\mathcal{B}}$}}}
\put(-23,12){\makebox(100,50){\footnotesize{$\mathsf{adj}_{\mathcal{B}}$}}}
\put(22,12){\makebox(100,50){\footnotesize{$\mathsf{adj}_{\mathcal{B}}$}}}
\put(87,8){\makebox(100,50){\footnotesize{$\mathsf{typ}^{\mathrm{op}}_{\mathcal{B}}$}}}
\put(-54,-33){\makebox(100,50){\footnotesize{$\mathsf{dir}_{\mathcal{B}}$}}}
\put(54,-33){\makebox(100,50){\footnotesize{$\mathsf{inv}_{\mathcal{B}}$}}}
\put(-48,-4){\makebox(100,50){\footnotesize{$\mathsf{extent}_{\mathcal{B}}$}}}
\put(-48,-14){\makebox(100,50){\Large{$\Rightarrow$}}}
\put(52,-4){\makebox(100,50){\footnotesize{$\mathsf{intent}_{\mathcal{B}}$}}}
\put(52,-14){\makebox(100,50){\Large{$\Rightarrow$}}}
\put(50,25){\begin{picture}(30,15)(0,-15)
\put(0,0){\line(-1,-1){15}}
\put(0,0){\line(1,-1){15}}
\end{picture}}
\thicklines
\put(40,90){\vector(-1,-1){30}}
\put(60,90){\vector(1,-1){30}}
\put(-10,40){\vector(-1,-1){30}}
\put(10,40){\vector(1,-1){30}}
\put(90,40){\vector(-1,-1){30}}
\put(110,40){\vector(1,-1){30}}
\put(-29,0){\vector(1,0){50}}
\put(129,0){\vector(-1,0){50}}
\end{picture}}
\thicklines
\put(-88,140){\vector(1,-1){30}}
\put(-95,135){\vector(1,-3){40}}
\put(-88,40){\vector(1,-1){30}}
\put(-101,135){\vector(0,-1){70}}
\qbezier(-79,147)(32,140)(48,112)
\put(49.5,109.50){\vector(2,-3){0}}
\qbezier(-78,155)(50,167)(104,110)
\qbezier(117,93)(135,65)(141,42.5)
\qbezier(144,30)(146,23)(147,13)
\put(147,10.5){\vector(0,-1){0}}
\put(-21,108){\vector(1,0){40}}
\put(-46,92){\makebox(100,50){\footnotesize{$\mathsf{clg}_{\mathcal{B}}^{\mathrm{op}}$}}}
\put(-51,75.5){\makebox(100,50){\footnotesize{$\equiv$}}}
\put(-44,62){\makebox(100,50){\footnotesize{$\mathsf{clsn}_{\mathcal{B}}^{\mathrm{op}}$}}}
\put(19,94){\vector(-1,0){40}}
\put(-35,90){\line(2,-1){44}}
\put(21,62){\line(2,-1){43}}
\put(79,33){\line(2,-1){15}}
\put(109,18){\vector(2,-1){28}}
\put(-50,85){\vector(0,-1){70}}
\put(-96,75){\makebox(100,50){$\mathsf{Clsn}(\mathcal{B})^{\mathrm{op}}$}}
\put(-146,125){\makebox(100,50){$\mathsf{Sign}_{\Im}^{\mathrm{op}}$}}
\put(-150,25){\makebox(100,50){$\mathsf{Cat}$}}
\put(-168,75){\makebox(100,50){\footnotesize{$\mathsf{mod}_{\Im}^{\mathrm{op}}$}}}
\put(45,110){\makebox(100,50){\footnotesize{$\mathsf{sen}_{\Im}^{\mathrm{op}}$}}}
\put(-53,120){\makebox(100,50){\footnotesize{$\mathsf{clg}_{\Im}^{\mathrm{op}}$}}}
\put(-102,100){\makebox(100,50){\footnotesize{$\mathsf{clsn}_{\Im}^{\mathrm{op}}$}}}
\put(-125,60){\makebox(100,50){\footnotesize{${|\mathsf{mod}|}_{\Im}^{\mathrm{op}}$}}}
\put(-132,-3){\makebox(100,50){\footnotesize{$|\mbox{-}|$}}}
\put(-88,25){\makebox(100,50){\footnotesize{$\mathsf{inst}_{\mathcal{B}}$}}}
\put(-3,36){\makebox(100,50){\footnotesize{$\mathsf{typ}_{\mathcal{B}}^{\mathrm{op}}$}}}
\put(-75,45){\makebox(100,50){\footnotesize{$\mathsf{deriv}_{\mathcal{B}}$}}}
\put(-75,35){\makebox(100,50){\Large{$\Rightarrow$}}}
\end{picture}
\\ \\
\end{tabular}
\end{center}
\caption{Equivalence between Classification and Conceptual Structures}
\label{conceptual-structure}
\end{figure}




\section{Institutions}\label{sec:institutions}

The theory of institutions is abstract model theory.
It abstracts and generalizes Tarski's ``semantic definition of truth''.
The central construct in the theory of institutions is 
the relation of satisfaction between models and sentences.
Examples of institutions include:
first order logic with first order structures as models,
many-sorted equational logic with abstract algebras as models,
Horn clause logic, and variants of higher order and modal logic.
Institutions are usually defined in classification structures style (relation version).
However,
based upon the results in this paper,
institutions can also be defined in 
classification structures style (function or adjunction version)
and
conceptual structures style (relation, function or adjunction version).
Here we describe the relation and adjunction versions of both styles.
The adjunction versions are simplest\footnotemark[12].


{\footnotesize \begin{quotation}
\begin{sloppypar}
\noindent 
{\bfseries Class-Rel Style:}
An \emph{institution}\footnote{A more refined definition of institution would indicate whether collections of models and sentences are set-theoretically ``small'' or ``large''. In the presentation given here we ignore size considerations, assuming that truth factors for both sizes.} 
$\Im = \langle \mathsf{Sign}_{\Im}, \mathsf{mod}_{\Im}, \mathsf{sen}_{\Im}, \models_{\Im} \rangle$
\cite{goguen:burstall:92}
in (internal to) a topos $\mathcal{B}$ 
has
an abstract category $\mathsf{Sign}_{\Im}$ of signatures $\Sigma$,
a model fiber (reduct) functor $\mathsf{mod}_{\Im} : \mathsf{Sign}_{\Im} \rightarrow \mathsf{Cat}(\mathcal{B})^{\mathsf{op}}$
indexing abstract models $\mathsf{mod}_{\Im}(\Sigma)$ by signatures $\Sigma$, 
a sentence fiber functor $\mathsf{sen}_{\Im} : \mathsf{Sign}_{\Im} \rightarrow \mathcal{B}$
indexing abstract sentences $\mathsf{sen}_{\Im}(\Sigma)$ by signatures $\Sigma$,
and
a function $\models_{\Im} : {|\mathsf{Sign}_{\Im}|}_{\mathcal{B}} \rightarrow \mathsf{Rel}(\mathcal{B})$
indexing abstract satisfaction relations
$\models_{\Im, \Sigma} : |\mathsf{mod}|_{\Im}(\Sigma) \rightharpoondown \mathsf{sen}_{\Im}(\Sigma)$ by signatures $\Sigma$.
An institution must satisfy the \emph{satisfaction condition},
${\models}_{\Im, \Sigma_{1}}\left( |\mathsf{mod}|_{\Im}(\sigma), 1_{\mathsf{sen}_{\Im}(\Sigma_{1})} \right)
= {\models}_{\Im, \Sigma_{2}}\left( 1_{|\mathsf{mod}|_{\Im}(\Sigma_{2})}, \mathsf{sen}_{\Im}(\sigma) \right)\footnote{Satisfaction does not use morphisms in $\mathsf{mod}_{\Im}(\Sigma)$,
and hence is expressed in terms of the underlying model functor
$|\mathsf{mod}|_{\Im}
= \mathsf{mod}_{\Im} \circ {\scriptstyle |{-}|}_{\mathcal{B}}^{\mathsf{op}} : \mathsf{Sign}_{\Im} \rightarrow \mathcal{B}^{\mathsf{op}}$.}$
for any signature morphism $\sigma : \Sigma_{1} \rightarrow \Sigma_{2}$,
which expresses the invariance of truth under change of notation. 
\end{sloppypar}
\end{quotation}}

\noindent The components of an institution $\Im$,
can be packed together as a classification functor
$\mathsf{clsn}_{\Im} : \mathsf{Sign}_{\Im} \rightarrow \mathsf{Clsn}(\mathcal{B})$,
where for every signature $\Sigma$ 
the satisfaction relation forms the classification
$\mathsf{clsn}_{\Im}(\Sigma) = \langle |\mathsf{mod}|_{\Im}(\Sigma), \mathsf{sen}_{\Im}(\Sigma), \models_{\Im, \Sigma} \rangle$
and for every signature morphism $\sigma : \Sigma_{1} \rightarrow \Sigma_{2}$ 
the satisfaction condition states the fundamental condition for the infomorphism
$\mathsf{clsn}_{\Im}(\sigma) = \langle |\mathsf{mod}|_{\Im}(\sigma), \mathsf{sen}_{\Im}(\sigma) \rangle : \mathsf{clsn}_{\Im}(\Sigma_{1})\rightleftharpoons\mathsf{clsn}_{\Im}(\Sigma_{2})$.

{\footnotesize \begin{quotation}
\begin{sloppypar}
\noindent 
{\bfseries Class-Adj Style:}
An institution 
$\Im = \langle \mathsf{Sign}_{\Im}, \mathsf{mod}_{\Im}, \mathsf{sen}_{\Im}, \mathsf{deriv}_{\Im} \rangle$
in (internal to) a topos $\mathcal{B}$
has components 
$\mathsf{Sign}_{\Im}$, $\mathsf{mod}_{\Im}$ and $\mathsf{sen}_{\Im}$ 
as above,
plus a function 
$\mathsf{deriv}_{\Im} : |\mathsf{Sign}_{\Im}| \rightarrow \mathsf{Adj}(\mathcal{B})$
indexing abstract derivation Galois connections
$\mathsf{deriv}_{\Im, \Sigma} 
: {\wp}\,|\mathsf{mod}|_{\Im}(\Sigma) \rightleftharpoons {\wp}\,\mathsf{sen}_{\Im}(\Sigma)^{\mathrm{op}}$ by signatures $\Sigma$.
An institution must satisfy the \emph{derivation condition},
$\mathsf{dir}(|\mathsf{mod}|_{\Im}(\sigma)) \circ \mathsf{deriv}_{\Im, \Sigma_1}
= \mathsf{deriv}_{\Im, \Sigma_2} \circ \mathsf{inv}(\mathsf{sen}_{\Im}(\sigma))$,
for any signature morphism $\sigma : \Sigma_{1} \rightarrow \Sigma_{2}$.
\end{sloppypar}
\end{quotation}}

\noindent By Thm.~\ref{restricted-equivalence},
the category of classification structures  
is equivalent to 
the category of conceptual structures $\mathsf{Clsn}(\mathcal{B}) \cong \mathsf{Clg}(\mathcal{B})$.
In what follows,
we choose mediation by the open polar factorization functor
$\mathsf{clg}_{\mathcal{B}}^{\circ} 
: \mathsf{Clsn}(\mathcal{B}) \rightarrow \mathsf{Clg}(\mathcal{B})$.
An alternate expression for an institution
is a concept lattice functor 
$\mathsf{clg}_{\Im}^{\circ} 
= \mathsf{clg}_{\Im} \circ \mathsf{clg}^{\circ}
: \mathsf{Sign}_{\Im} \rightarrow \mathsf{Clg}(\mathcal{B})$.

{\footnotesize \begin{quotation}
\noindent 
{\bfseries Conc-Adj Style:} 
An institution 
$\Im = \langle \mathsf{Sign}_{\Im}, \mathsf{mod}_{\Im}, \mathsf{sen}_{\Im}, \mathsf{th}_{\Im}, \mathsf{extent}_{\Im}, \\ \mathsf{intent}_{\Im} \rangle$
in (internal to) a topos $\mathcal{B}$
has components 
$\mathsf{Sign}_{\Im}$, $\mathsf{mod}_{\Im}$ and $\mathsf{sen}_{\Im}$ 
as above,
plus 
a theory fiber functor $\mathsf{th}_{\Im} : \mathsf{Sign}_{\Im} \rightarrow \mathsf{Adj}(\mathcal{B})^{\mathsf{op}}$
indexing abstract concept lattices of theories 
$\mathsf{th}_{\Im}(\Sigma)$ by signatures $\Sigma$
and
indexing concept lattice of theories adjunctions
$\mathsf{th}_{\Im, \sigma}
: \mathsf{th}_{\Im}(\Sigma_{2}) \rightleftharpoons \mathsf{th}_{\Im}(\Sigma_{1})$
by signature morphisms $\sigma : \Sigma_1 \rightarrow \Sigma_2$,
a function 
$\mathsf{extent}_{\Im} : |\mathsf{Sign}_{\Im}| \rightarrow \mathsf{Ref}(\mathcal{B})$
indexing abstract extent reflections
$\mathsf{extent}_{\Im}(\Sigma) 
: {\wp}\,|\mathsf{mod}|_{\Im}(\Sigma) \rightleftharpoons \mathsf{th}_{\Im}(\Sigma)$ 
by signatures $\Sigma$,
and
a function 
$\mathsf{intent}_{\Im} : |\mathsf{Sign}_{\Im}| \rightarrow \mathsf{Ref}(\mathcal{B})^{\propto}$
indexing abstract intent coreflections
$\mathsf{intent}_{\Im}(\Sigma) 
: \mathsf{th}_{\Im}(\Sigma) \rightleftharpoons {\wp}\,\mathsf{sen}_{\Im}(\Sigma)^{\mathrm{op}}$ 
by signatures $\Sigma$.
An institution must satisfy 
the \emph{extent condition},
$\mathsf{extent}_{\Sigma_2} \circ \mathsf{th}_{\Im}(\sigma)
= \mathsf{dir}(|\mathsf{mod}|_{\Im}(\sigma)) \circ \mathsf{extent}_{\Sigma_1}$,
and
the \emph{intent condition},
$\mathsf{th}_{\Im}(\sigma) \circ \mathsf{intent}_{\Sigma_1}
= \mathsf{intent}_{\Sigma_2} \circ \mathsf{inv}(\mathsf{sen}_{\Im}(\sigma))$,
for any signature morphism $\sigma : \Sigma_1 \rightarrow \Sigma_2$.
\end{quotation}}

{\footnotesize \begin{quotation}
\begin{sloppypar}
\noindent 
{\bfseries Conc-Rel Style:}
An institution
$\Im = \langle \mathsf{Sign}_{\Im}, \mathsf{mod}_{\Im}, \mathsf{sen}_{\Im}, \mathsf{th}_{\Im}, \models_{\Im} \rangle$
in (internal to) a topos $\mathcal{B}$
has components 
$\mathsf{Sign}_{\Im}$, $\mathsf{mod}_{\Im}$, $\mathsf{sen}_{\Im}$ and $\mathsf{th}_{\Im}$ 
as above,
plus 
a function $\models_{\Im} : |\mathsf{Sign}_{\Im}| \rightarrow \mathsf{Rel}(\mathcal{B})$
indexing abstract satisfaction relations
${\models}_{\Im, \Sigma} 
: |\mathsf{mod}|_{\Im}(\Sigma) \rightharpoondown \mathsf{th}_{\Im}(\Sigma)$ 
by signatures $\Sigma$.
These satisfaction relations are a special case of the open instance-of relations.
The dual open of-type relations are trivial --- being reverse membership relations for theories; their constraining conditions are also trivial, being the definition of the inverse image function on theories.
An institution must satisfy 
the \emph{satisfaction condition},
${\models}_{\Im, \Sigma_1}
\left( |\mathsf{mod}|_{\Im}(\sigma), 1_{\mathsf{th}_{\Im}(\Sigma_{1})} \right)
= {\models}_{\Im, \Sigma_2} 
\left( 1_{|\mathsf{mod}|_{\Im}(\Sigma_{2})}, 
{(\mbox{-})}^{\bullet} \cdot \exists\mathsf{sen}_{\Im}(\sigma) \right)$
for any signature morphism $\sigma : \Sigma_{1} \rightarrow \Sigma_{2}$.
\end{sloppypar}
\end{quotation}}

\noindent 
The Grothendieck construction, 
appied to the existential quantification theory fiber functor
$\exists_{\Im}
= \mathsf{th}_{\Im} \circ \mathsf{right}
: \mathsf{Sign}_{\Im} \rightarrow \mathsf{Adj}(\mathcal{B})^{\mathsf{op}} \rightarrow \mathsf{Ord}(\mathcal{B})$,
amalgamates the lattice fibers $\mathsf{th}_{\Im}(\Sigma)$,
producing a flattened category of theories
$\mathsf{Theory}_{\Im}^{\exists} = {\mathsf{Gr}(\exists_{\Im})}^{\mathsf{op}}$
that represents the cLOT construction.
If $\mathsf{Sign}_{\Im}$ is cocomplete,
then $\mathsf{Theory}_{\Im}^{\exists}$ is cocomplete,
and the semantic integration of ontologies is represented by the colimit construction in $\mathsf{Theory}_{\Im}^{\exists}$.



\section{Summary}\label{sec:summary}

In this paper, 
we proved a general equivalence theorem for categories having a factorization system with choice.
We applied this to the polar factorization of adjunctions between posets,
producing a special equivalence.
We demonstrated 
how classification and conceptual structures form a restricted equivalence
mediated by polar factorization and composition
--- truth factors in terms of, and is the composition of, extension and intension.
Finally,
we applied this restricted equivalence 
to define cLOT and get alternate definitions for institutions.

\begin{figure}
\begin{center}
{\scriptsize \begin{tabular}{c@{\hspace{20pt}}l}

\begin{tabular}[t]{c}
{\sffamily object} \\ $A$
\end{tabular}
&
\begin{tabular}[t]{l@{\hspace{10pt}}l}
$\gamma_{A,B} : B \rightarrow {(A{\times}B)}^A$
& \emph{constant augmentation} morphism for object $B$ \\
$\varepsilon_{A,C} : A{\times}{C^A} \rightarrow C$
& \emph{evaluation} morphism for object $C$ \\
$\chi_{m} : A \rightarrow \Omega$ & \emph{character} morphism for subobject $m$ \\
$\iota_{m} : \Box_{m} \hookrightarrow A$ 
& \emph{representative monic} morphism for subobject $m$ \\
$\top_{A} : A \hookrightarrow \Omega$  & \emph{top} character morphism \\
$\perp_{A} : A \hookrightarrow \Omega$ & \emph{bottom} character morphism \\
$\Delta_A : A \hookrightarrow A{\times}A$     & \emph{delta} morphism \\
${\{\mbox{-}\}}_A : A \hookrightarrow {\wp}A$ & \emph{singleton} morphism \\
$1_A : A \rightharpoondown A$          & \emph{identity} relation \\
${\in}_A : A \rightharpoondown {\wp}A$ & \emph{membership} relation
\end{tabular}

\\ & \\

\begin{tabular}[t]{c}
{\sffamily morphism} \\ 
$f : A \rightarrow B$
\end{tabular}
&
\begin{tabular}[t]{l@{\hspace{10pt}}l}
${\exists} f : {\wp}A \rightarrow {\wp}B$, 
& \emph{existential direct image} monotonic morphism \\
$f^{-1} : {\wp}B \rightarrow {\wp}A$, 
& \emph{inverse image} monotonic morphism \\
${\forall} f : {\wp}A \rightarrow {\wp}B$, 
& \emph{universal direct image} monotonic function
\end{tabular}

\\ & \\

\begin{tabular}[t]{c}
{\sffamily relation} \\ 
$r : A \rightharpoondown B$
\end{tabular}
&
\begin{tabular}[t]{l@{\hspace{10pt}}l}
$r^{01} : A \rightarrow {\wp}B$, 
& \emph{01 fiber} function \\
$r^{10} : B \rightarrow {\wp}A$, 
& \emph{10 fiber} function \\
$r^{\Rightarrow} : {\wp}A^{\propto} \rightarrow {\wp}B$, 
& \emph{forward derivation} monotonic function \\
$r^{\Leftarrow} : {\wp}B^{\propto} \rightarrow {\wp}A$, 
& \emph{reverse derivation} monotonic function \\
$r{\setminus}s : B \rightharpoondown C$,
& \emph{left residuation} relation of $s : A \rightharpoondown C$ \\
$r{/}s : C \rightharpoondown A$,
& \emph{right residuation} relation of $s : C \rightharpoondown B$ 
\end{tabular}

\\ & \\

\begin{tabular}[t]{c}
{\sffamily preorder} \\ 
$\mathbf{A} = \langle A, {\leq}_{\mathbf{A}} \rangle$
\end{tabular}
&
\begin{tabular}[t]{l@{\hspace{10pt}}l}
${\leq}_{\mathbf{A}} : A \rightarrow A$
& \emph{order} relation \\
${\uparrow}_{\mathbf{A}} : \mathbf{A}^{\propto} \rightarrow {\wp}\mathbf{A}$
& \emph{up segment} contravariant monotonic function \\
${\downarrow}_{\mathbf{A}} : \mathbf{A} \rightarrow {\wp}\mathbf{A}$
& \emph{down segment} covariant monotonic function \\
${\Uparrow}_{\mathbf{A}} : {\wp}\mathbf{A}^{\propto} \rightarrow {\wp}\mathbf{A}$
& \emph{upper bound} monotonic function \\
${\Downarrow}_{\mathbf{A}} : {\wp}\mathbf{A}^{\propto} \rightarrow {\wp}\mathbf{A}$
& \emph{lower bound} monotonic function
\end{tabular}

\\ & \\

\begin{tabular}[t]{c}
{\sffamily monotonic morphism} \\ 
${\mathbf f} : {\mathbf A} \rightarrow {\mathbf B}$
\end{tabular}
&
\begin{tabular}[t]{l@{\hspace{10pt}}l}
${\mathbf f}^{\triangleright} : {\mathbf A} \rightharpoondown {\mathbf B}$, 
& \emph{forward} bimodule \\
${\mathbf f}^{\triangleleft} : {\mathbf B} \rightharpoondown {\mathbf A}$, 
& \emph{reverse} bimodule
\end{tabular}

\\ & \\

\begin{tabular}[t]{c}
{\sffamily bimodule} \\ 
$\mathbf{r} : \mathbf{A} \rightharpoondown \mathbf{B}$
\end{tabular}
&
\begin{tabular}[t]{l@{\hspace{10pt}}l}
$\mathbf{r}^{01} : \mathbf{A} \rightarrow {\wp}\mathbf{B}$, 
& \emph{01 fiber} monotonic morphism \\
$\mathbf{r}^{10} : \mathbf{B} \rightarrow {\wp}\mathbf{A}$, 
& \emph{10 fiber} monotonic morphism \\
\multicolumn{2}{l}{between complete lattices} \\
$\mathbf{r}^{\wedge} : \mathbf{A} \rightarrow \mathbf{B}$
& \emph{01-embedding} monotonic function \\
$\mathbf{r}^{\vee} : \mathbf{B} \rightarrow \mathbf{A}$
& \emph{10-embedding} monotonic function
\end{tabular}

\\ & \\

\begin{tabular}[t]{c}
{\sffamily complete lattice} \\
$\mathbf{L} = \langle L, {\leq}_{\mathbf{L}}, {\vee}_{\mathbf{L}}, {\wedge}_{\mathbf{L}} \rangle$
\end{tabular}
&
\begin{tabular}[t]{l@{\hspace{10pt}}l}
${\vee}_{\mathbf{L}} : {\wp}\mathbf{L} \rightarrow \mathbf{L}$,
& \emph{join} monotonic function \\
${\wedge}_{\mathbf{L}} : {\wp}\mathbf{L}^{\propto} \rightarrow \mathbf{L}$,
& \emph{meet} monotonic function
\end{tabular}

\\ & \\ \hline & \\

{\sffamily Identities:}
&
\begin{tabular}[t]{c@{\hspace{20pt}}c@{\hspace{20pt}}c}
$\begin{array}[t]{r@{\hspace{5pt}=\hspace{5pt}}l}
{\Uparrow}_{\mathbf{A}}   & {\leq}_{\mathbf{A}}^{\Rightarrow} \\
{\Downarrow}_{\mathbf{A}} & {\leq}_{\mathbf{A}}^{\Leftarrow}
\end{array}$ 
&
$\begin{array}[t]{r@{\hspace{5pt}=\hspace{5pt}}l}
\mathbf{r}^{\wedge} & \mathbf{r}^{01} \cdot {\wedge}_{\mathbf{B}} \\ 
\mathbf{r}^{\vee}   & \mathbf{r}^{10} \cdot {\vee}_{\mathbf{A}}
\end{array}$ 
&
$\begin{array}[t]{r@{\hspace{5pt}=\hspace{5pt}}l}
(\mathbf{f}^{\triangleright})^{\vee}  & \mathbf{f} \\
(\mathbf{f}^{\triangleleft})^{\wedge} & \mathbf{f}
\end{array}$ 

\\ && \\

$\begin{array}[t]{r@{\hspace{5pt}=\hspace{5pt}}l}
(\mathbf{f}^\triangleright)^{01} & \mathbf{f} \cdot {\uparrow}_{\mathbf{B}} \\
(\mathbf{f}^\triangleright)^{10} & {\downarrow}_{\mathbf{B}} \cdot \mathbf{f}^{-1} \\
(\mathbf{f}^\triangleleft)^{01}  & {\uparrow}_{\mathbf{A}} \cdot \mathbf{f}^{-1} \\
(\mathbf{f}^\triangleleft)^{10}  & \mathbf{f} \cdot {\downarrow}_{\mathbf{A}}
\end{array}$ 

& 

$\begin{array}[t]{r@{\hspace{5pt}=\hspace{5pt}}l}
{\uparrow}_{\mathbf{B}} \cdot {\vee}_{\mathbf{B}}     & 1_{\mathbf{L}} \\
{\downarrow}_{\mathbf{B}} \cdot {\wedge}_{\mathbf{B}} & 1_{\mathbf{L}}
\end{array}$ 

& 

$\begin{array}[t]{r@{\hspace{5pt}=\hspace{5pt}}l}
r^{\Rightarrow} & \exists r^{01} \cdot {\cap}_{B} \\
r^{\Leftarrow} & \exists r^{10} \cdot {\cap}_{A}
\end{array}$ 

\end{tabular}
\end{tabular}}
\end{center}
\caption{Notation}
\label{notation}
\end{figure}


\begin{thebibliography}{4}
\bibitem{barwise:seligman:97} 
Barwise, J., and Seligman, J.:
{\itshape Information Flow: The Logic of Distributed Systems}.+
Cambridge University Press, Cambridge (1997).
\bibitem{ganter:wille:99} 
Ganter, B., and Wille, R.:
{\itshape Formal Concept Analysis: Mathematical Foundations}.
Springer, New York (1999).
\bibitem{goguen:burstall:92} 
Goguen, J., and Burstall, R.:
``Institutions: Abstract Model Theory for Specification and Programming''. 
J. Assoc. Comp. Mach. vol. 39, pp. 95--146 (1992).
\bibitem{kent:02}
Kent, Robert E.:
``Distributed Conceptual Structures''.
Lecture Notes in Computer Science,
 vol. 2561, pp. 104--123 Springer, New York (2002).
\end{thebibliography}

\end{document}